\newcommand{\bs}[1]{\boldsymbol{#1}}
\newcommand{\rd}{\mathrm{d}}
\newcommand{\rdex}{\mathrm{d}_{\text{ext}}}
\newcommand{\ord}{\begin{cal}O\end{cal}}
\newcommand{\bx}{\bs{x}}
\newcommand{\bz}{\bs{z}}
\newcommand{\bI}{\bs{I}}
\newcommand{\bP}{\bs{P}}
\newcommand{\bA}{\bs{A}}
\newcommand{\bZ}{\bs{Z}}
\newcommand{\bSigma}{\bs{\Sigma}}
\newcommand{\bS}{\bs{S}}
\newcommand{\bF}{\bs{F}}
\newcommand{\bH}{\bs{H}}
\newcommand{\bC}{\bs{C}}
\newcommand{\bU}{\bs{U}}
\newcommand{\bR}{\bs{R}}
\newcommand{\bT}{\bs{T}}
\newcommand{\balpha}{\bs{\alpha}}
\newcommand{\bTheta}{\bs{\Theta}}
\newcommand{\bnu}{\bs{\nu}}
\newcommand{\cB}{\mathcal{B}}
\newcommand{\bIhat}{\hat{\phantom{\bI}}\!\!\!\!{\bI}}
\def\beq{\begin{equation}}
\def\eeq{\end{equation}}
\def\bsp#1\esp{\begin{split}#1\end{split}}
\newtheorem{theorem}{Theorem}
\theoremstyle{definition}
\newtheorem{example}{Example}
\newcommand{\eps}{\varepsilon}
\DeclareMathOperator{\diag}{diag}
\DeclareMathOperator{\K}{K}
\DeclareMathOperator{\E}{E}
\title{Twisted Riemann bilinear relations and Feynman integrals}
\abstract{Using the framework of 
twisted cohomology, we study twisted Riemann bilinear relations (TRBRs) satisfied by multi-loop Feynman integrals and their cuts in dimensional regularisation. After showing how to associate to a given family of Feynman integrals a period matrix whose entries are cuts, we investigate the TRBRs satisfied by this period matrix, its dual and the intersection matrices for twisted cycles and co-cycles. For maximal cuts, the non-relative framework is applicable, and the period matrix and its dual are related in a simple manner. We then find that the TRBRs give rise to quadratic relations that generalise quadratic relations that have previously appeared in the literature. However, we find that the TRBRs do not allow us to obtain quadratic relations for non-maximal cuts or completely uncut Feynman integrals. This can be traced back to the fact that the TRBRs are not quadratic in the period matrix, but separately linear in the period matrix and its dual, and the two are not simply related in the case of a relative cohomology theory, which is required for non-maximal cuts. 

}
\author[a]{Claude Duhr,}
\emailAdd{cduhr@uni-bonn.de}
\affiliation[a]{Bethe Center for Theoretical Physics, Universität Bonn, D-53115, Germany
}
\author[a
]{Franziska Porkert,}
\emailAdd{fporkert@uni-bonn.de}
\author[a]{Cathrin Semper,}
\emailAdd{csemper@uni-bonn.de}
\author[a
]{Sven F. Stawinski,}
\emailAdd{sstawins@uni-bonn.de}
\begin{document}

\begin{flushright}
BONN-TH-2024-10
\end{flushright}

\definecolor{lightorange}{rgb}{1,0.6875,0.5}
\definecolor{lightblue}{rgb}{0.68, 0.85, 0.9}

\maketitle


\newpage
\section{Introduction}
\label{sec.intro}

Scalar multi-loop Feynman integrals are a cornerstone for the computation of observables in Quantum Field Theory and are needed to make precise predictions for collider or gravitational wave experiments. Understanding their mathematical properties is therefore crucial, leading to significant recent efforts in this area. Feynman integrals often diverge, and need to be regularised. The most commonly used regularisation scheme is dimensional regularisation~\cite{THOOFT1972189}, where the integrals are analytically continued to an arbitrary space-time dimension $D=d-2\eps$, with $d$ an integer. The integrals are interpreted as meromorphic functions of the space-time dimension, or equivalently of $\eps$, and one is then typically interested in the Laurent-expansion of the integrals around $\eps=0$.

It is well known that dimensionally-regulated Feynman integrals are not  independent, but they satisfy relations. These relations are invaluable for applications, as they reduce the computational effort: Knowing a minimal set of Feynman integrals reduces the number of Feynman integrals that need to be evaluated and avoids hidden zeroes. Linear relations between Feynman integrals are well studied~\cite{Tkachov:1981wb,Chetyrkin:1981qh,Tarasov:1996br,Lee:2009dh}. In particular, we have efficient algorithms to solve linear relations among Feynman integrals and to identify a basis for them, see, e.g., refs.~\cite{Laporta:2000dsw,Anastasiou:2004vj,Smirnov:2008iw,vonManteuffel:2012np,Lee:2012cn,Lee:2013mka,Peraro:2019svx,vonManteuffel:2014ixa,Wu:2023upw,Georgoudis:2016wff,Klappert:2020nbg,Maierhofer:2017gsa}. 
Recently, it was realised that the appropriate mathematical setup to study dimensionally-regulated Feynman integrals is twisted cohomology~\cite{aomoto_theory_2011,Mizera:2017rqa,Mastrolia:2018uzb}. Twisted cohomology can be described, loosely speaking, as a mathematical framework to study integrals that depend on multi-valued integrands. Linear relations among Feynman integrals naturally arise by identifying integrands that only differ by a total (covariant) derivative. This has led to a completely new view on linear relations, including new methods to find bases for Feynman integrals~\cite{Frellesvig:2019kgj,Frellesvig:2020qot,Chestnov:2022xsy,Brunello:2023rpq,Fontana:2023amt} and to solve differential equations for Feynman integrals~\cite{Chen:2020uyk,Chen:2022lzr,Chen:2023kgw,Jiang:2023jmk}.

The main focus of this paper is to take steps towards understanding if there are also non-linear relations between Feynman integrals. 
While linear relations between (cut) Feynman integrals are very well understood, this is not the case for non-linear relations. 
%
The first appearance of quadratic relations among Feynman integrals was in the paper by Broadhurst and Roberts~\cite{Broadhurst:2018tey} (see also refs.~\cite{Zhou:2017jnm,Zhou:2017vhw,Zhou:2018tva,Zhou:2018wyp,Fresan:2020anx}), 
which presents quadratic relations involving equal-mass banana integrals evaluated at $p^2=m^2$.
%
Quadratic relations involving the maximal cuts of Feynman integrals in general kinematics were presented in refs.~\cite{Duhr:2022dxb} and~\cite{Lee:2018jsw}. The focus of ref.~\cite{Duhr:2022dxb} are maximal cuts in integer dimensions that evaluate to (quasi-)periods of Calabi-Yau (CY) varieties (cf.,~e.g.,~refs.~\cite{Bloch:2014qca,Bloch:2016izu,Bourjaily:2018ycu,Bourjaily:2018yfy,Bourjaily:2019hmc,Klemm:2019dbm,Bonisch:2020qmm,Bonisch:2021yfw,Forum:2022lpz,Lairez:2022zkj,delaCruz:2024xit}), and it is well known that (quasi-)periods of CY varieties satisfy a set of quadratic relations~\cite{MR717607}.
%
While the previous examples of quadratic relations only hold for integrals evaluated at $\eps=0$, quadratic relations valid in dimensional regularisation with $\eps\neq0$ were presented in ref.~\cite{Lee:2018jsw} (and further elaborated on in ref.~\cite{Lee:2019lsr}) for maximal cuts of Feynman integrals depending on a single dimensionless kinematic variable $x$. The results of ref.~\cite{Lee:2018jsw} follow from a set of conjectural properties satisfied by the system of differential equations which these maximal cuts obey. 

From the explicit examples of quadratic relations in the literature, one can observe that they all take the schematic matrix form
 \beq\label{eq:schematic_quad_rel}
 \bP(\bx,-\eps)^T\bR(\bx,\eps) \bP(\bx,\eps) = \widetilde{\bH}(\eps)\,,
 \eeq
 where $\bP(\bx,\eps)$, $\bR(\bx,\eps)$ and $\widetilde\bH(\eps)$ are matrices, and $\widetilde\bH(\eps)$ is independent of the kinematic variables.
 The goal of this paper is to initiate a general analysis of quadratic relations involving Feynman integrals and their cuts, and to understand in how far it is possible to generalise the aforementioned quadratic relations among maximal cuts to other Feynman integrals and/or to non-maximally cut integrals. In fact, the framework of twisted cohomology naturally contains a set of relations that are bilinear in transcendental integrals, namely the so-called \emph{twisted Riemann bilinear relations} (TRBRs)~\cite{Cho_Matsumoto_1995}, which are a generalisation of the well-known Riemann bilinear relations satisfied by the periods of Riemann surfaces. Due to the generality of these TRBRs, one may expect there to be quadratic relations involving Feynman integrals quite generally, though explicit examples have not been worked out. The existence of such quadratic relations may have far reaching consequences, as they may for example be used to express complicated Feynman integrals in terms of simpler ones. One of the main results of this paper is that the TRBRs indeed allow one to define quadratic relations among maximal cut integrals. However, in contrast to the folkloristic expectation, they do not lead to quadratic relations among non-maximally cut integrals or completely uncut integrals. 
 
 This paper is organised as follows: In section~\ref{sec:FIcuts} we define Feynman integrals in dimensional regularisation and recall their main properties. In section~\ref{sec:twisted} we give a brief review of (relative) twisted cohomology and the TRBRs.
 In section~\ref{sec:RelationsRiemann} we investigate TRBRs for cut Feynman integrals and we explain why they do not lead to quadratic relations among non-maximal cuts. Finally, in section~\ref{sec:MaxCutRelations} we discuss TRBRs for maximal cuts and we show how known quadratic relations can be derived from them, and in section~\ref{sec:examples} we present explicit examples of quadratic relations for one- and two-loop integrals. In section~\ref{sec:concl} we draw our conclusion. We also include several appendices with proofs and other details omitted throughout the main text.
 
%
%

\section{Feynman integrals and their cuts}
\label{sec:FIcuts}

\subsection{Feynman integrals and the Baikov representation}

We consider $L$-loop scalar Feynman integrals of the form
\begin{align}
\label{eq_sec2.1}
I_{\bs{\nu}}^D \left(\{p_i\cdot p_j\},\{m_i^2\} \right)= e^{L\gamma_E \varepsilon}\int \left(\prod_{j=1}^L \frac{d^D \ell_j}{i\pi^\frac{D}{2}}\right) \frac{1}{D_1^{\nu_1}\dots D_m^{\nu_m}}\, , 
\end{align}
where $\bs{\nu} = (\nu_1,\ldots,\nu_m)$ and $\gamma_E=-\Gamma'(1)$ is the Euler-Mascheroni constant. The momenta flowing through the $m$ propagators are linear combinations of the $E$ linearly independent external momenta $p_j$ and the $L$ loop momenta $\ell_j$, and we denote the squared mass of the $i^{\textrm{th}}$ propagator by $m_i^2$:  
\begin{align}
    D_i =\left(\sum_{j=1}^L a_{ij}\ell_j + \sum_{j=1}^{E-1} b_{ij}p_j\right)^2 - m_i^2 \, ,\qquad a_{ij},b_{ij}\in \{0,\pm1\}\,.
\end{align}
Throughout this paper, we work in dimensional regularisation in $D=d-2\varepsilon$ dimensions, with $d$ a positive integer. Unless specified otherwise, the exponents $\nu_i$ of the propagators will be integers (both positive and negative).

While eq.~\eqref{eq_sec2.1} is natural from a physics perspective and has a direct and clear connection to Feynman graphs, it is often not suited for evaluating the integrals or for studying their properties. 
In the remainder of this paper, we will work with the  \textit{Baikov representation} of a Feynman integral~\cite{BAIKOV1997347} (see also ref.~\cite{LEE2010474}). The Baikov representation is obtained by changing variables in eq.~\eqref{eq_sec2.1} from the components of the loop momenta to the propagators $z_i:=D_i$. This leads to the integral representation:
\begin{align}
\label{eq_Baikov}
I_{\bs{\nu}}^D \left(\{p_i\cdot p_j\},\{m_i^2\} \right)&= \frac{e^{L\gamma_E\varepsilon} \left[\det G(p_1,\ldots,p_E)\right]^\frac{-D+E+1}{2}}{\pi^{\frac{1}{2}(N-L)} \left[\det C\right] \prod_{j=1}^L \Gamma\left(\frac{D-E+1-j}{2}\right)}\,\hat{I}_{\bs{\nu}}^D \left(\{p_i\cdot p_j\},\{m_i^2\} \right)\,,
\end{align}
with
\begin{align}
\label{eq_Baikov2}
\hat{I}_{\bs{\nu}}^D \left(\{p_i\cdot p_j\},\{m_i^2\} \right)&=\int_{\mathcal{C}} \rd^{N} z \left[\mathcal{B}(\bs{z})\right]^{\frac{D-L-E-1}{2}} \prod_{s=1}^{N} z_s^{-\nu_s}\, . 
\end{align}
The Gram determinants are defined as
\begin{align}
    \det G(q_1,\dots, q_n)=\det \left(-q_i\cdot q_j\right)\,,
\end{align}
and the \textit{Baikov polynomial} is
\begin{align}
    \mathcal{B}(\bs{z}) =\det G(\ell_1,\ldots,\ell_L,p_1,\ldots,p_E) \,,
\end{align}
with $\bs{z}=(z_1,\ldots,z_{N})$.\footnote{We follow the convention that vectors and matrices are denoted by boldfaced letters.} The determinant $\det C$ is the Jacobian of the transformation from $q_i\cdot q_j$, the independent scalar products of internal and external momenta, to the propagators $z_i=D_i$, and it is independent of the $z_i$. The integration contour $\mathcal{C}$ is given by $\mathcal{C}=\mathcal{C}_1\cap\dots \cap \mathcal{C}_L$, with
\begin{align}
\mathcal{C}_j=\left\{\bs{z}\in\mathbb{R}^{N}: \frac{\det G(\ell_j,\ell_{j+1},\dots,\ell_L,p_1,\dots, p_E)}{\det G(\ell_{j+1},\dots, \ell_L,p_1,\dots, p_E)} \geq 0  \right\}\, . 
\end{align}
The number of variables $z_i$ is the number $N$ of linearly independent scalar products involving the loop momenta:
\begin{align}
    N= \frac{1}{2}L (L+1)+EL\, .  
\end{align}

Let us make some comments about the Baikov representation. 
First, if the number of propagators is smaller than $N$,  we add propagators and set their power $\nu_i$ to zero. 
 Second, the integrand of eq.~\eqref{eq_Baikov} is multi-valued, with the multi-valuedness originating from the non-integer exponent of the Baikov polynomial in dimensional regularization. Finally, for multi-loop cases ($L>1$), it is often beneficial to introduce a Baikov parametrization for each loop separately. This so-called \emph{loop-by-loop approach}~\cite{Frellesvig:2017aai} will typically lead to integrals of lower dimension, but with a product of Baikov polynomials in the integrand, each with a different (non-integer) exponent. More precisely, in the loop-by-loop approach, the Feynman integral is proportional to an integral of the form
 \beq\label{Baikov_LL}
 \int_{\mathcal{C}'} \rd^{N'}\! z\,
 \mathcal{B}_1(\bs{z})^{\mu_1}\dots \mathcal{B}_K(\bs{z})^{\mu_K}\prod_{s=1}^{N'} z_s^{-\nu_s}\,,
 \eeq
where $N'\le N$ and each exponent $\mu_i$ is linear in the dimensional regulator $\eps$. The explicit form of the Baikov polynomials $\mathcal{B}_i$, the exponents $\mu_i$, the contour $\mathcal{C}'$ and the proportionality factor depend on the loops that have been integrated out, and the order of these integrations. 


We are not only interested in Feynman integrals as defined in eq.~\eqref{eq_sec2.1}, but also in their \emph{cuts}. Different notions of cuts have appeared in the literature (see, e.g., ref.~\cite{Britto:2024mna} for a recent review), but they all have in common that they involve a notion of putting a subset of propagators on their mass shell. Our definition here is most transparent in the Baikov representation: A cut of a Feynman integral is obtained by taking the residue at the origin in a subset of variables $z_i$  of the differential form defining the Baikov representation in eq.~\eqref{eq_Baikov} (or its loop-by-loop counterparts), times $(2 \pi i)^{\lfloor \frac{n_c}{2}\rfloor}$, where $n_c$ denotes the number of cut propagators. Note that our definition implies that a cut vanishes whenever we take a residue in a variable $z_i$ with $\nu_i\le 0$, because in that case the integrand in eq.~\eqref{eq_Baikov} is regular at $z_i=0$. Finally, we note that a special role is played by so-called \emph{maximal cuts}, which correspond to cuts obtained by taking the residue in all variables $z_i$ with $\nu_i>0$.


\subsection{Linear relations and differential equations for (cut) Feynman integrals}

Before we discuss quadratic relations, we briefly review what is known about linear relations. There are two classes of linear relations among Feynman integrals, corresponding to whether the dimension $D$ is held fixed or not. 

Feynman integrals in the same dimension $D$ and with the same set of propagators $D_i$, but for different values of the exponents $\nu_i$, can be collected into an {integral family}, and there are linear relations among different members of the same family. They arise from \emph{integration-by-parts} (IBP) relations~\cite{Tkachov:1981wb,Chetyrkin:1981qh} (and symmetries of the underlying Feynman graph). IBP relations can be succinctly captured by the identity
\beq\label{eq:IBP_tot_diff}
\int \rd^D \ell_i\,\frac{\partial}{\partial \ell_i^\mu}\left(\frac{v^{\mu}}{D_1^{\nu_1}\dots D_m^{\nu_m}} \right)=0\,,
\eeq
where $v^\mu$ can be either an internal or external momentum. When the derivative acts on the propagators, it produces propagators with shifted exponents (plus numerator factors that can again be expressed in terms of inverse propagators). In this way the left-hand side of eq.~\eqref{eq:IBP_tot_diff} can be expressed as a linear combination of integrals in the same dimension $D$, but with shifted exponents $\nu_i$. One can solve the IBP relations and express all members of a family in terms of some basis integrals, often called \emph{master integrals} in the literature. There is of course a considerable freedom in how to choose the basis of master integrals. We assume from now on that such a basis has been fixed, and we collect the master integrals into the vector ${\bIhat}(\bx,\eps)$, where $\bx$ collectively denotes the (dimensionless) kinematic variables on which the integrals depend.\footnote{In the following, hatted quantities indicate Feynman integrals rescaled as in eq.~\eqref{eq_Baikov2}.}

The IBP relations connect integrals with different values of the exponents $\nu_i$, but in the same dimension.
The Baikov representation in eq.~\eqref{eq_Baikov2} makes it manifest that there is no fundamental difference between the dimension $D$ and the exponents $\nu_i$: they both appear as exponents of the different factors in the integrand. Correspondingly, there are also linear relations between integrals in $D$ and $D\pm2$ dimensions. Indeed, from eq.~\eqref{eq_Baikov2} we can immediately see that
\begin{align}
\hat{I}_{\bs{\nu}}^{D+2} \left(\{p_i\cdot p_j\},\{m_i^2\} \right)&=\int_{\mathcal{C}} \rd^{N} z \left[\mathcal{B}(\bs{z})\right]^{\frac{D-L-E}{2}}\,\mathcal{B}(\bs{z})\, \prod_{s=1}^{N} z_s^{-\nu_s}\, ,
\end{align}
and since $\cB(\bz)$ is a polynomial in $\bz$, we can write the right-hand side as a linear combination of integrals with shifted exponents but in $D$ dimensions. These so-called \emph{dimension-shift relations} were first derived in ref.~\cite{Tarasov:1996br} (see also ref.~\cite{Lee:2009dh}). The dimension-shift relations between master integrals in $D+2$ and $D$ dimensions lead to linear relations among the master integrals:
\begin{align}\label{eq:R_dim_shift}
    \bIhat(\bs{x},\varepsilon-1) = \bs{R}(\bs{x},\varepsilon) \bIhat(\bs{x},\varepsilon)\, . 
\end{align}
In the rest of this paper, we will refer to the matrix $\bs{R}(\bs{x},\varepsilon)$ as the \textit{dimension-shift matrix}.

 Let us finally make some comments about how these relations extend to cut integrals. Since we pass from uncut to cut integrals by taking a residue at $z_i=0$ for some values of $i$, it follows that all linear relations (both IBP and dimension-shift relations) for uncut integrals carry over to cuts, but we need to put to zero all terms with integrals that involve negative powers. This simple rule, which allows one to recover linear relations for cut integrals from their uncut analogues, is known as  \emph{reverse-unitarity} in the literature~\cite{Anastasiou:2002yz,Anastasiou:2003yy}. We do therefore not discuss linear relations among cut integrals explicitly.

As a consequence of the IBP relations, the master integrals satisfy a system of linear differential equations of the form~\cite{Kotikov:1990kg,Kotikov:1991hm,Kotikov:1991pm,Gehrmann:1999as,Henn:2013pwa},
\begin{align}\label{eq:DEQ_generic}
    \rd_{\text{ext}} \bIhat(\bs{x},\varepsilon) =\bs{\Omega} (\bs{x},\varepsilon) \bIhat(\bs{x},\varepsilon)\, , 
\end{align}
where $\rd_{\text{ext}}=\sum_i\rd x_i\,\partial_{x_i}$ is the exterior derivative with respect to the external (kinematic) parameters, and $\bs{\Omega} (\bs{x},\varepsilon)$ is a matrix whose entries are rational one-forms. Since the linear relations satisfied by cut integrals can be obtained from their uncut analogues, we can in the same way obtain a system of differential equations satisfied by cut integrals.

\section{(Relative) twisted (co-)homology}
\label{sec:twisted}

It was recently shown that the right mathematical setup to define and study dimensionally-regulated Feynman integrals and the linear relations they satisfy is twisted cohomology~\cite{Mastrolia:2018uzb}.
Twisted cohomology can be summarised, loosely speaking, as the study of integrals of the form
\begin{align}
\label{gen_twisted_integral}
    \int_{\Gamma} \Phi \varphi\, , 
\end{align}
where $\Phi$ is a multi-valued function and  $\varphi$ is a rational $n$-form on $X=\mathbb{C}^n-\Sigma$ with $\Sigma$ a union of hypersurfaces to be defined below. The multi-valued function $\Phi$ is called the \emph{twist}, and we take it to be of the form
\begin{align}
\label{twist32}
     \Phi= \prod_{i=0}^{r} L_i(\bs{z})^{\alpha_i} \,,
\end{align}
where  the $L_i(\bs{z})$ are polynomials in the variables $\bs{z}\in X$ and $\alpha_i\in\mathbb{C}$ in the most general case.
For the integral in eq.~\eqref{gen_twisted_integral} to be well-defined, a choice of branch must be made for $\Phi$. 

If $\Phi \varphi$ has branch points, but no poles or zeroes (i.e. $\alpha_i\notin\mathbb{Z}$ for all $i$ and $\sum_{i=0}^r\alpha_i\notin\mathbb{Z}$ ), the natural framework for integrals as in eq.~\eqref{gen_twisted_integral} is \textit{(non-relative) twisted (co-)homology}. We review relevant aspects of twisted (co)-homology in section~\ref{subsec:twisted}. For Feynman integrals, we typically encounter the situation where $\Phi\varphi$ also has poles. In that case, \textit{relative twisted (co-)homology} is the natural framework, and we will review it in section~\ref{subsec:reltwist}. 
 We focus here on the basics necessary for understanding quadratic relations in later sections. For a more in-depth review of twisted (co-)homology and its application we refer to the literature \cite{yoshida_hypergeometric_1997,aomoto_theory_2011,Mizera:2017rqa,Mizera:2019gea,Mizera:2019ose,matsumoto_relative_2019-1,Caron-Huot:2021iev,Caron-Huot:2021xqj,Giroux:2022wav,Crisanti:2024onv,Gasparotto:2023roh,Fontana:2023amt,Bhardwaj:2023vvm,De:2023xue,Britto:2021prf,Duhr:2023bku,Brunello:2023rpq}.

\subsection{Twisted (co-)homology}
\label{subsec:twisted}
We start by restricting the exponents in the twist in eq.~\eqref{twist32} by the condition
\begin{align}
\label{restrict}
\alpha_i\notin \mathbb{Z}\text{~~~~~and~~~~~} \sum_{i=0}^{r} \alpha_i \notin \mathbb{Z}\, .
\end{align}
In this case, $\Phi\varphi$ is a multi-valued, holomorphic form on the space  $X$, with $\Sigma$ the union of the zero loci of the $L_i(\bs{z})$. We call the different components of $\Sigma$ the \textit{regulated boundaries}. The twist defines a connection that accounts for the multi-valuedness of the integrals
\begin{align}
\label{connect}
\nabla = \rd +\omega \wedge \cdot \text{~~~~with~~~~} \omega = \frac{\rd \Phi}{\Phi} =\rd\!\log \Phi\, \, . 
\end{align}Working modulo exact forms -- i.e., identifying $\varphi$ with $\varphi +\nabla \tilde{\varphi}$ --means working modulo forms that vanish upon integration.  
Since we are interested in integrals, we
only consider elements of the \textit{twisted cohomology group} 
\beq
\label{twistedcoh}
    H_{\text{dR}}^k (X, \nabla)=C^k(X,\nabla)/B^k(X,\nabla)\,, 
    \eeq
    with
    \beq\bsp
    \label{CBeq}
    C^k(X,\nabla)&\,=\{ k-\text{forms }\varphi \text{ on } X\, :\, \nabla \varphi = 0 \}\,,\\
    B^k(X,\nabla)&\,= \{ k-\text{forms }\nabla\tilde{\varphi}\, :\, \tilde{\varphi} \text{ a $k-1$-form}\} \, . 
\esp\eeq
All twisted cohomology groups are finite-dimensional, and only the middle cohomology group $H_{\text{dR}}^{n}(X,\nabla)$ is nonzero~\cite{aomoto_theory_2011}. We therefore only consider $k=n$ from here on. The elements of $H_{\text{dR}}^{n}(X,\nabla)$ are called \textit{twisted co-cycles}. The relevant integration contours are \textit{twisted cycles} (also called \emph{loaded cycles}) from 
\begin{align}
\label{homgroup}
    C_n(X,\check{\mathcal{L}}) =  \{n-\text{cycles }\gamma \otimes \Phi|_{\gamma}\,  :\, \partial (\gamma\otimes \Phi|_{\gamma} )=0\} \, . 
\end{align}
Here $\gamma \otimes \Phi|_\gamma$ denotes an $n$-cycle $\gamma$ on $X$ together with a branch of $\Phi$, specified by the locally constant sheaf $\check{\mathcal{L}}$ generated by $\Phi$. In practice, this means that every twisted cycle comes with a local choice of branch for $\Phi$. 
As usually no confusion arises, we will often identify $\gamma\otimes \Phi$ with $\gamma$. The operation $\partial (\gamma\otimes\Phi|_{\gamma})$ restricts the contour with its branch of $\Phi$ to its boundary. We denote the space of all boundaries by $B_n(X,\check{\mathcal{L}})= \{n-\text{cycles }\partial \left(\gamma\otimes \Phi|_{\gamma}\right)\}$. We only want to consider closed contours on $X$ modulo boundaries. We therefore consider elements of the \textit{twisted homology group}:
\begin{align}
\label{twistedho}
H_n(X, \check{\mathcal{L}}) =C_n(X,\check{\mathcal{L}})/ B_n(X,\check{\mathcal{L}}) \, . 
\end{align}
For the cases considered here, the cycles are (regularised) chambers between regulated boundaries. The details of the regularisation are explained in refs.~\cite{yoshida_hypergeometric_1997,kita_intersection_1994-2,aomoto_theory_2011, Bhardwaj:2023vvm, Duhr:2023bku}.

We can pair twisted cycles and co-cycles to obtain integrals such as those in eq.~\eqref{gen_twisted_integral}. This defines the \textit{period pairing}: 
\begin{align}
 \langle \gamma |\varphi]= \int_{\gamma} \Phi \varphi\, . 
\end{align}
Pairing basis elements $\gamma_i$ of $H_n(X,\check{\mathcal{L}})$ and $\varphi_j$ of $H_{\text{dR}}^n(X,\nabla)$, we obtain the period matrix $\bs{P}$ with entries
\begin{align}
 P_{ij}= \langle\varphi_i| \gamma_j]\  =\int_{\gamma_j}\Phi\varphi_i\, . 
\end{align}
For the integrations in the period matrix the details of the regularisation can usually be ignored and the twisted cycles can be treated as non-regularised chambers between boundaries (usual intervals in one dimensions), see ref.~\cite{yoshida_hypergeometric_1997}. 

By replacing the twist with its inverse in eq.~\eqref{connect}, one defines the dual connection $\check{\nabla}=\rd -\omega\wedge \cdot $, and one obtains, similar to eqs.~\eqref{twistedcoh} and \eqref{twistedho}, the corresponding dual twisted (co-)homology groups:
\beq\bsp
H_{\text{dR},c}^n \left(X, \check{\nabla}\right) &= \{ \text{compactly supported } n-\text{forms } \check{\varphi} \, :\, \check{\nabla} \check{\varphi} =0\}/ \{\text{exact forms}\}\,,\\
H_n^{\text{lf}}(X,\mathcal{L}) &=  \{\check{\gamma} \otimes \Phi^{-1}|_{\check\gamma} \text{ locally finite} \, :\,  \partial \check{\gamma}  =0\} / \{ \text{boundaries }\partial\tilde{\gamma }\} \, .
\esp\eeq
The restriction to compactly supported forms for the dual cohomology group is necessary for the intersection pairing between twisted co-cycles to be well-defined,
\begin{align}
    \langle \varphi_A|\check\varphi_{B}\rangle&=\int_X \varphi_A\wedge \check\varphi_{B}\,.
\end{align}
For a choice of basis of the twisted cohomology group and its dual, we define the intersection matrix $\bs{C}$ with entries
\begin{align}
    C_{ij} = \frac{1}{(2\pi i)^n} \langle \varphi_i |\check{\varphi}_j\rangle \, . 
\end{align}
The intersection pairing $
[\check{\gamma}_i|\gamma_j]$ for the twisted homology group and its dual counts the (topological) intersections of the two contours, taking into account their orientations as well as the branch choices for $\Phi$ and $\Phi^{-1}$ loaded onto them. More details on their computation can be found in refs.~\cite{yoshida_hypergeometric_1997,Duhr:2023bku,kita_intersection_1994-2}. For a choice of basis of the twisted homology group and its dual, the intersection matrix $\bs{H}$ is the matrix with entries 
\begin{align}
    H_{ij}=[\check{\gamma}_j|\gamma_i]\, . 
\end{align}
Since the (non-dual) twisted cycles are already regularised, the dual cycles are only required to be locally finite for the intersection pairings of $\bs{H}$ to be well-defined, i.e., they can be chosen as line segments (or chambers in higher dimension) without the regularisation required for the twisted cycles. Pairing the basis elements of both dual  groups, one obtains the dual period matrix $\check{\bs{P}}$ with entries
\begin{align}
\label{}
    \check{P}_{ij} = [\check{\gamma}_j|\check{\varphi}_i\rangle = \int_{\check{\gamma}_j} \Phi^{-1} \check{\varphi}_i\, .  
\end{align}
Let us make two comments. First, all four pairings (the intersection pairings in homology and cohomology and the (dual) period pairing) are non-degenerate, and so the matrices $\bP$, $\bs{\check{P}}$, $\bC$ and $\bH$ all have full rank. Second, in applications the differential forms, and thus the periods, depend on some parameters $\bx$ (the external kinematic data in the case of Feynman integrals). The period matrix and its dual are the fundamental solution matrices of the differential equations,
\beq\bsp\label{eq:Gauss-Manin}
\rdex \bP(\bx,{\balpha}) &\,= \bs\Omega(\bx,\balpha)\bP(\bx,\balpha)\,,\\
\rdex \bs{\check{P}}(\bx,{\balpha}) &\,= \bs{\check{\Omega}}(\bx,\balpha)\bs{\check{P}}(\bx,\balpha)\,.
\esp\eeq
The intersection matrix $\bC$ is the unique rational solution of the equation~\cite{Chestnov:2022alh,Chestnov:2022okt,Munch:2022ouq}
\beq\label{eq:dC}
\rdex\bC(\bx,\balpha) = \bs\Omega(\bx,\balpha)\bC(\bx,\balpha) + \bC(\bx,\balpha)\bs{\check{\Omega}}(\bx,\balpha)^T\,.
\eeq
Note that eqs.~\eqref{eq:Gauss-Manin} and~\eqref{eq:dC} can only hold if the differential $\rdex$ with respect to the external parameters $\bx$ does not receive contributions from the integration cycles, but all the information on the differentials is encoded into the twisted co-cycles. In the following, we always assume that this condition is satisfied.

We now stress a property that holds whenever the exponents $\alpha_i$ satisfy eq.~\eqref{restrict}. In this case, the dual of the twisted cohomology group is  given by the twisted cohomology group with compact support. For a given $n$-form $\varphi$ one can compute a compactly supported version in the same cohomology class with the methods described  in refs.~\cite{Mizera:2017rqa,Mastrolia:2018uzb}. We denote this compactly supported version by $\left[\varphi\right]_c$. We can then choose as a basis of the dual twisted cohomology group the compactly supported version of the basis of the twisted cohomlogy group:
\beq\label{eq:check_to_c}
\check{\varphi}_i = [\varphi_i]_c\,.
\eeq
Similarly, a basis cycle can be chosen to be the regularised version of the dual basis cycle~\cite{Cho_Matsumoto_1995}:
\beq\label{eq:check_to_h}
\gamma_i=[\check{\gamma}_i ]_\text{reg}\,.
\eeq
Upon the integral pairing we obtain 
\beq
\int_{\check{\gamma}_j }\Phi^{-1}\check{\varphi}_i=\int_{\check{\gamma}_j }\Phi^{-1}[\varphi_i]_c=\int_{[\check{\gamma}_j]_\text{reg} }\Phi^{-1}\varphi_i=\int_{\gamma_j}\Phi^{-1}\varphi_i\,.
\eeq
Thus, the difference between the period matrix and its dual only lies in the choice of the twist $\Phi^{-1}$ instead of $\Phi$. More explicitly, we then obtain:
\begin{align} \label{eq:checkP_to_minus}
\check{P}_{ij}=\int_{\check{\gamma}_j} \Phi^{-1} \check{\varphi}_i = \int_{{\gamma}_j} {\varphi}_i\prod_{k=0}^r L_k(\bs{z})^{-\alpha_k}= P_{ij}|_{\alpha_k\rightarrow -\alpha_k}\, . 
\end{align}
In other words, in the case where the exponents in the twist satisfy condition~\eqref{restrict}, we can choose a basis of dual (co-)cycles such that the dual twisted period matrix $\bs{\check{P}}$ agrees with the twisted period matrix $\bP$, up to changing the signs of the exponents $\alpha_i$.

We have seen that, for a given choice of basis, the four pairings between the different twisted (co-)homology groups and their duals give rise to the four matrices $\bP$, $\bs{\check{P}}$, $\bC$ and $\bH$.
There are also \textit{completeness relations} 
\beq\bsp
\label{eq:completeness}
(2 \pi i)^{-n} \ket{\check{\varphi}_i} (\bC^{-1})_{ij} \bra{\varphi_j} &\, = \mathds{1}\,,\\
|\gamma_i](\bH^{-1})_{ji}[\check{\gamma}_j|&\, = \mathds{1}\,.
\esp\eeq
Inserting the completeness relations into the expressions for the (co-)homology intersection pairing  yields the \textit{twisted Riemann bilinear relations} (TRBRs) that relate the four matrices $\bP$, $\bs{\check{P}}$, $\bC$ and $\bH$\cite{Cho_Matsumoto_1995}
\beq\bsp
\label{generalriemann}
\frac{1}{(2\pi i)^n} \bs{P} \left(\bs{H}^{-1}\right)^T \bs{\check{P}}^T &\, =    \bs{C}\,,  \\ 
\frac{1}{(2\pi i)^n} \bs{P}^T \left(\bs{C}^{-1}\right)^T \bs{\check{P}}&\, =     \bs{H} \, . 
\esp\eeq
Equations  (\ref{eq:completeness}) and (\ref{generalriemann})  also hold for the relative case discussed in the next section. 

We have already mentioned that eqs.~\eqref{eq:Gauss-Manin} and~\eqref{eq:dC} together imply that the differential $\rdex$ with respect to the external parameters $\bx$ should not receive contributions from the twisted cycles. Indeed, the TRBRs together with eqs.~\eqref{eq:Gauss-Manin} and~\eqref{eq:dC} imply
\beq\bsp
\rdex\bs{H} &\,= \frac{1}{(2\pi i)^n} \left[\big(\rdex\bs{P}^T\big) \left(\bs{C}^{-1}\right)^T \bs{\check{P}} + \bs{P}^T \rdex\!\left(\bs{C}^{-1}\right)^T \bs{\check{P}} +  \bs{P}^T \left(\bs{C}^{-1}\right)^T \rdex\big(\bs{\check{P}}\big)\right]\\
&\,=0\,.
\esp\eeq

\begin{example}[Gauss' hypergeometric ${{}_2F_1}$ function]
\label{example11}
As a simple, but illustrative example, we consider Gauss' hypergeometric function, defined by 
\begin{align}
\label{eq_f21}
    {}_2F_1(a,b;c;y)=\frac{\Gamma(c)}{\Gamma(a)\Gamma(c-a)} \int_0^1 z^{a-1} (1-z)^{c-a-1} (1-yz)^{-b}\rd z\, .
\end{align} 
These integrals can be interpreted as periods of a twisted cohomology theory with the twist defined by the product in the integrand. For the example discussed here, we choose 
\begin{align}
\label{eq_twistF21}
    \Phi = z^{ \alpha_0 } (1-z)^{\alpha_1} \left(1-y{z}\right)^{\alpha_{x}} \,,
\end{align}
with $0<\alpha_i<1$. In this case, $X=\mathbb{C}-\{0,1,x\}$, with $x:=y^{-1}$. We choose as a basis for $H_{\text{dR}}^1(X,\nabla)$ the (classes of the) 1-forms:
\beq\bsp
\label{herenotchange}
\varphi_1&\,=\frac{(1-x)\rd z}{(1-z)\left(z-x\right)}\,,\\
\varphi_2&\,=- \frac{\rd z}{z} \,.
\esp\eeq
For the dual basis, we choose
\beq\bsp
\label{eq:2F1_non-rel_dual}
\check{\varphi}_1&\,=\left[\frac{(1-x)\rd z}{(1-z)\left(z-x\right)}\right]_c \, ,\\
\check{\varphi}_2&\,=- \left[ \frac{\rd z}{z}\right]_c \,. 
\esp\eeq
For the basis of contours in $H_1(X,\check{\mathcal{L}})$ we choose the regularised versions of the dual basis elements $\check{\gamma}_1=[x,\infty] \text{ and } \check{\gamma}_2=[0,1]$.
Then, for $x>1$, the period matrix $\bs{P}$ has the entries 
\begin{align}
\nonumber
 P_{11}(x,\balpha) &= \left(1-x^{-1}\right)\,e^{i\pi \alpha_{1x}}\, x^{\alpha_{01} }\, \tfrac{\Gamma(\alpha_{x} )\Gamma(1-\alpha_{01x} )}{\Gamma(1-\alpha_{01} 
)}\,{}_2F_1\left(1-\alpha_1 , -\alpha_{01x} +1;1-\alpha_{01} ;y\right) ,\\
\label{firstperiod}
P_{12}(x,\balpha)&= \left(1-x^{-1}\right)\tfrac{\Gamma(1+\alpha_0 )\Gamma(\alpha_1 )}{\Gamma(1+\alpha_{01} )} {}_2F_1(\alpha_0  +1, 1-\alpha_{x}  ; 1+\alpha_{01} ;y)\,,\\\nonumber
P_{21}(x,\balpha) &=- e^{i\pi \alpha_{1x} } x^{\alpha_{01} } \tfrac{\Gamma(1+\alpha_{x} ) \Gamma(-\alpha_{01x}   )}{\Gamma(1-\alpha_{01} )} \,{}_2F_1(-\alpha_1 , -\alpha_{01x} ; 1-\alpha_{01}  ; y)\,,\\
\nonumber
P_{22}(x,\balpha) &= - \tfrac{\Gamma(\alpha_0  ) \Gamma(1+\alpha_1 )}{\Gamma(1+\alpha_{01} )}\, {}_2F_1(\alpha_0 , -\alpha_{x}  ;1+\alpha_{01} ; y)\,,
\end{align}
where we use the notation $\alpha_{ij\dots} = \alpha_i+\alpha_j+\dots$. Our choice of dual basis allows us to write the dual period matrix as
\begin{align}\label{eq:non_rel_period_mat}
    \check{\bs{P}}(x,\balpha)=\bs{P}(x,-\balpha)\, . 
\end{align} 
The intersection matrices are 
\beq\bsp
\label{cmat}
\bs{C}(x,\balpha)&\,=\left(\begin{smallmatrix} -\tfrac{ 1}{\alpha_0 }+\frac{1}{\alpha_{01x}} & 0 \\ 0&-\tfrac{\alpha_{1x}}{\alpha_1\alpha_x  }
    \end{smallmatrix}\right)\,, \\
    \bs{H}(\balpha)&\,=\tfrac{1}{2\pi i}\left(\begin{smallmatrix}
        \pi \csc(\pi \alpha_x )\sin(\pi \alpha_{01}  ) \csc(\pi \alpha_{01x} )  &0 \\ 0 &\pi\cot(\pi \alpha_0 )+\pi\cot(\pi \alpha_1 )
    \end{smallmatrix}\right)\, . 
\esp\eeq
The TRBRs provide quadratic relations among the entries of the period matrix, such as 
\beq\bsp
\frac{P_{11}(x,\bs{\alpha})\,P_{11}(x,-\bs{\alpha})}{\cot(\pi\alpha_0)+\cot(\pi\alpha_1)}+\frac{\sin(\pi\alpha_x)\sin(\pi\alpha_{01x})}{\sin(\pi\alpha_{01})}P_{12}(x,\bs{\alpha})\,P_{12}(x,-\bs{\alpha}) &\,= \pi \,\left(\frac{1}{\alpha_{01x}}-\frac{1}{\alpha_0}\right) \,,\\
\frac{P_{11}(x,\bs{\alpha})\,P_{21}(x,-\bs{\alpha})}{\cot(\pi\alpha_0)+\cot(\pi\alpha_1)} + \frac{\sin(\pi\alpha_x)\sin(\pi\alpha_{01x})}{\sin(\pi\alpha_{01})} P_{12}(x,\bs{\alpha})\,P_{22}(x,-\bs{\alpha}) &\,=0\,.
\esp\eeq
\end{example}

\subsection{Relative twisted (co-)homology}
\label{subsec:reltwist}
If condition~\eqref{restrict} does not hold, we need to work with \textit{relative} twisted cohomology groups~\cite{Caron-Huot:2021iev,Caron-Huot:2021xqj}, where we can  treat integrands $\Phi \varphi$ with poles not `regulated' by non-integer exponents in $\Phi$. 
The relative dual cycles are allowed to have boundaries in a subcomplex  consisting of the unregulated boundaries, and the relative \textit{dual} co-cycles have additional terms  that keep track of boundary contributions. We give a very brief review here. For more details on this framework see refs.~\cite{matsumoto_relative_2019-1,Caron-Huot:2021xqj, Caron-Huot:2021iev}, and for applications in physics, see refs.~\cite{Caron-Huot:2021xqj, Caron-Huot:2021iev,Giroux:2022wav,De:2023xue,Bhardwaj:2023vvm,Brunello:2023rpq,Crisanti:2024onv}.

\paragraph{Relative twisted homology.} 
For a fixed integral \eqref{gen_twisted_integral}, in which we are interested, we define the following sets of hypersurfaces\footnote{By our convention, all possible poles of the differential forms $\varphi$ we consider must be either zeroes or poles of the twist, even if that means including $L_j(\bz)^0$ as a factor into the twist.}:
\beq\bsp
    \Sigma &=\{\bs{z}\, |\, L_i(\bs{z}) = 0 \wedge \alpha_i \notin\mathbb{Z}\} \cup \{\infty\}\,, \\
    D_+&=\{ \bs{z}\, |\, \Phi\varphi(\bs{z}) = 0 \}\,, \\
    D_- &= \{\bs{z}\, |\, \bs{z} \text{ is a pole of }\Phi\varphi(\bs{z})\textrm{~and~}  \bs{z} \notin \Sigma\} \, . 
\esp\eeq
Additionally, we denote:
\beq\bsp
    X_{\pm}&= \mathbb{C}^n- \Sigma -D_{\pm}\, . 
\esp\eeq
Since $\Phi\varphi$ vanishes on $D_+$, we can work relative to this set, i.e. consider cycles with boundaries on $D_+$ as those cycles are closed.
That means, we define \textit{relative twisted cycles} as elements of
\begin{align}
\label{relativecycels}
    C_n(X_-,D_+,\check{\mathcal{L}}) = C_n(X_-,\check{\mathcal{L}})/C_n(D_+,\check{\mathcal{L}}) \, ,   
\end{align}
with
\beq
C_n(D_+,\check{\mathcal{L}})=C_n(X_-,\check{\mathcal{L}}) |_{D_+}\,. 
\eeq
Analogous to (\ref{CBeq}), we define 
\begin{align}
B_n(X_-,D_+,\check{\mathcal{L}})=B_n(X_-,D_+,\check{\mathcal{L}})/B_n(D_+,\check{\mathcal{L}})\, . 
\end{align}
The \textit{relative twisted homology group} is
\begin{align}
    H_n(X_-,D_+,\check{\mathcal{L}}) =  C_n(X_-,D_+,\check{\mathcal{L}})/  B_n(X_-,D_+,\check{\mathcal{L}})\, .
\end{align}
As a basis for the homology group $H_1(X_-,D_+,\check{\mathcal{L}})$ we generally choose a combination of loops around the poles in $D_{-}$ and (regulated) chambers between the branchpoints $\{L_i(\bs{z})=0\}$. The dual relative twisted homology group is defined by
\begin{align}
H_n(X_+, D_-,\mathcal{L}) = C_n(X_+,D_-,\mathcal{L})/ B_n(X_+,D_-,\mathcal{L}) \, . 
\end{align}
Note, that the dual twisted homology group is relative to the poles in $D_-$. Thus, the (basis) elements of the dual twisted homology group can end at poles of $\Phi\varphi$. 

\paragraph{Relative twisted cohomology.}
Similarly, one defines the relative twisted cohomology group
\begin{align}
\label{relativetwistedcohojmology}
    H^n_{\text{dR}}(X_-,D_+,\nabla) = C^n(X_-,D_+, \nabla)/ B^n(X_-,D_+,\nabla) \, ,
\end{align}
where the relative twisted co-cycles are elements of $ C^n(X_-,D_+, \nabla)= C^n(X_-,\nabla)/C^n(D_+, \nabla)$. Note that for cases where $D_+=\varnothing$ we have \cite{matsumoto_relative_2019-1}
\begin{align}\label{eq:H_dR_rel-to-non_rel}
    H_{\text{dR}}^n (X_-,D_+,\nabla)\cong H_{\text{dR}}^n (X_-,\nabla)\, . 
\end{align}
This applies to all examples considered in this paper. Note however that $X_{-}\neq X$ with $X$ as defined in the non-relative case, but rather $X_{-}=X-D_{-}$. One needs to be more careful when considering the dual relative twisted cohomology group 
\begin{align}
    H^n_{\text{dR}}(X_+,D_-,\check{\nabla}) = C^n(X_+,D_-,\check{\nabla})/B^n(X_+,D_-,\check{\nabla}) \, . 
\end{align}
and more importantly, its version with compact support, which is needed for a well-defined intersection pairing. Elements of $H^n_{\text{dR}}(X_+,D_-,\check{\nabla})$ have the form  
\begin{align}
    \check{\varphi} = \theta \psi + \delta_1(\theta \psi_1)+\dots + \delta_{1,2} \left(\theta \psi_{1,1}\right)+ \dots \, .
\end{align}
Here $\theta$ is a symbol that keeps track of possible boundary terms and the sum is over all (intersections of) sub-boundaries of $D_-$. The form $\delta_{i_1,\dots,i_p}({\phi}_{i_1,\dots,i_p})$ is the \textit{Leray coboundary} of a form ${\phi}_{i_1,\dots,i_p}$  living on the boundary $ \{D_{i_1}=0\}\cup \dots \cup\{ D_{i_p}=0\}$. It can be explicitly represented as 
\begin{align}\label{reldelta}
\delta_{i_1,\dots,i_p}({\phi}_{i_1,\dots,i_p})=\frac{\Phi}{\Phi|_{{i_1,\dots,i_p}}} \rd\theta_{i_1}\wedge\dots \rd\theta_{i_p} \wedge {\phi}_{i_1,\dots,i_p}\,,
\end{align}
where we may interpret $\rd\theta$ as the derivative of the Heaviside step function,
\beq\label{eq:heaviside}
\theta(x) = \left\{\begin{array}{ll}
1\,,&  x >0\,,\\ 0\,,& x \le 0\,.
\end{array}\right.
\eeq
Details on the construction and meaning of $\rd \theta$ can be found in refs.~\cite{Caron-Huot:2021iev,Caron-Huot:2021xqj,Giroux:2022wav}. For one-forms, this construction implies that a dual basis of $H^1_{\text{dR}}(X_+,D_-,\check{\nabla})$ is given by
\begin{align}
\label{zibasis}
\delta_{z_i}(1)=\frac{\Phi}{\Phi|_{z_i=0}} \rd\theta(z-z_i)\,,  \text{~~~for~} z_i \text{ a pole}\,,
\end{align}
and compactifications of twisted co-cycles ${\phi}_{\text{reg}}$  with `regulated' singularities, 
\begin{align}
\label{regBasis}
\left[\phi_\text{reg}\right]^{\text{rel}}_c=\phi_\text{reg} \prod_i \theta(z-z_i) +\sum_i\psi_i \,\rd \theta(z-z_i)\,,
\end{align}
with ${\phi}_{\text{reg}}$ chosen as in the non-relative case and $z_i \in D_+\cup \Sigma$.
The functions $\psi_i$ are the local primitives defined by the twist as
$\check{\nabla}\psi_i =\phi_{\text{reg}}$. With the basis choice as in eqs.~\eqref{zibasis},\eqref{regBasis}, the intersection numbers can be computed via residues:
\beq\bsp
\langle \varphi|\delta_{z_i}(1)\rangle&=\text{Res}_{z=z_i}\left(\frac{\Phi}{\Phi|_{z=z_i}}\varphi\right)\,,\\
\langle\varphi|\left[\phi_\text{reg}\right]_c\rangle&=\sum_k \text{Res}_{z=z_k}(\psi_k \varphi)\, .
\esp\eeq
\begin{example}[Gauss hypergeometric ${}_2F_1$ function in relative twisted cohomology]
\label{example21}

We take again the ${}_2F_1$ function with the twist in eq.~\eqref{eq_twistF21} as an example, but this time we consider the case with a pole at $z=x$ by taking $\alpha_x\rightarrow 0$, such that the twist is now given by
\begin{align}
\label{eq_reltwist}
    \Phi= z^{\alpha_0 }(1-z)^{\alpha_1 }\left(1-x^{-1}z\right)^{0}\, . 
\end{align}
Since the twist does not contain factors raised to positive integer powers, eq.~\eqref{eq:H_dR_rel-to-non_rel} holds, and we can pick as a basis for  $ H_{\text{dR}}^n (X_-,D_+,\nabla)\cong H_{\text{dR}}^n (X_-,\nabla)$ the same basis as in the non-relative case in eq.~\eqref{herenotchange}.
We pick as a basis of dual forms
     \beq\bsp\label{eq_relabasis}
\check{\varphi}^{\text{rel}}_1&\,=-\delta_x(1)= -\Phi \left(\Phi|_{z= x}\right)^{-1} \rd \theta(z-x)\,,\\
\check{\varphi}^{\text{rel}}_2&\,=-\left[\frac{\rd z}{z}\right]^{\text{rel}}_c\, . 
\esp\eeq
Note that $\check{\varphi}^{\text{rel}}_2$ agrees with the dual basis element $\check{\varphi}_2$ in the non-relative case in eq.~\eqref{eq:2F1_non-rel_dual}, because the pole of $\varphi_2$ at $z=0$ is regulated by the twist. Since the pole in $\varphi_1$ is not regulated by the twist, the dual basis element $\check{\varphi}^{\text{rel}}_1$ differs from its non-relative counterpart in eq.~\eqref{eq:2F1_non-rel_dual}.

As a basis of twisted cycles, we choose loaded cycles supported on:
\beq\bsp
    \label{cycleexample1}
    \gamma_1&\,= (2\pi i)^{-1}S_\eta(x)\,,\\
   \gamma_2&\,=[0,1]_{\text{reg}}\,  , 
\esp\eeq
where $S_\eta(x)$  is the circle  of radius $\eta$ and center $x$. The first cycle differs from the choice in the non-relative case in example~\ref{example11}, where all basis cycles are regularised intervals, because the function $\Phi$ is not multi-valued at $z=x$, and thus the circle defines a closed contour. For the dual cycles, we choose as a basis the loaded cycles supported on 
\beq\bsp
    \label{cycleexample2}
    \check{\gamma}_1&\,=[x,\infty]\,,\\
     \check{\gamma}_2&\,=[0,1]\,  . 
\esp\eeq
  The cycle $\check{\gamma}_1$ can end at the pole $x$, since we are working relative to it. 
With this choice of basis, the period matrix for $x>1$ is given by 
\beq\bsp
    \label{pmatrel}
P_{11}^{\text{rel}}(x,\balpha)&\,= x^{\alpha_0 } (x-1)^{\alpha_1 } e^{\pi i \alpha_1}\,,\\
P_{12}^{\text{rel}}(x,\balpha)&\,=-\tfrac{1-x}{x}\, \tfrac{\Gamma(1+\alpha_0)\Gamma(\alpha_1 )}{\Gamma(1+\alpha_{01} )}\,{}_2F_1\left(1+\alpha_0 ,1;1+\alpha_{01} ;y\right)\,,\\
P_{21}^{\text{rel}}(x,\balpha)&\,= 0\,,\\
P_{22}^{\text{rel}}(x,\balpha)&\,= -\tfrac{\Gamma(\alpha_0 )\Gamma(1+\alpha_1 )}{\Gamma(1+\alpha_{01} )} \, ,
\esp\eeq
and the dual period matrix is 
\beq\bsp
\label{pmatreldual}
\check{P}^{\text{rel}}_{11}(x,\balpha) &\,=- x^{-\alpha_0 } (x-1)^{-\alpha_1 } e^{-\pi i \alpha_1}\,,\\
\check{P}^{\text{rel}}_{12}(x,\balpha) &\,= 0\,, \\
\check{P}^{\text{rel}}_{21}(x,\balpha)&\,=-e^{-i\pi  \alpha_1 }   \,   \tfrac{x^{-\alpha_{01} }}{\alpha_{01}}\, {}_2F_1\left(\alpha_{01} , \alpha_1 ;1+ \alpha_{01} ;y\right)\,,\\
\check{P}^{\text{rel}}_{22}(x,\balpha)&\,=-\tfrac{\Gamma(-\alpha_0 )\Gamma(1-\alpha_1 )}{\Gamma(1-\alpha_{01} )} \, . 
\esp\eeq
The intersection matrices are
\beq\bsp
\label{intersectionsrelative}
     \bs{C}^{\text{rel}}(x,\balpha)&\,=\left(\begin{smallmatrix}
     -1&0\\ 0 & -\frac{\alpha_1}{\alpha_0 \alpha_{01}}
    \end{smallmatrix} \right)\,,\\
  \bs{H}^{\text{rel}}(\balpha)&\,=\tfrac{1}{2\pi i}\,\left(\begin{smallmatrix}
    1&0\\ 0 & \pi\cot\left(\pi  \alpha_0\right)+\pi\cot\left(\pi  \alpha_1\right)
    \end{smallmatrix}\right) \,. 
\esp\eeq

Note that $\bs{\check{P}}^{\text{rel}}(x,\balpha) \neq \bP^{\text{rel}}(x,-\balpha)$. Since this observation will be crucial when we discuss Feynman integrals, let us spend some time to discuss this point. More precisely, one may wonder if we could change basis for the twisted cycles and co-cycles so that $\bs{\check{P}}^{\text{rel}}(x,\balpha) = \bP^{\text{rel}}(x,-\balpha)$. In the following, we show that this is not possible.
Clearly, the only obstructions come from $\check{P}^{\text{rel}}_{21}(x,\balpha)$ and $\check{P}^{\text{rel}}_{12}(x,\balpha)$. We now argue that it is not possible to change basis such that $\check{P}^{\text{rel}}_{21}(x,\balpha),\check{P}^{\text{rel}}_{12}(x,\balpha)$ can be expressed in terms of the entries of $\bs{{P}}^{\text{rel}}(x,-\balpha)$. Indeed, if that was the case, we could find rational functions $c_{ij}(x,\balpha)$ such that for example
\beq\label{independence}
\check{P}^{\text{rel}}_{12}(x,\balpha) = \sum_{i,j=1}^2 c_{ij}(x,\balpha)\,{P}^{\text{rel}}_{ij}(x,-\balpha)\,.
\eeq
It is sufficient to show that this is impossible if we expand all the functions around $\balpha =0$ (for example, we may let $\alpha_i=a_i\eps$, and then expand in $\eps$). The expansion of the hypergeometric functions can be obtained from {\sc HypExp}~\cite{Huber:2005yg,Huber:2007dx}. One may then observe that the coefficients in the Laurent expansion of $\bs{{P}}^{\text{rel}}(x,\balpha)$ and $\bs{\check{P}}^{\text{rel}}(x,\balpha)$ are pure~\cite{Arkani-Hamed:2010pyv} combinations of harmonic polylogarithms~\cite{Remiddi:1999ew} of uniform weight. Since the transcendental functions that appear in the expansion are pure, it follows that the $c_{ij}(x,\balpha)$ must be independent of $x$, and the coefficients in the expansion must also be pure constants of uniform weight. Moreover, it is possible to write down a basis for harmonic polylogarithms~\cite{DDMS}. Then by comparing the first few orders in the expansion written in this basis, one can show that eq.~\eqref{independence} can never be satisfied. Hence, we have shown that there is no basis in which $\bs{\check{P}}^{\text{rel}}(x,\balpha) \neq \bP^{\text{rel}}(x,-\balpha)$, as claimed.



\end{example}

\section{Quadratic relations for Feynman integrals and their cuts}
\label{sec:RelationsRiemann}

In this section, we explain how to interpret dimensionally-regulated Feynman integrals and their cuts as twisted periods, following ref.~\cite{Mizera:2017rqa}. We then use this framework to investigate TRBRs for (cut) Feynman integrals, which relate the (dual) period matrix to the intersection matrices $\bC$ and $\bH$. Methods to compute intersection numbers between twisted co-cycles are discussed, for example in refs.~\cite{Mastrolia:2018uzb,Frellesvig:2019kgj,Frellesvig:2020qot,Weinzierl:2020xyy,Caron-Huot:2021xqj,Caron-Huot:2021iev,Chestnov:2022xsy,Cacciatori:2022mbi,Brunello:2023rpq,Chestnov:2022alh,Crisanti:2024onv}. The question of how to fix a basis of cycles, or the equivalent question of how to determine a period matrix given a basis of differential forms for master integrals, is much less explored. We therefore start by defining a period matrix for a given family of Feynman integrals in section~\ref{sec:period_matrix}, before we discuss TRBRs for Feynman integrals in section~\ref{sec:TRBRs-FI}.




\subsection{A period matrix for Feynman integrals}
\label{sec:period_matrix}


Consider a family of Feynman integrals in a fixed dimension $D$. Then each member of the family is defined by the vector $\bs{\nu}$ of exponents, and we can write (cf.~eq.~\eqref{eq_Baikov2} and~\eqref{Baikov_LL})\footnote{For readibility, we relabelled $N'$ and $\mathcal{C}'$ into $N$ and $\mathcal{C}$ compared to eq.~\eqref{Baikov_LL}.}
\beq
\hat{I}^D_{\bs{\nu}} = \int_{\mathcal{C}}\Phi\varphi_{\bs{\nu}}\,,
\eeq
with the twist given by the (product of) Baikov polynomial(s):
\beq\label{eq:Feynman_twist}
\Phi= \mathcal{B}_1(\bs{z})^{\mu_1}\dots \mathcal{B}_K(\bs{z})^{\mu_K}\,.
\eeq
The exponents have the form
\beq\label{eq:Baikov_twis_exponent}
\mu_i = \tfrac{m_{i}}{2}+\sigma_i\eps\,,\qquad m_{i}\in\mathbb{Z}, \, \, \sigma_i\in\{\pm 1\} \, ,
\eeq 
and the rational differential form is 
\beq
\varphi_{\bs{\nu}} = \rd^{N}\!z\,\prod_{i=1}^{N}z_i^{-\nu_i}\,.
\eeq
From this it follows that Feynman integrals and their cuts are twisted periods.\footnote{We note that it is also possible to work directly in momentum space in order to identify Feynman integrals as twisted periods, see, e.g., ref.~\cite{Caron-Huot:2021xqj}.}

Let us now discuss how we can define a period matrix for this family of Feynman integrals. Parts of the results were already presented and/or conjectured in some form in refs.~\cite{Abreu:2019eyg,Bonisch:2021yfw}, but this is the first time a derivation is presented.  
 The following concept is well known from the study of IBP relations: each twisted co-cycle $\varphi_{\bs{\nu}}$ belongs to a \emph{sector}, characterised by an element $\bTheta_{\bnu}\in \{0,1\}^N$ defined by
\beq
\bTheta_{\bnu} = \big(\theta(\nu_1),\ldots,\theta(\nu_N)\big)\,,
\eeq
where $\theta(\nu)$ is the Heaviside step function from eq.~\eqref{eq:heaviside}.
There is a natural partial order on sectors given by
\beq
\bTheta_{\bnu_1}\succcurlyeq\bTheta_{\bnu_2} \textrm{~~~if~~~} \theta(\nu_{1i}) \ge \theta(\nu_{2i}) \textrm{~~for all~~} i\,.
\eeq
This partial order induces a partial order on the master integrals. A sector is called \emph{reducible} if every twisted co-cycle belonging to this sector can be expressed as a linear combination of co-cycles from a lower sector. A \emph{top-sector} is an irreducible sector that is maximal for the partial ordering. Note that we can always choose a basis of master integrals so that there is no master integral that belongs to a reducible sector. In the following, we assume that we have chosen such a basis. We denote the irreducible sectors of the family by $\bTheta_1, \ldots, \bTheta_S$, and $M_i$ is the number of master integrals in that irreducible sector. Note that we always have $|\bTheta_i|\ge L$, where $|\bTheta_i|$ denotes the number of non-zero entries of $\bTheta_i$. This is a direct consequence of the fact that in dimensional regularisation all scaleless integrals vanish. We emphasise here an important point: The list of irreducible sectors determines the set of propagators that enter the master integrals, but we cannot determine a priori how to distinguish master integrals within a given sector. At this point, this can only be done via a case by case analysis, e.g., by solving the IBP relations.

Let us now explain how we can construct a twisted period matrix for this family in terms of cut integrals. We start from a top-sector, say $\bTheta_1$. It is generated by $M_1$ master integrals which share exactly the same propagators, but possibly raised to different positive powers and/or different numerator factors. We can identify $M_1$ different maximal cuts, or equivalently $M_1$ independent contours which encircle the propagators of this sector. There is no general algorithm of how to choose these $M_1$ integration cycles, other than that they need to encircle the poles defined by the propagators of that sector. This is similar to what we discussed for the master integrals: we cannot a priori characterize the master integrals in a given sector, other than that they need to have a precise set of propagators. In practice, they can often be identified by analyzing the zeroes of the Baikov polynomials in a given sector, cf.,~e.g.,~refs.~\cite{Primo:2016ebd,Primo:2017ipr,Frellesvig:2017aai,Bosma:2017ens}. We can evaluate all master integrals (including those from lower sectors) on the cycles obtained in this way, and we obtain zero for all master integrals that are not in the sector $\bTheta_1$. We obtain in this way $M_1$ vectors. These vectors are linearly independent because the maximal cuts are independent. Moreover, they have at most $M_1$ non-zero entries. We repeat this procedure for all other top-sectors, say $\bTheta_2,\ldots, \bTheta_k$ (with $k$ the number of top-sectors). Note that vectors for two different sectors are necessarily linearly independent because they have non-zero entries in different places.

Next, we turn to an irreducible next-to-top-sector, say $\bTheta_{k+1}$. There are $M_{k+1}$ independent maximal cut contours for this sector. We can again evaluate all master integrals on these $M_{k+1}$ cycles. Unlike for the top-sectors, however, we obtain zero for all integrals from lower sectors $\bTheta_l\preccurlyeq\bTheta_{k+1}$, but the result is not necessarily zero for the top-sectors. The main point is that we can use exactly the same contours that define independent maximal cuts in the sector $\bTheta_{k+1}$ to define non-maximal cuts in higher sectors. This is particularly manifest in the loop momentum representation in eq.~\eqref{eq_sec2.1}, because the integration measure only depends on the number of loops. The same conclusion can easily be reached from the Baikov representation. We obtain in this way $M_{k+1}$ linearly independent vectors (because we have chosen a set of independent maximal cuts), which have zero entries for all master integrals from sectors $\bTheta_l\preccurlyeq\bTheta_{k+1}$. Note that these vectors are linearly independent from the vectors constructed from the top-sectors, because the latter have zeroes in all entries corresponding to the sector $\bTheta_{k+1}$. 

We can continue in this way until we have exhausted all sectors. For each irreducible sector $\bTheta_r$ we obtain in this way a set of $M_r$ independent maximal cut contours, from which we can construct $M_r$ linearly independent vectors. Moreover the vectors obtained from two different sectors are linearly independent. We can put these column vectors together to form an $M\times M$ matrix $\bP$, with $M=\sum_iM_i$ the total number of master integrals. This matrix has the following properties:
\begin{enumerate}
\item $\bP$ has full rank, because all columns are linearly independent.
\item By reverse-unitarity, each column satisfies the differential equation~\eqref{eq:DEQ_generic} for this basis of master integrals.
\item If the master integrals and the maximal cut contours are ordered in a way that respects the natural partial ordering on the sectors, then $\bP$ is block upper-triangular. The blocks on the diagonal are the maximal cuts for the irreducible sectors, while the entries above the diagonal are non-maximal cuts.
\item All non-zero entries are cut Feynman integrals.
\end{enumerate}
The first two properties identify the matrix $\bP$ as a fundamental solution matrix to eq.~\eqref{eq:DEQ_generic}, and so $\bP$ is the period matrix for this particular choice of bases of twisted cycles and co-cycles.  Note that, since $\bP$ is the fundamental solution matrix of eq.~\eqref{eq:DEQ_generic}, every solution to eq.~\eqref{eq:DEQ_generic} is of the form $\bP\bIhat_0$, where $\bIhat_0$ is a constant vector of initial conditions (it still depends on the dimensional regulator $\eps$). This implies that all the full, uncut, Feynman integrals of this family can be written as a linear combination of the cuts that enter the period matrix. In particular, there is a constant vector $\bIhat_0^{\textrm{uncut}}$ such that the vector of master integrals $\bIhat^{\textrm{uncut}}$ (evaluated on the contour that define the uncut integral) is given by $\bIhat^{\textrm{uncut}}=\bP\bIhat_0^{\textrm{uncut}}$. As a consequence, every Feynman integral is a linear combination of its cuts, similarly to the statement of the celebrated Feynman tree theorem~\cite{osti_4077804,Feynman1972}.

As a corollary of our construction of a period matrix, we have obtained a way to identify a basis for the twisted homology group associated to a Feynman integral: A basis of twisted cycles is obtained by considering a set of independent maximal cut contours for each irreducible sector. This basis has two interesting properties: Since $|\bTheta_i|\ge L$, our basis consists of contours that correspond to cutting at least $L$ propagators. Moreover, it is easy to check that every loop of the underlying Feynman graph always contains at least one cut propagator. The existence of a spanning set of cuts with this property was conjectured in ref.~\cite{Abreu:2019eyg}.

\begin{example}[The period matrix for the one-loop bubble integral]
\label{ex:bubble_period_matrix}
We consider as an example the massive bubble integral in $D=2-2\eps$ dimensions:
\begin{align}
I_{\nu_1\nu_2}^{D} =e^{\gamma_E\varepsilon} \int\frac{\rd^D\ell}{i\pi^{\frac{D}{2}}}\frac{1}{(\ell^2-m_1^2)^{\nu_1}((\ell-p)^2-m_2^2)^{\nu_2}}\, . 
\end{align}
There are three irreducible sectors:
\beq
\bTheta_1=(1,1)\,,\qquad\bTheta_2=(1,0)\,,\qquad\bTheta_3=(0,1)\,.
\eeq
Each sector has one master integral, namely $I_{1,1}^{D}$, $I_{1,0}^{D}$ and $I_{0,1}^{D}$. The bubble integral $I_{1,1}^{D}$ can be expressed in terms of Appell $F_4$ functions, which can be reduced to Gauss' hypergeometric function evaluated at an algebraic argument, cf., e.g.,~ref.~\cite{Anastasiou:1999ui}. The one-loop tadpole integral is given by
\beq
I_{1,0}^{D} =-\frac{e^{\gamma_E \epsilon}\Gamma(1+\epsilon)(m_1^2)^{-\epsilon}}{\epsilon}\,,
\eeq
and the result for $I_{0,1}^{D}$ is obtained by exchanging $m_1$ and $m_2$.

We can write down a period matrix for the one-loop bubble integral as follows:
\beq\label{eq:bubble_per_mat}
{\bs{P}}_{\tikz[baseline=-0.5ex]{\draw (0,0) circle [radius=2pt];
		\draw (-4pt,0) -- (-2pt,0);
		\draw (2pt,0) -- (4pt,0);}}
=
\begin{pmatrix}
	\tikz[baseline=-0ex]{
		\draw (0,0) circle [radius=5pt];
		\draw (-10pt,0) -- (-5pt,0);
		\draw (5pt,0) -- (10pt,0);
		\draw [red, thick] (0,2pt) -- (0,8pt); 
		\draw [red, thick] (0,-2pt) -- (0,-8pt); 
		\node[font=\footnotesize] at (0,11pt) {$m_1$}; 
		\node[font=\footnotesize] at (0,-12pt) {$m_2$}; 
	}
	& \hspace{0.2cm}
	\tikz[baseline=-0ex]{
		\draw (0,0) circle [radius=5pt];
		\draw (-10pt,0) -- (-5pt,0);
		\draw (5pt,0) -- (10pt,0);
		\draw [red, thick] (0,2pt) -- (0,8pt); 
		\node[font=\footnotesize] at (0,11pt) {$m_1$}; 
		\node[font=\footnotesize] at (0,-12pt) {$m_2$}; 
	}
	&\hspace{0.2cm}
	\tikz[baseline=-0ex]{
		\draw (0,0) circle [radius=5pt];
		\draw (-10pt,0) -- (-5pt,0);
		\draw (5pt,0) -- (10pt,0);
		\draw [red, thick] (0,-2pt) -- (0,-8pt); 
		\node[font=\footnotesize] at (0,11pt) {$m_1$}; 
		\node[font=\footnotesize] at (0,-12pt) {$m_2$}; 
	}
	\\
	0&   \hspace{0.2cm}                                                
	\tikz[baseline=-1ex]{
		\draw (0,0) circle [radius=5pt];
		\draw (0,-5pt) -- (0,-10pt); 
		\draw [red, thick] (0,2pt) -- (0,8pt); 
		\node[font=\footnotesize] at (0,11pt) {$m_1$}; 
	}
	&\hspace{0.2cm}0\\
	\displaystyle 0&\hspace{0.2cm}    0&\hspace{0.2cm}                                          
	\tikz[baseline=-1ex]{
		\draw (0,0) circle [radius=5pt];
		\draw (0,-5pt) -- (0,-10pt); 
		\draw [red, thick] (0,2pt) -- (0,8pt); 
		\node[font=\footnotesize] at (0,11pt) {$m_2$}; 
	}
\end{pmatrix}.
\eeq
    Analytic results for all the cut integrals entering the period matrix can be found in ref.~\cite{Abreu:2019eyg}.

\end{example}

\begin{example}[The period matrix for the unequal-mass sunrise integral]
As a second example we consider the unequal-mass sunrise integral in $D=2-2\varepsilon$ dimensions
\begin{align}
\label{sunexstart}
I_{\bs{\nu}}^{{\tikz[baseline=-0.5ex]{\draw (0,0) circle [radius=2pt];
                             \draw (-4pt,0) -- (4pt,0);}}}
\left(p^2,m_1^2,m_2^2,m_3^2 \right)= e^{2\gamma_E \varepsilon} \int \frac{\rd^D \ell_1}{i\pi^{D/2}} \frac{\rd^D \ell_2}{i\pi^{D/2}}\frac{1}{D_1^{\nu_1}D_2^{\nu_2}D_3^{\nu_3}}\, ,
\end{align}
with 
\beq\bsp
    D_1&\,=\ell_1^2-m_1^2\,,\\
     D_2&\,=(\ell_2-\ell_1)^2-m_2^2\,,\\
D_3&\,=(p+\ell_2)^2-m_3\, . 
\esp\eeq
This integral family has four irreducible sectors
\begin{equation}
    \bTheta_1=(1,1,1),\quad \bTheta_2=(0,1,1),\quad \bTheta_3=(1,0,1), \quad\bTheta_4=(1,1,0)\, ,
\end{equation}
with lower sectors vanishing since they correspond to scaleless integrals. The subsectors $\bTheta_2,\bTheta_3,\bTheta_4$ each have one master integral and are simply given by a product of two one-loop tadpole integrals. The top sector $\bTheta_1$ is a more complicated and admits four independent maximal cut contours and hence also four different master integrals. Following the prescription laid out above we thus find the following period matrix:
\begin{figure}[H]
    \centering
    \includegraphics[width=0.5\linewidth]{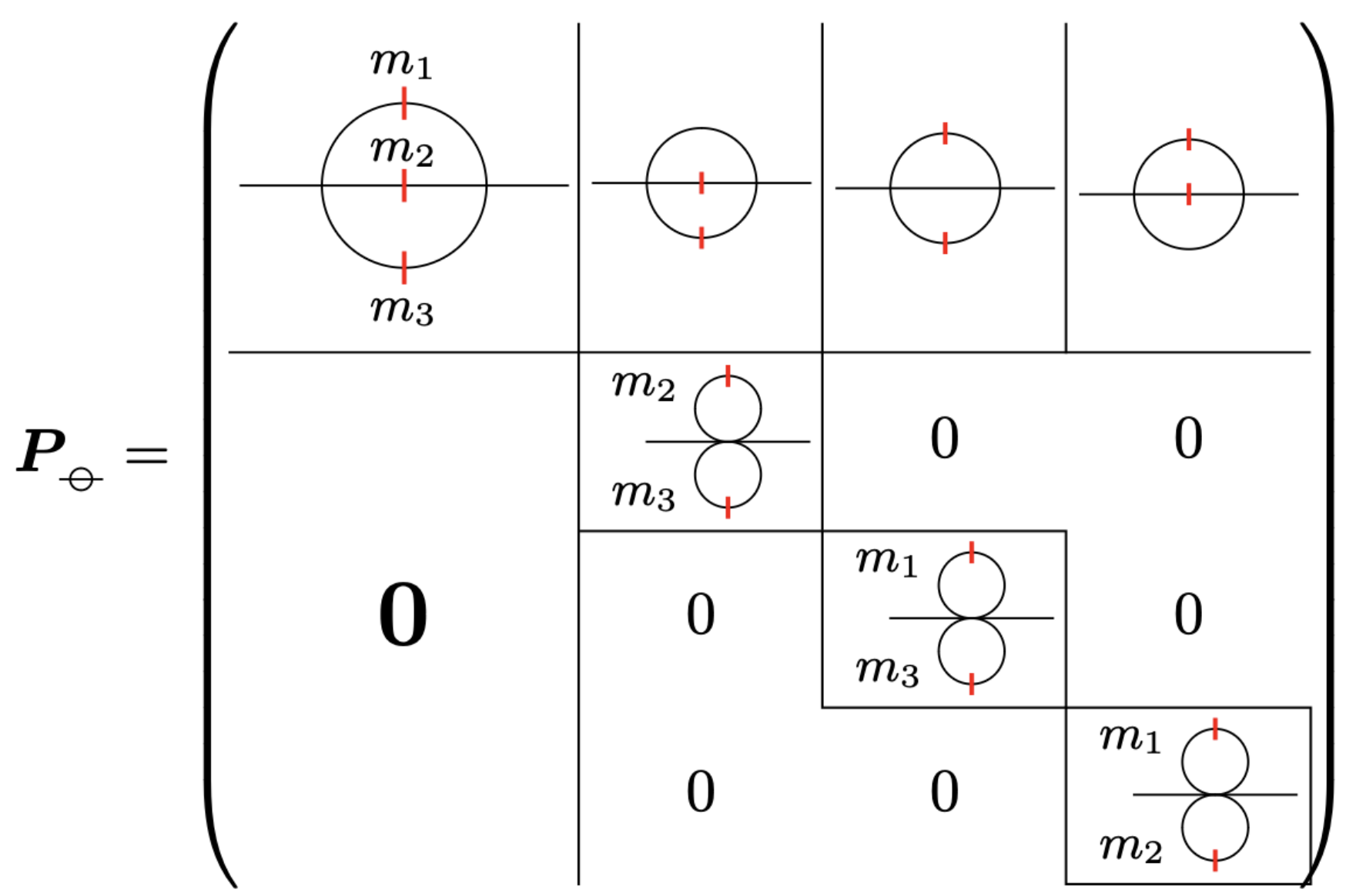}
    \label{fig:enter-label}
\end{figure}

Here the upper left block is a $4\times 4$ matrix involving the four independent maximal cut contours and master integrals of the top sector $\bTheta_1$. The other blocks in the first line have dimensions $4\times 1$ and are obtained by integrating the four master integrals of the top sector over the single maximal cut contour of the three respective subsectors $\bTheta_2,\bTheta_3,\bTheta_4$.

\end{example}
\subsection{The dual period matrix and TRBRs}
\label{sec:TRBRs-FI}

Having at hand a definition of a twisted period matrix for Feynman integrals and their cuts, we can always obtain TRBRs for a twisted period matrix of Feynman integrals. 
However, as we now proceed to show, the interpretation of TRBRs as quadratic relations among Feynman integrals and their cuts is more subtle.

If we compare the general form of the known quadratic relations in eq.~\eqref{eq:schematic_quad_rel} with the TRBRs in eq.~\eqref{generalriemann}, then at first glance the two classes of relations seem very similar. However, there is a subtle difference, which we now explain.  Equation~\eqref{eq:schematic_quad_rel}  is quadratic in the period matrix $\bP(\bx,\eps)$ (up to reversing the sign of $\eps$ in one of the terms), resulting in \emph{quadratic} relations between the entries of $\bP(\bx,\eps)$. 
%
%
The four matrices $\bC$, $\bH$, $\bP$ and $\check{\bP}$ enter \emph{linearly} into the TRBRs. Therefore, a necessary condition to obtain quadratic relations for Feynman integrals from the TRBRs is that the entries of the dual period matrix $\bs{\check{P}}$ are linear combinations of the entries of the period matrix $\bP$.

We have already seen a sufficient condition for  $\bP$ and $\check{\bP}$ to be related: if condition~\eqref{restrict} holds, one can choose bases for the twisted cohomology group and its dual according to eq.~\eqref{eq:check_to_c}. The period matrix and its dual matrix then only differ by the replacement ${\alpha}_i \rightarrow -{\alpha}_i$, cf.~eq.~\eqref{eq:non_rel_period_mat}. In that case, the TRBRs turn into relations that are quadratic in the period matrix, albeit with one of them evaluated with ${\alpha}_i \rightarrow -{\alpha}_i$. For example, if the basis of the dual cohomology group is chosen in this way, the TRBRs in eq.~\eqref{generalriemann} in the non-relative case can be cast in the form
\begin{align}
\label{epminepRiemannn}
\frac{1}{(2\pi i)^n} \bs{P}^T \left(\bs{C}^{-1}\right)^T \left({\bs{P}}|_{\alpha_i\rightarrow -\alpha_i}\right)=   \bs{H}\, ,\quad \textrm{if condition~\eqref{restrict} holds.}
\end{align}

Consider now a family of Feynman integrals, and denote the irreducible sectors by $\bTheta_1,\ldots,\bTheta_S$. With the natural order on the sectors, the matrices $\bP$ and $\bC$ are block upper triangular, while $\bs{\check{P}}$ and $\bH$ are block lower triangular. Cutting a propagator involves setting one of the propagator variables $z_i$ to zero in the Baikov polynomials. The latter are usually non-zero if we set one or more of the variables $z_i$ to zero. It follows that, for $\nu_j\in\mathbb{Z}$, the explicit factors of $z_j^{-\nu_j}$ in eq.~\eqref{eq_Baikov2} are unregulated singularities of the integrand (for $\nu_i>0$), and indeed it is easy to see that the condition in eq.~\eqref{restrict} is not satisfied. 
The only exception
are maximal cuts, because in that case we take residues in all $z_i$ with $\nu_i>0$. We conclude that for maximal cuts the TRBRs reduce to the quadratic relations in eq.~\eqref{epminepRiemannn}. We will discuss these quadratic relations for maximal cuts in detail in section~\ref{sec:MaxCutRelations}.

For non-maximal cuts, however, condition~\eqref{restrict} will not be satisfied, and so we do not expect that we can identify the dual period matrix $\bs{\check{P}}$ with the period matrix evaluated with ${\alpha_i\to-\alpha_i}$. Indeed, we have already seen in the context of Gauss' hypergeometric function in example~\ref{example21} that there is no basis in which this is possible.
Hence, if $\bP$ is the period matrix associated to a family of $L$-loop (cut) Feynman integrals, then the entries of the dual period matrix $\bs{\check{P}}$ will generically not be $L$-loop (cut) Feynman integrals from the same family, and so the TRBRs will not provide quadratic relations relating (cut) integrals from this family! 

We may nevertheless use the TRBRs in eq.~\eqref{generalriemann} to express the dual period matrix in terms of the inverse of the period matrix:
\beq\label{eq:P_dual_to_P-1}
\bs{\check{P}} = (2\pi i)^n\,\bC^T\,(\bP^{-1})^{T}\,\bH\,.
\eeq
The intersection matrices are rational in the kinematic variables $\bx$. The entries of the inverse of the period matrix have the form
\beq
(\bP^{-1})_{ji} =(\det \bP)^{-1}\,(-1)^{i+j}\,M_{ij}\,,
\eeq
where $M_{ij}$ is the minor of $\bP$ obtained by deleting the $i^{\textrm{th}}$ row and $j^{\textrm{th}}$ column of $\bP$. Since $\bP$ is block upper-triangular, the determinant of $\bP$ is given by the product of the determinants of the blocks corresponding to the maximal cuts of the different sectors
\beq
\det \bP = \prod_{i=1}^S\det \bF_{\bTheta_i}\,,
\eeq
where $\bF_{\bTheta_i}$ is the matrix of maximal cuts associated to the master integrals in the sector $\bTheta_i$. Note that, when the determinant is expanded into a Laurent series in the dimensional regulator $\eps$, it will typically involve transcendental functions of the kinematic variables $\bx$ (typically logarithms). Hence we see that, as expected, the entries of $\bs{\check{P}}$ are not $L$-loop (cut) Feynman integrals, but instead they are determinants of cuts, multiplied by the determinants associated to the maximal cuts. Thus the TRBRs enable us to solve for $\check{P}$ and write it in terms of determinants of periods (cut Feynman integrals), but don't yield further relations between them.


To conclude, we see that the TRBRs for Feynman integrals give rise to two different types of relations:
\begin{itemize}
\item The blocks on the diagonal of $\bP$ are the period matrices $\bF_{\bTheta_i}$ for the maximal cuts, and if we restrict eq.~\eqref{eq:P_dual_to_P-1} to these diagonal blocks, we obtain quadratic relations for the maximal cuts in $\bF_{\bTheta_i}$. 
\item The elements not on the diagonal blocks of $\bP$ are non-maximal cuts, and the TRBRs can be used to express the corresponding dual elements as determinants of cut integrals from this family of Feynman integrals.
\end{itemize}

Let us conclude this discussion with two comments.
First, while the previous discussion shows that it is generically not possible to obtain quadratic relations for Feynman integrals from TRBRs, there can be special cases when such quadratic relations nevertheless exist. For example, it can happen that a given sector has no lower sectors (for the natural partial ordering on sectors). In that case, all Feynman integrals and cuts from that sector are linear combinations of maximal cuts, for which the condition~\eqref{restrict} is met. Trivial examples of this include the one-loop tadpole integral and the one-loop bubble integral with massless propagators. Other, less trivial, examples are Feynman graphs obtained in the following fashion: We start from a Feynman graph without self-loops (i.e., without a banana subgraph), and then we replace each edge either by a banana graph with at least two loops and at most one massive propagator or by a one-loop massless bubble graph. It is easy to check that such a graph has no subsectors, because the contraction of any propagator leads to a graph with a detachable massless tadpole integral, which vanishes in dimensional regularisation. For these types of graphs, the Feynman integrals without cut propagators are linear combinations of maximal cuts, and the TRBRs then deliver quadratic relations. An example of such a graph is shown in figure~\ref{trianglewithbubbles}.

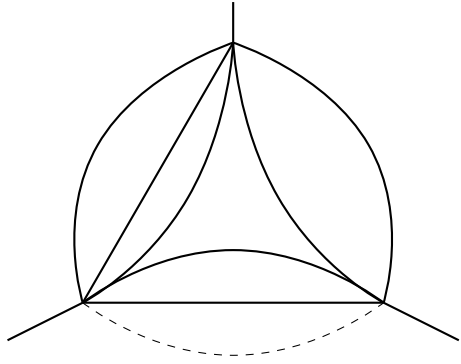
\begin{figure}
    \centering
   	\begin{tikzpicture}
 	
 	\draw[thick] plot [smooth, tension=1] coordinates {(-2,0) (0,0.7) (2,0)};
 	\draw[thick] plot [smooth, tension=1] coordinates {(2,0) (0.6,1.4) (0,3.46)};
 	\draw[thick] plot [smooth, tension=1] coordinates {(0,3.46) (-0.6,1.4) (-2,0)};
 	
 	\draw[dashed] plot [smooth, tension=1] coordinates {(-2,0) (0,-0.7) (2,0)};
 	\draw[thick] plot [smooth, tension=1] coordinates {(2,0) (1.8,2.1) (0,3.46)};
 	\draw[thick] plot [smooth, tension=1] coordinates {(0,3.46) (-1.8,2.1) (-2,0)};
	
	\draw[thick] plot [smooth] coordinates {(2,0) (-2,0)};
	\draw[thick] plot [smooth] coordinates {(0,3.46) (-2,0)};
 	
 	\draw[thick] (-3,-0.5) -- (-2,0) ;
 	\draw[thick] (3,-0.5) -- (2,0) ;
 	\draw[thick] (0,4) -- (0,3.46) ;
%
%
 	\end{tikzpicture}
\caption{A six-loop integral obtained from a one-loop triangle integral by inserting banana integrals depending on at most one mass. Solid internal lines denote massless propagators, while the dashed line represents a massive propagator. The contraction of any internal line leads to a graph with a detachable massless tadpole integral, which vanishes in dimensional regularisation.}
\label{trianglewithbubbles}
\end{figure}

Second, one may wonder if one can obtain quadratic relations for non-maximally cut integrals with integer exponents by considering deformations of the exponents that allow us to work in non-relative twisted cohomology even for non-maxmially cut Feynman integrals. To be more precise, consider an integral $\hat{I}_{\bs{\nu}}^D$ with propagator exponents $\nu_i\in\mathbb{Z}$. We can define a deformed integral 
\beq\label{eq:deformation}
\tilde{I}_{\bnu}^D(\rho) = \hat{I}_{\bnu+\rho}^D\,,\qquad \bnu+\rho = (\nu_1+\rho,\ldots,\nu_N+\rho)\,.
\eeq
Since all propagators of the deformed integral have non-integer exponents, we can choose a basis of dual co-cycles such that the (deformed) dual period matrix satisfies
\beq\label{eq:deformed_per_mat}
\bs{\check{P}}(\bx,\eps,\rho) = \bP(\bx,-\eps,-\rho)\,,\qquad \rho\notin\mathbb{Z}\,.
\eeq 
This leads to TRBRs of the form
\beq
\frac{1}{(2\pi i)^N}\bP(\bx,\eps,\rho)^T\big(\bC(\bx,\eps,\rho)^{-1}\big)^T\bP(\bx,-\eps,-\rho) = \bH(\eps,\rho)\,,
\eeq
where $\bC(\bx,\eps,\rho)$ and $\bH(\eps,\rho)$ are intersection matrices for the co-cycles and cycles computed with the deformed propagators. We obtain quadratic relations involving the entries of the deformed period matrix $\bP(\bx,\eps,\rho)$. One may be tempted to take the limit $\rho\to0$, to obtain quadratic relations between the undeformed integrals. This is similar to the approach advocated in ref.~\cite{Brunello:2023rpq} for the computation of intersection numbers between co-cycles for Feynman integrals. Indeed, while in principle the computation of intersection numbers for Feynman integrals requires the framework of relative twisted cohomology~\cite{Caron-Huot:2021xqj,Caron-Huot:2021iev}, ref.~\cite{Brunello:2023rpq} proposes to deform the exponents, compute the intersection numbers using techniques developed for the non-relative case, and then to only keep the leading terms in the limit $\rho\to 0$. As explained in detail in ref.~\cite{Brunello:2023rpq}, this approach leads to the same results as a direct computation of the intersection matrices in a relative framework in some basis. Only keeping the leading order terms in the intersection matrices requires a rotation of the dual basis. This is exactly the transformation that rotates $P(-\varepsilon)$ into $P^\text{rel}$ in the limit $\rho=0$. In that way, one can reproduce the results of relative twisted cohomology. This is sufficient for the purposes of ref.~\cite{Brunello:2023rpq}, which focuses on the question of how to decompose Feynman integral into a basis, and the entries of the dual period matrix never enter explicitly. In our case, however, the dual period matrix directly enters the TRBRs, and we expect that the relation between the period matrix and its dual in eq.~\eqref{eq:deformed_per_mat} is violated for $\rho=0$. Hence, we do not expect to obtain a quadratic relation between entries of the period matrix, i.e., between Feynman integrals, in the limit. We illustrate this by an explicit computation on the example of Gauss' hypergeometric function in appendix~\ref{app:deform}, where we show that we recover exactly the same TRBRs as in the relative case in the limit $\rho\to0$.

\begin{example}[The massive bubble in $D=2-2\eps$ dimensions] 
We illustrate the  previous discussion on the
example of the one-loop bubble integral from example~\ref{ex:bubble_period_matrix} with $m_1=m$ and $m_2=0$. 
The number of master integrals is reduced with respect to the case $m_2\neq 0$ discussed in example~\ref{ex:bubble_period_matrix}.
The period matrix for this case can be obtained from the period matrix in eq.~\eqref{eq:bubble_per_mat} by deleting the last row and column:
\beq
{\bs{P}}_{\tikz[baseline=-0.5ex]{\draw (0,0) circle [radius=2pt];
		\draw (-4pt,0) -- (-2pt,0);
		\draw (2pt,0) -- (4pt,0);}}
=
\begin{pmatrix}
	\tikz[baseline=-0ex]{
		\draw (0,0) circle [radius=5pt];
		\draw (-10pt,0) -- (-5pt,0);
		\draw (5pt,0) -- (10pt,0);
		\draw [red, thick] (0,2pt) -- (0,8pt); 
		\draw [red, thick] (0,-2pt) -- (0,-8pt); 
	}
	& \hspace{0.2cm}
	\tikz[baseline=-0ex]{
		\draw (0,0) circle [radius=5pt];
		\draw (-10pt,0) -- (-5pt,0);
		\draw (5pt,0) -- (10pt,0);
		\draw [red, thick] (0,2pt) -- (0,8pt); 
	}
	\\
	0&   \hspace{0.2cm}                                                
	\tikz[baseline=-1ex]{
		\draw (0,0) circle [radius=5pt];
		\draw (0,-5pt) -- (0,-10pt); 
		\draw [red, thick] (0,2pt) -- (0,8pt); 
	}
\end{pmatrix}\,.
\eeq
The analytic expressions for the cuts of the master integrals from example~\ref{ex:bubble_period_matrix} are (cf.~ref.~\cite{Abreu:2017mtm}):
\beq\bsp \label{bubbexp}
\tikz[baseline=-1ex]{
		\draw (0,0) circle [radius=5pt];
		\draw (0,-5pt) -- (0,-10pt); 
		\draw [red, thick] (0,2pt) -- (0,8pt); 
	}&=e^{\gamma_E\varepsilon} \frac{m^{-2\eps}\,e^{-i\pi\varepsilon}}{\Gamma(1-\varepsilon)}\,, \\
\tikz[baseline=-0.5ex]{
		\draw (0,0) circle [radius=5pt];
		\draw (-10pt,0) -- (-5pt,0);
		\draw (5pt,0) -- (10pt,0);
		\draw [red, thick] (0,2pt) -- (0,8pt); 
	}&=e^{\gamma_E\varepsilon}\frac{m^{-2\eps}\,e^{-i\pi\varepsilon}}{p^2} \frac{1}{\Gamma(1-\varepsilon)} \,{}_2F_1\left(1,1+\varepsilon; 1-\varepsilon;\frac{m^2}{p^2}\right)\,,\\
\tikz[baseline=-0.5ex]{
		\draw (0,0) circle [radius=5pt];
		\draw (-10pt,0) -- (-5pt,0);
		\draw (5pt,0) -- (10pt,0);
		\draw [red, thick] (0,2pt) -- (0,8pt); 
		\draw [red, thick] (0,-2pt) -- (0,-8pt); 
	}&= -2(p^2)^\varepsilon (p^2-m^2)^{-2\varepsilon-1} e^{\gamma_E\varepsilon} \frac{\Gamma(1-\varepsilon)}{\Gamma(1-2\varepsilon)} \,,
\esp\eeq
and these expressions are valid in the region $p^2>m^2>0$. 

Let us now discuss how to obtain the dual period matrix and the intersection matrices. We can either compute them directly (see, e.g., refs.~\cite{Caron-Huot:2021xqj,Caron-Huot:2021iev,Giroux:2022wav,Crisanti:2024onv}), or we can note that the period matrix is a special case of Gauss' hypergeometric function discussed in example~\ref{example21}. We have the relation
\begin{align}
     {\bs{P}}_{\tikz[baseline=-0.5ex]{\draw (0,0) circle [radius=2pt];
                             \draw (-4pt,0) -- (-2pt,0);
                             \draw (2pt,0) -- (4pt,0);}}= \bs{T}^c \bs{P}^\text{rel}\left(x=\tfrac{p^2}{m_1^2}, \alpha_0=\varepsilon, \alpha_1=-2\varepsilon\right) \bs{T}^h\,,
\end{align}
where $\bs{P}^\text{rel}(x,\ \balpha)$ was defined in eq.~\eqref{pmatrel}, and we defined the matrices
\beq\bsp
    \bs{T}^c&\,=\left(
\begin{array}{cc}
 \frac{m^{-2\varepsilon } \csc (\pi  \epsilon )}{(m^2-p^2) \Gamma (1-2 \epsilon ) \Gamma (\epsilon )} & 0 \\
 0 & -\frac{m^{-2\epsilon } \csc (\pi  \varepsilon )}{2 \Gamma (1-2 \varepsilon ) \Gamma (\varepsilon )} \\
\end{array}
\right)\,,\\
    \bs{T}^h&\,=\left(
\begin{array}{cc}
 2 \pi e^{2 \pi i \epsilon} & 0 \\
 0 & 2 e^{-i \pi  \varepsilon } \sin (\pi  \varepsilon ) \\
\end{array}
\right)\,.
\esp\eeq
The dual period matrix is
\beq\bsp
     {\bs{\check{P}}}_{\tikz[baseline=-0.5ex]{\draw (0,0) circle [radius=2pt];
                             \draw (-4pt,0) -- (-2pt,0);
                             \draw (2pt,0) -- (4pt,0);}}&\,= \bs{\check{T}}^c \bs{\check{P}}^{\text{rel}}\left(x=\tfrac{p^2}{m_1^2},\, \alpha_0=\varepsilon, \alpha_1=-2\varepsilon\right)\bs{\check{T}}^h\\
                             &\,=
                             \begin{pmatrix}  
\hspace{1cm} \tikz[baseline=-0ex]{
		\draw (0,0) circle [radius=5pt];
		\draw (-10pt,0) -- (-5pt,0);
		\draw (5pt,0) -- (10pt,0);
		\draw [red, thick] (0,2pt) -- (0,8pt); 
		\draw [red, thick] (0,-2pt) -- (0,-8pt); 
	}|_{\varepsilon\rightarrow -\varepsilon} &0\\ \hspace{-1cm}\hspace{1cm}\frac{\pi \csc (\pi  \varepsilon ) p^{2\epsilon }}{\Gamma(1+2\varepsilon)} \, {}_2F_1\left(-\varepsilon, -2\varepsilon; 1-\varepsilon ;\tfrac{m^2}{p^2}\right)\vspace{0.5cm}&  \tikz[baseline=-1ex]{
		\draw (0,0) circle [radius=5pt];
		\draw (0,-5pt) -- (0,-10pt); 
		\draw [red, thick] (0,2pt) -- (0,8pt); 
	}|_{\varepsilon\rightarrow -\varepsilon}                         \end{pmatrix}\,,
\esp\eeq
with
\beq\bsp
    \bs{\check{T}}^c&\,=\left(
\begin{array}{cc}
 \frac{ m^{2\varepsilon } \Gamma (\epsilon +1)}{\pi  (p^2-m^2) \Gamma (2 \varepsilon +1)} & 0 \\
 0 & -\frac{m^{2\varepsilon } \csc (\pi  \varepsilon )}{2 \Gamma (-\varepsilon ) \Gamma (2 \varepsilon +1)} 
\end{array}
\right)\,, \\
 \bs{\check{T}}^h&\,=\left(
\begin{array}{cc}
 2 \pi e^{-2 \pi i \epsilon} & 0 \\
 0 & 2 e^{i \pi  \varepsilon } \sin (\pi  \varepsilon ) \\
\end{array}
\right)\,.
\esp\eeq
The intersection matrices are
\beq\bsp
\bs{C}_{\tikz[baseline=-0.5ex]{\draw (0,0) circle [radius=2pt];
                             \draw (-4pt,0) -- (-2pt,0);
                             \draw (2pt,0) -- (4pt,0);}}&\,=\bs{T}^c\bs{C}^\text{rel}(\check{\bs{T}}^c)^T= \left(
\begin{array}{cc}
 \frac{\cos (\pi  \epsilon )}{\pi ^2 (m^2-p^2)^2} & 0 \\
 0 & \frac{\cos (\pi  \epsilon )}{2 \pi ^2 \epsilon } \\
\end{array}
\right)\,,\\ 
\bs{H}_{\tikz[baseline=-0.5ex]{\draw (0,0) circle [radius=2pt];
                             \draw (-4pt,0) -- (-2pt,0);
                             \draw (2pt,0) -- (4pt,0);}}&\,=\check{\bs{T}}^h\bs{H}^\text{rel}(\bs{T}^h)^T=\left(
\begin{array}{cc}
 -2 \pi i& 0 \\
 0 & -i \tan (\pi  \epsilon ) \\
\end{array}
\right)\,.
                             \esp\eeq
                             
We see that the entries of the dual period matrix $  {\bs{\check{P}}}_{\tikz[baseline=-0.5ex]{\draw (0,0) circle [radius=2pt];
                             \draw (-4pt,0) -- (-2pt,0);
                             \draw (2pt,0) -- (4pt,0);}}$ that are not in the diagonal blocks are not manifestly expressible in terms of Feynman integrals. We can express them in terms of Feynman integrals using eq.~\eqref{eq:P_dual_to_P-1}, and we find
\beq
\Big({\bs{\check{P}}}_{\tikz[baseline=-0.5ex]{\draw (0,0) circle [radius=2pt];
                             \draw (-4pt,0) -- (-2pt,0);
                             \draw (2pt,0) -- (4pt,0);}}\Big)_{21} =  \frac{i \cos(\pi \varepsilon)}{\pi \varepsilon}\hspace{0.2cm}
	\tfrac{\tikz[baseline=-0ex]{
		\draw (0,0) circle [radius=5pt];
		\draw (-10pt,0) -- (-5pt,0);
		\draw (5pt,0) -- (10pt,0);
		\draw [red, thick] (0,2pt) -- (0,8pt); 
	}}{\tikz[baseline=-1ex]{
		\draw (0,0) circle [radius=5pt];
		\draw (-10pt,0) -- (-5pt,0);
		\draw (5pt,0) -- (10pt,0);
		\draw [red, thick] (0,2pt) -- (0,8pt); 
		\draw [red, thick] (0,-2pt) -- (0,-8pt); 
	}\,\tikz[baseline=-1ex]{
		\draw (0,0) circle [radius=5pt];
		\draw (0,-5pt) -- (0,-10pt); 
		\draw [red, thick] (0,2pt) -- (0,8pt); 
	}}\,,
                             \eeq
where we used the fact that $\det \bs{P}_{\tikz[baseline=-0.5ex]{\draw (0,0) circle [radius=2pt];
                             \draw (-4pt,0) -- (-2pt,0);
                             \draw (2pt,0) -- (4pt,0);}} = \tikz[baseline=-1ex]{
		\draw (0,0) circle [radius=5pt];
		\draw (-10pt,0) -- (-5pt,0);
		\draw (5pt,0) -- (10pt,0);
		\draw [red, thick] (0,2pt) -- (0,8pt); 
		\draw [red, thick] (0,-2pt) -- (0,-8pt); 
	}\,\tikz[baseline=-1ex]{
		\draw (0,0) circle [radius=5pt];
		\draw (0,-5pt) -- (0,-10pt); 
		\draw [red, thick] (0,2pt) -- (0,8pt); 
	}$, and also the well-known identity
	\beq\label{eq:2F1identity}
	{}_2F_1(a,b;c;x) = (1-x)^{c-a-b}\,{}_2F_1(c-a,c-b;c;x)\,.
	\eeq
	We see that, as expected, we have
	\beq
	{\bs{\check{P}}}_{\tikz[baseline=-0.5ex]{\draw (0,0) circle [radius=2pt];
                             \draw (-4pt,0) -- (-2pt,0);
                             \draw (2pt,0) -- (4pt,0);}} \neq \bs{P}_{\tikz[baseline=-0.5ex]{\draw (0,0) circle [radius=2pt];
                             \draw (-4pt,0) -- (-2pt,0);
                             \draw (2pt,0) -- (4pt,0);}}|_{\eps\to-\eps}\,.
                             \eeq
                             
                             We can now easily write down the TRBRs for the one-loop bubble integral. We obtain three non-trivial relations. Two of them are quadratic relations for the maximal cuts:
                             \beq\bsp
\label{maxcutrel1}
\frac{4 \cos (\pi  \varepsilon )}{ (m^2-p^2)^2}&=  \tikz[baseline=-0.5ex]{
		\draw (0,0) circle [radius=5pt];
		\draw (-10pt,0) -- (-5pt,0);
		\draw (5pt,0) -- (10pt,0);
		\draw [red, thick] (0,2pt) -- (0,8pt); 
		\draw [red, thick] (0,-2pt) -- (0,-8pt); 
	}\left(\tikz[baseline=-0.5ex]{
		\draw (0,0) circle [radius=5pt];
		\draw (-10pt,0) -- (-5pt,0);
		\draw (5pt,0) -- (10pt,0);
		\draw [red, thick] (0,2pt) -- (0,8pt); 
		\draw [red, thick] (0,-2pt) -- (0,-8pt); 
	}|_{\varepsilon\rightarrow-\varepsilon}\right)\,,\\
\frac{\sin (\pi  \varepsilon )}{\pi  \varepsilon }&= \tikz[baseline=-0.5ex]{
		\draw (0,0) circle [radius=5pt];
		\draw (0,-5pt) -- (0,-10pt); 
		\draw [red, thick] (0,2pt) -- (0,8pt); 
	} \left(\tikz[baseline=-0.5ex]{
		\draw (0,0) circle [radius=5pt];
		\draw (0,-5pt) -- (0,-10pt); 
		\draw [red, thick] (0,2pt) -- (0,8pt); 
	}|_{\varepsilon\rightarrow-\varepsilon}\right).
\esp\eeq
The third relation is simply the well-known hypergeometric identity in eq.~\eqref{eq:2F1identity}, which is the relation that was needed to express the off-diagonal elements of ${\bs{\check{P}}}_{\tikz[baseline=-0.5ex]{\draw (0,0) circle [radius=2pt];
                             \draw (-4pt,0) -- (-2pt,0);
                             \draw (2pt,0) -- (4pt,0);}}$ as Feynman integrals using eq.~\eqref{eq:P_dual_to_P-1}. Hence, as expected, we see that the TRBRs factorise into two groups: one which allows us to write the off-diagonal elements of the dual period matrix in terms of Feynman integrals, and the other which give quadratic relations between maximal cuts evaluated at $\eps$ and $-\eps$ respectively.

\end{example}

\section{Quadratic relations for maximal cuts}
\label{sec:MaxCutRelations}

\subsection{Quadratic relations maximal cuts from TRBRs}

In the previous section, we have argued that TRBRs lead to quadratic relations among maximal cuts. In this section, we study those relations in detail, and we show how they are connected to the quadratic relations from the literature (see section~\ref{sec.intro}).


Consider a family of Feynman integrals and a sector $\bTheta$, which we assume without loss of generality to be of the form
\beq
\bTheta = (\underbrace{1,\ldots,1}_{m},\underbrace{0,\ldots,0}_{N-m})\,.
\eeq
In this sector, we may pick a basis of master integrals of the form
\beq
\varphi_i = \frac{\rd z_1}{z_1}\wedge\ldots\wedge\frac{\rd z_m}{z_m}\wedge\psi_i\,, \qquad \psi_i = \rd^{h}z\,f_i(\bz)\,,\qquad i=1,\ldots,M_{\bTheta}\,,
\eeq
where $h=N-m$ and  $f(\bz)$ is a polynomial. A basis of co-cycles for the maximal cuts is then 
\beq
\psi_i|_{z_1=\ldots=z_m=0} = \rd^{h}z\,f^{\textrm{m.c.}}_i(\bz)\,,\qquad i=1,\ldots,M_{\bTheta}\,,
\eeq
where for a polynomial $f(z_1,\ldots,z_N)$ we define $f^{\textrm{m.c.}}(z_{m+1},\ldots,z_N) = f(0,\ldots,0,z_{m+1},\ldots,z_N)$.
If we fix a basis of maximal cut cycles $\mathcal{C}_j$, $j=1,\ldots,M_{\bTheta}$, then the period matrix for the maximal cuts has entries
\beq\bsp\label{eq:MaxCut_per_mat}
F_{ij}(\bx,\eps)&\, = \int_{\mathcal{C}_j}\left[\psi_i\prod_{k=1}^K\mathcal{B}_k(\bz)^{\mu_k}\right]_{{z_1=\ldots=z_m=0}} = \int_{\mathcal{C}_j}\rd^{h}z\,f_i^{\textrm{m.c.}}(\bz)\,\prod_{k=1}^K\left(\mathcal{B}_k^{\textrm{m.c.}}(\bz)\right)^{\frac{m_k}{2}+\sigma_k\eps}\,,
\esp\eeq
with $\mu_i$ defined in eq.~\eqref{eq:Baikov_twis_exponent}.


We can define a basis of dual co-cycles using eq.~\eqref{eq:check_to_c}. The dual period matrix is given by eq.~\eqref{eq:checkP_to_minus}, which now takes the form
\beq
\bs{\check{F}}(\bx,\eps) = \bF(\bx,\eps)|_{\mu_i\to -\mu_i} \neq \bF(\bx,-\eps)\,.
\eeq
It is possible to find a dual basis such that the condition $\bs{\check{F}}(\bx,\eps) = \bs{{F}}(\bx,-\eps)$ holds. Indeed, if we pick the dual basis
\beq\label{eq:FI_dual_basis}
\tilde{\psi}_i = \left[\psi_i\prod_{k=1}^K\left(\mathcal{B}_k^{\textrm{m.c.}}(\bz)\right)^{m_k}\right]_c\,,
\eeq
then we find\footnote{We suppress overall proportionality factors for readibility.}
\beq\bsp\label{eq:dual_-eps}
\check{F}_{ij}(\bx,\eps)  &\,= \int_{\mathcal{C}_j}\psi_i \,\prod_{k=1}^K\left(\mathcal{B}_k^{\textrm{m.c.}}(\bz)\right)^{m_k}\,\left[\prod_{k=1}^K\left(\mathcal{B}^{\textrm{m.c.}}_k(\bz)\right)^{\frac{m_k}{2}+\sigma_k\eps}\right]^{-1}\\
&\, = \int_{\mathcal{C}_j}\psi_i \prod_{k=1}^K\left(\mathcal{B}^{\textrm{m.c.}}_k(\bz)\right)^{\frac{m_k}{2}-\sigma_k\eps}\\
&\,= {F}_{ij}(\bx,-\eps)\,.
\esp\eeq

The TRBRs in eq.~\eqref{generalriemann} then imply that we have the quadratic relations
\beq\label{eq:MaxCuts_TRBR_H}
\frac{1}{(2\pi i)^h}\,\bF(\bx,\eps)^T\big(\bC(\bx,\eps)^{-1}\big)^T\bF(\bx,-\eps) = \bH(\eps)\,,
\eeq
or equivalently
\beq\label{eq:MaxCuts_TRBR_C}
\frac{1}{(2\pi i)^h}\,\bF(\bx,\eps)\big(\bH(\eps)^{-1}\big)^T\bF(\bx,-\eps)^T = \bC(\bx,\eps)\,,
\eeq
where $\bH(\eps)$ and $\bC(\bx,\eps)$ are the intersection matrices for twisted cycles and co-cycles. These quadratic relations have the form anticipated in eq.~\eqref{eq:schematic_quad_rel}. 
In the remainder of this section, we will show that the quadratic relations for maximal cuts from CY geometry at $\eps=0$~\cite{Bonisch:2021yfw} and for integrals depending on a single dimensionless variable in dimensional regularisation~\cite{Lee:2018jsw} are special cases of eqs.~\eqref{eq:MaxCuts_TRBR_H} and~\eqref{eq:MaxCuts_TRBR_C}. 


\subsection{Relationship to the quadratic relations from CY geometry}
\label{subsec:quadraticsCY}

Let us start by discussing how the quadratic relations in eq.~\eqref{eq:MaxCuts_TRBR_C} reduce to the quadratic relations for maximal cuts that evaluate to (quasi-)periods of CY geometries that were obtained in \cite{Duhr:2022dxb} using methods from CY geometry. The main difference between eqs.~\eqref{eq:MaxCuts_TRBR_C} and the relations of \cite{Duhr:2022dxb} is that the latter are derived in integer dimensions, where $\eps=0$, without any reference to twisted cohomology theories.  We now argue that exactly the same relations can be obtained from the TRBRs for maximal cuts. 

From eq.~\eqref{eq:MaxCut_per_mat} and~\eqref{eq:dual_-eps} we see that for $\eps=0$, the exponents of the Baikov polynomials are integers or half-integers. In the applications to Feynman integrals associated to CY geometries, the exponents can be chosen to be half integers.\footnote{This may require to integrate out some Baikov parameters first, or equivalently, to start from a specific loop-by-loop representation.} More precisely, in those cases we have
\beq\label{eq:CY_periods}
F_{ij}(\bx,\eps=0) = \int_{\mathcal{C}_j}\frac{\rd^{h}z}{\sqrt{P(\bz)}}\,f_i^{\textrm{m.c.}}(\bz)\,,
\eeq
where $P(\bz)$ is a polynomial. Then the family of varieties defined by the equation $y^2=P(\bz)$ is a family of CY varieties, with holomorphic differential $\Omega=\frac{\rd^{h}z}{\sqrt{P(\bz)}}$. The maximal cuts in eq.~\eqref{eq:CY_periods} can then be identified with the periods and quasi periods, and $\bF(\bx,\eps=0)$ is the period matrix of this family. Quadratic relations for the period matrix then follow from the Hodge structure on the middle cohomology of a family of CY varieties and Griffith transversality for the Gauss-Manin connection describing the variation of the Hodge structure. They take the form~\cite{MR717607}
\beq\label{eq:CY_Griffith}
\bF(\bx)\bSigma \bF(\bx)^T = \bZ(\bx)\,,
\eeq
where $\bZ(\bx)$ is a matrix of rational functions of the kinematic variables (which correspond to the independent moduli of the family of CY varieties), and $\Sigma$ is (the inverse of) the matrix of intersection numbers between generators of the middle homology of the CY variety. These quadratic relations are a direct consequence of the geometric interpretation of the maximal cuts as (quasi-)periods of a family of CY varieties, and their relevance for maximal cuts in integer dimensions was first pointed out in ref.~\cite{Duhr:2022dxb}. Note that the period matrix is the fundamental solution matrix of the Gauss-Manin connection describing the family.

Alternatively, we may view the integrals in eq.~\eqref{eq:CY_periods} as periods of a twisted cohomology theory, even for $\eps=0$. The twist is given by
\beq\label{eq:Phi_CY}
\Phi_{\textrm{CY}} = P(\bz)^{-\frac{1}{2}}\,,
\eeq
and a basis of twisted cycles and co-cycles is $\mathcal{C}_j$ and $\psi_i=\rd^{h}z\,f_i^{\textrm{m.c.}}(\bz)$. Our basis of dual co-cycles is $\tilde{\psi}_i = \big[\rd^{h}z\,f_i^{\textrm{m.c.}}(\bz)\,P^{-1}(\bz)\big]_c$. When written in this basis, the TRBRs agree with the quadratic relations from CY geometry. Hence, the quadratic relations among maximal cuts for Feynman integrals in integer dimensions associated to CY geometries are a special case of the quadratic relations from TRBRs. A detailed discussion about this relation between the TRBRs for a certain hypergeometric function and the Riemann (in-)equality of the related K3 surface can also be found in ref.~\cite{yoshida_hypergeometric_1997}. 

Let us conclude by making a comment of how these relations may extend to Feynman integrals in integer dimensions associated to geometries that are not CY. In fact, in our discussion we only relied on the fact that the integrand in eq.~\eqref{eq:CY_periods} contains a square root, which allows us to define the twist in eq.~\eqref{eq:Phi_CY}, but it did not use any property specific to CY geometries. In particular, the same discussion can be applied to other geometries defined as a double-cover given by an equation of the form $y^2=P(\bz)$, irrespective if they are CY or not. This includes in particular the case of families of hyperelliptic curves, with the maximal cuts computing the periods and quasi-periods of those families of hyperelliptic curves. The periods of Riemann surfaces are well known to satisfy quadratic relations which are the classical Riemann bilinear relations. The quadratic relations from TRBRs between maximal cuts in integer dimensions of Feynman integrals associated to higher-genus hyperelliptic curves (like those considered in refs.~\cite{Huang:2013kh,Hauenstein:2014mda,Marzucca:2023gto}) will therefore agree with the classical Riemann bilinear relations between the periods of those curves. We will explicitly show this in the example of the non-planar crossed box in section \ref{subsec:npBox}


\subsection{Relationship to the quadratic relations for one-variable integrals}
Let us now establish the relationship between the quadratic relations for maximal cuts from TRBRs in eqs.~\eqref{eq:MaxCuts_TRBR_H} and those for maximal cuts depending on a single dimensionless ratio $x$ from ref.~\cite{Lee:2018jsw}. 

The starting point of ref.~\cite{Lee:2018jsw} is the differential equation satisfied by the matrix $\bF(x,\eps)$ of maximal cuts. 
More precisely, ref.~\cite{Lee:2018jsw} conjectures that there is a basis such that
\beq\label{eq:Lee_conjecture}
\rdex \bF(x,\eps) = \mu\,\bS(x)\bF(x,\eps)\,,\qquad \mu\in\{\eps,\tfrac{1}{2}+\eps\}\,,
\eeq
where $\bS(x)^T=\bS(x)$ is a symmetric matrix. From this conjecture follows that $\bF(x,\eps)$ satisfies a set of quadratic relations: 
\begin{itemize}
\item if $\mu=\eps$, then the fundamental solution takes the form of a path-ordered exponential,
\beq\label{eq:Pexp}
\bF(x,\eps) = \mathbb{P}\exp\left[\eps\int_{\gamma}\bS(x)\right]\,,
\eeq
and from the symmetry of $\bS(x)$ it follows that
\beq\label{quad1}
\bF(x,-\eps)^T\bF(x,\eps) = \mathds{1}\,.
\eeq
\item if $\mu=\tfrac{1}{2}+\eps$, then we have the quadratic relation:
\begin{align}
\label{quad2}
    \bs{F}(x,-\varepsilon)^T  \bs{R}(x,\varepsilon)  \bs{F}(x,\varepsilon) = \bs{\widetilde{H}}(\varepsilon)\, , 
\end{align}
where $\bs{\widetilde{H}}(\varepsilon)$ is independent of $x$, and $\bs{R}(x,\varepsilon)$ is the dimension-shift matrix from eq.~\eqref{eq:R_dim_shift}.
\end{itemize}


We now argue that the conjectures about the differential equation and the ensuing quadratic relations in eqs.~\eqref{quad1} and~\eqref{quad2} follow naturally from the framework of twisted cohomology. At the same time, we will see that the derivation from twisted cohomology is independent of the number of kinematic variables, establishing that the results of ref.~\cite{Lee:2018jsw} hold more broadly than just for integrals depending on a single variable.

\subsubsection{The differential equation for maximal cuts} 
We start by discussing how one can derive eq.~\eqref{eq:Lee_conjecture} from the framework of twisted cohomology. As a starting point, let us assume that the period matrix $\bF(\bx,\eps)$ for the maximal cuts satisfies the differential equation (cf.~eq.~\eqref{eq:Gauss-Manin})
\beq
\rdex \bF(\bx,\eps) = \bs{\Omega}(\bx,\eps)\,\bF(\bx,\eps)\,,
\eeq
where $\bs{\Omega}(\bx,\eps)$ is a matrix whose entries are rational one-forms in $\bx$ and rational functions in $\eps$. Before we proceed, we make an assumption: a typical basis of twisted cycles are the chambers in $\mathbb{R}^h$ bounded by zeroes of the twist, i.e., the chambers in $\mathbb{R}^h- \Sigma$, with
\beq
\Sigma=\bigcup_{k=1}^K\big\{\bz\in\mathbb{R}^h : \mathcal{B}_k^{\textrm{m.c.}}(\bz)=0\big\}\,.
\eeq
Likewise, a typical basis of twisted co-cycles then has logarithmic singularities on the boundaries of those chambers, i.e., with logarithmic singularities along $\Sigma$. While such a \emph{logarithmic basis} is expected to exist quite generally, explicit constructions are only known in cases where $\Sigma$ is a union of hyperplanes and at most one hypersurface is defined by a polynomial of degree $h>1$~\cite{Mastrolia:2018uzb}. For instance, if $\Sigma$ is a union of linear hyperplanes only, then it is possible to choose a $\rd\!\log$ basis of the form 
\begin{align}
\label{logbasis}
\varphi_I =\rd \!\log \left(\frac{L_{i_0}}{L_{i_1}}\right)\wedge \rd \!\log\left(\frac{L_{i_1}}{L_{i_2}}\right) \wedge \dots \wedge \rd\! \log \left(\frac{L_{i_{h-1}}}{L_{i_h}}\right) \, , 
\end{align}
where $I=(i_0,i_1,\dots, i_h)$~\cite{aomoto_theory_2011}. While the construction of a logarithmic basis is still an open question in the general case, the results of ref.~\cite{Mastrolia:2018uzb} cover already many interesting cases of maximal cuts, cf.~e.g., ref.~\cite{Bosma:2017ens}. In particular, the examples considered in ref.~\cite{Lee:2018jsw} are  expressible in terms of logarithmic bases.
We will now show the following result:

\begin{theorem}
\label{alphaformtheorem}Consider a twisted cohomology group with twist
\beq \label{formtwist}
\Phi=\prod_{k=1}^L\mathcal{B}_k(\bz)^{a_k\mu}\,, \qquad a_k\in\mathbb{Q}\,,
\eeq
where $\mu$ is a formal variable,
and assume that the period matrix $\bF(\bx,\mu)$ satisfies the differential equation 
\beq
\rdex\bF(\bx,\mu) = \bs{\Omega}(\bx,\mu) \bF(\bx,\mu)\,.
\eeq
If $\bF(\bx,\mu) $ is the period matrix for a choice of a logarithmic basis of twisted co-cyles, then the matrix entering the differential equation takes the form
\beq\label{eq:Omega_mu_S}
\bs{\Omega}(\bx,\mu) = \mu\,\bs{\tilde{\Omega}}(\bx)\,.
\eeq
Moreover, it is possible to pick a logarithmic basis of twisted co-cycles such that $\bs{\tilde{\Omega}}(\bx)=\bs{\tilde{\Omega}}(\bx)^T$ is a symmetric matrix. 
\begin{proof}
Let us fix a basis of twisted cycles $\gamma_j$ and a logarithmic basis of twisted co-cycles $\varphi_i$, $1\le i,j\le M$. The basis of dual co-cycles $\check{\varphi}_i$ can be chosen as in eq.~\eqref{eq:check_to_c}.
The differential equation is:
\beq\bsp
    \rd_{\text{ext}}{F}_{ij}(\bs{x},\mu)&= \rd_{\text{ext}} \langle \varphi_i|\gamma_j]= \langle \rd_{\text{ext}}\varphi_i +\rd_\text{ext} \log \Phi\wedge\varphi_i|\gamma_j]\\
    &= \frac{1}{(2\pi i)^h} \sum_{l,k} \langle \eta_i |\check{\varphi}_{l}\rangle \big(\bC(\bx,\mu)^{-1}\big)_{lk} F_{kj}(\bx,\mu)\,,
    \esp\eeq
     with $C_{ij}(\bx,\mu) = \frac{1}{(2\pi i)^h}\langle \varphi_i|\check{\varphi}_j\rangle$, and we defined
    \beq
    \eta_i = \rd_\text{ext}\varphi_i +\rd_\text{ext} \log \Phi\wedge\varphi_i\, .
\eeq
It follows that the matrix $\bs{\Omega}(\bx,\mu)$ can be cast in terms of intersection numbers as 
\begin{align}
\label{Omegaikwithint}
    \Omega_{ik}(\bx,\mu) = \frac{1}{(2\pi i)^h}\sum_l\langle \eta_i |\check{\varphi}_{l}\rangle \big(\bC(\bx,{\mu})^{-1}\big)_{lk}\, . 
\end{align}
Since intersection numbers are rational functions, the entries of $\bs{\Omega}(\bx,{\mu})$ are rational functions of ${\mu}$.

Since we are working in a logarithmic basis, it follows from a theorem of ref.~\cite{ojm_1200788347} (see  Theorem~\ref{matsumototheorem} in appendix \ref{app:proof}) that 
\beq\label{eq:Matsumoto_thm}
C_{kl}(\bx,\mu) = \frac{1}{(2\pi i)^h}\braket{\varphi_k|\check{\varphi}_l} =  \frac{1}{\mu^h}\,a_{kl}\,,
\eeq
for some rational numbers $a_{kl}$.
Equivalently 
\beq\label{eq:Matsumoto_thm_2}
\bC(\bx,\mu) = \frac{1}{\mu^{h}}\,{\bs{\widetilde{C}}}\,,
\eeq
where $\bs{\widetilde{C}}$ is a constant matrix.
 In appendix \ref{app:proof} we prove that the $\mu$-dependence of the remaining intersection numbers is fully captured by 
\beq\label{eq:eta_phi_scale}
\langle \eta_i |\check\varphi_{l}\rangle =  \frac{(2\pi i)^h}{\mu^{h-1}}\,B_{il}(\bx)\,.
 \eeq
  Thus, we find that the entries of $\bs{\Omega}(\bx,{\mu})$ are linear in $\mu$,
\begin{align}
    \bs{\Omega}(\bx,\mu) =\mu\,\bs{B}(\bx)\bs{\widetilde{C}}^{-1}\, .
\end{align}

Let us now show that $\bs\Omega(\bx,{\mu})$ is symmetric. We start by noting that we can always choose a logarithmic basis $\varphi_i$ such that the matrix $\bC(\bx,\mu)$ of cohomology intersection numbers is diagonal. Indeed, if $\bC(\bx,\mu)$ is not diagonal in the basis $\varphi_i$, then we can change basis to an orthogonal basis $\tilde{\varphi}$ via a Gram-Schmidt procedure\footnote{Gram-Schmidt can be applied here since the cohomology intersection pairing is symmetric and non-degenerate for our choice of (dual) basis.}:
\begin{align}
\label{eq:Gram-Schmidt}
 \tilde{\varphi}_i&=\varphi_i-\sum_{j=1}^{i-1}\frac{\braket{\tilde{\varphi}_j|\check{\varphi}_i}}{\braket{\tilde{\varphi}_j|\check{\tilde{\varphi}}_j}} \tilde{\varphi}_j\,,\qquad i =1,\ldots,M\,,
\end{align}
and we have $\braket{\tilde{\varphi}_i|\check{\tilde{\varphi}}_j}=0$ for $i\neq j$. Let us write the change of basis from $\varphi_i$ to $\tilde\varphi_i$ as 
\beq
\tilde{\varphi}_i =\sum_{j=1}^{M} A_{ij}\varphi_j\, ,
\eeq
transforming the dual basis in the same way. A priori, the matrix $\bs{A}$ depends on $\bx$ and $\mu$. However, from eq.~\eqref{eq:Gram-Schmidt} we know (inductively) that the entries of $\bs{A}$ are built out of ratios of intersection numbers of the form $\braket{\varphi_i|\check{\varphi}_j}$. It follows from eq.~\eqref{eq:Matsumoto_thm} that these ratios, and therefore the entries of the matrix $\bs{A}$, are constant in both $\mu$ and $\bx$. Hence, the basis is a linear combination with constant coefficients of logarithmic forms. The matrix of intersection numbers in this basis is 
\beq
\bA\bC(\bx,\eps)\bA^T = {\mu^{-h}}\,\bA\bs{\widetilde{C}}\bA^T ={\mu^{-h}}\,\diag(\chi_1,\ldots,\chi_M)\,,
\eeq
with the constants $\chi_i$ given by
\beq
\chi_i = (2\pi i)^{-h}\,\mu^h\,\braket{\tilde\varphi_i|\check{\tilde{\varphi}}_i} \,.
\eeq
The period matrix in the basis $\tilde{\varphi}_i$ is $\bs{\widetilde{P}}(\bx,\mu) = \bs{A}\bP(\bx,\mu)$, and by construction the dual period matrix is $\bs{\check{\widetilde{P}}}(\bx,\mu) =  \bs{\widetilde{P}}(\bx,-\mu)$. Since $\bA$ is constant, the differential equation for $\bs{\widetilde{P}}(\bx,\mu)$ is still in $\mu$-factorised form,
\begin{align}
\rdex\bs{\widetilde{P}}(\bx,\mu)=\mu \bA \bs{\widetilde\Omega}(\bx)\bA^{-1}
\bs{\widetilde{P}}(\bx,\mu)\,.
\end{align}
At this point we can even change basis to an orthonormal basis, e.g., by definining
\beq\bsp
\psi_i = \sqrt{(2\pi i)^{-h}\,\mu^h\,\chi_i^{-1}}\,\tilde{\varphi_i}\textrm{~~~and~~~} \check{\psi}_i = \sqrt{(2\pi i)^{-h}\,\mu^h\,\chi_i^{-1}}\check{\tilde{\varphi}}_i\,.
\esp\eeq
It is easy to see that in this basis the intersection matrix is unity, $\braket{\psi_i|\check{\psi}_j} = \delta_{ij}$.
Thus, we can find bases for the twisted cohomology group and its dual that are at the same time orthonormal and have a differential equation in $\mu$-factorised form. We therefore assume from now on without loss of generality that our original bases $\varphi_i$ and $\check{\varphi}_i$ have this property.

We can now obtain the differential equation for the dual period matrix $\bs{\check{{P}}}$ in two ways. On the one hand, it can be obtained by replacing $\mu$ by $-\mu$ in the differential equation for the period matrix $\bs{{P}}$: 
\begin{align}
\label{eq:eqproof1}
 \rdex\bs{\check{{P}}}(\bx,\mu) =  \rdex\bs{{{P}}}(\bx,-\mu) = -\mu{\bs{\widetilde\Omega}}(\bx)\bs{{{P}}}(\bx,-\mu) = -\mu{\bs{\widetilde\Omega}}(\bx)\bs{\check{{P}}}(\bx,\mu)\,.
 \end{align}
%
On the other hand, since the matrix of intersection numbers is unity, eq.~\eqref{eq:dC} implies 
\beq\label{eq:eqproof2}
 \rdex\bs{\check{P}}(\bx,\mu) = - \bs{\Omega}(\bx,\mu)^T\bs{\check{{P}}}(\bx,\mu) = -\mu\bs{\widetilde\Omega}(\bx)^T \bs{\check{{P}}}(\bx,\mu).
 \eeq
 Comparing eqs.~\eqref{eq:eqproof1} and~\eqref{eq:eqproof2}, we see that $\bs{\widetilde\Omega}(\bx)^T = \bs{\widetilde\Omega}(\bx)$.
\end{proof}
\end{theorem}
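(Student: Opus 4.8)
The plan is to reduce the whole statement to a power-counting argument in the formal variable $\mu$ for cohomology intersection numbers, exploiting that the basis is logarithmic. First I would rewrite the connection matrix $\bs{\Omega}(\bx,\mu)$ entirely in terms of intersection numbers. Differentiating a period $F_{ij}=\langle\varphi_i|\gamma_j]$ with respect to the external parameters moves the derivative onto the co-cycle, producing $\eta_i=\rdex\varphi_i+\rdex\!\log\Phi\wedge\varphi_i$; inserting the cohomology completeness relation of eq.~\eqref{eq:completeness} and choosing the dual basis as in eq.~\eqref{eq:check_to_c} yields
\beq
\Omega_{ik}(\bx,\mu)=\frac{1}{(2\pi i)^h}\sum_l\langle\eta_i|\check{\varphi}_l\rangle\,\big(\bC(\bx,\mu)^{-1}\big)_{lk}\,,
\eeq
with $C_{kl}=(2\pi i)^{-h}\langle\varphi_k|\check{\varphi}_l\rangle$. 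Since intersection numbers are rational, $\bs{\Omega}$ is rational in $\mu$, and the task becomes controlling how its two ingredients scale with $\mu$.

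Second, I would establish the scaling of each factor. Because every exponent in $\Phi=\prod_k\mathcal{B}_k^{a_k\mu}$ is proportional to $\mu$, a logarithmic self-pairing is evaluated by $h$ successive residues, each contributing one inverse power of $\mu$; this is the content of the theorem of ref.~\cite{ojm_1200788347}, giving $\bC(\bx,\mu)=\mu^{-h}\,\bs{\widetilde{C}}$ with $\bs{\widetilde{C}}$ constant. For the numerator one observes that $\eta_i$ differs from a logarithmic form by a single factor of $\rdex\!\log\Phi\propto\mu$ (or one external derivative), so that one of the residues no longer produces an inverse power of $\mu$, whence $\langle\eta_i|\check{\varphi}_l\rangle=(2\pi i)^h\,\mu^{-(h-1)}\,B_{il}(\bx)$; I would defer this local analysis to appendix~\ref{app:proof}. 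Combining the two scalings, each entry of $\bs{\Omega}$ carries a net power $\mu^{-(h-1)}\cdot\mu^{h}=\mu$, so that $\bs{\Omega}(\bx,\mu)=\mu\,\bs{B}(\bx)\bs{\widetilde{C}}^{-1}=:\mu\,\bs{\widetilde{\Omega}}(\bx)$, which is eq.~\eqref{eq:Omega_mu_S}.

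For the symmetry I would first bring the intersection matrix to the identity without spoiling the logarithmic structure. Applying Gram--Schmidt to the $\varphi_i$ with respect to the (symmetric, non-degenerate) cohomology pairing produces an orthogonal basis; crucially, since $\bC\propto\mu^{-h}$, all Gram--Schmidt coefficients are ratios of intersection numbers and hence constant in $\mu$ and $\bx$, so the rotated basis is a constant linear combination of logarithmic forms and the differential equation stays $\mu$-factorised. A rescaling then yields an orthonormal basis with $\langle\psi_i|\check{\psi}_j\rangle=\delta_{ij}$. In this basis, eq.~\eqref{eq:dC} with constant $\bC$ forces $\bs{\check{\Omega}}=-\bs{\Omega}^T=-\mu\,\bs{\widetilde{\Omega}}^T$, while the dual period matrix, built from eq.~\eqref{eq:check_to_c}, satisfies $\bs{\check{P}}(\bx,\mu)=\bP(\bx,-\mu)$ (cf.~eq.~\eqref{eq:non_rel_period_mat}) and hence obeys $\rdex\bs{\check{P}}=-\mu\,\bs{\widetilde{\Omega}}(\bx)\bs{\check{P}}$, i.e.\ $\bs{\check{\Omega}}=-\mu\,\bs{\widetilde{\Omega}}$. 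Equating the two expressions gives $\bs{\widetilde{\Omega}}^T=\bs{\widetilde{\Omega}}$.

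The main obstacle is the power-counting for the numerator, i.e.\ showing that $\langle\eta_i|\check{\varphi}_l\rangle$ loses exactly one inverse power of $\mu$ relative to $\bC$: the scaling $\bC\propto\mu^{-h}$ is supplied by an existing theorem, but the behaviour of the non-logarithmic form $\eta_i$ requires a careful local (residue) analysis near the boundary divisors of $\Sigma$. A secondary point worth spelling out is that Gram--Schmidt genuinely preserves the logarithmic property; this rests entirely on the $\mu$- and $\bx$-independence of the coefficients, which is itself a corollary of the same scaling result, so the two difficulties are really one.
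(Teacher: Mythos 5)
Your proposal is correct and follows essentially the same route as the paper: it expresses $\bs{\Omega}$ through intersection numbers via the completeness relation, invokes the $\mu^{-h}$ scaling of $\bC$ for logarithmic bases together with the deferred $\mu^{-(h-1)}$ scaling of $\langle\eta_i|\check{\varphi}_l\rangle$ to get $\mu$-factorisation, and then derives the symmetry by Gram--Schmidt orthonormalisation (with coefficients constant in $\mu$ and $\bx$) and comparison of the two expressions for $\rdex\bs{\check{P}}$. No substantive differences from the paper's argument.
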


Let us now discuss how Theorem~\ref{alphaformtheorem} implies the conjecture in eq.~\eqref{eq:Lee_conjecture}. We start from the Baikov representation in eq.~\eqref{eq_Baikov}, where we have a single Baikov polynomial. Note that if we had worked in a loop-by-loop approach, we would have more Baikov polynomials. The differential equations, however, are independent of the representation used for the integrals, so that every conclusion reached using one representation also holds for the others.
Next, we can choose a basis of twisted co-cycles such that the entries of the period matrix are
\beq\bsp\label{eq:MaxCut_per_mat}
F_{ij}(\bx,\eps)&\, = \int_{\mathcal{C}_j}\rd^{h}z\,f_i^{\textrm{m.c.}}(\bz)\,\left(\mathcal{B}^{\textrm{m.c.}}(\bz)\right)^{-\mu}\,,
\esp\eeq
where $\mu\in\{\eps,\tfrac{1}{2}+\eps\}$. It is easy to see that such a basis always exists. We now interpret these integrals as periods of some twisted cohomology theory with twist
\beq
\Phi = \left(\mathcal{B}^{\textrm{m.c.}}(\bz)\right)^{-\mu}\,.
\eeq
We can then choose a logarithmic basis of co-cycles. Let us denote the period matrix in that basis by $\bF_{\textrm{log}}(\bx,\eps)$. In that basis the hypotheses of the theorem are satisfied, and so we have
\beq
\rdex\bF_{\textrm{log}}(\bx,\eps) = -\mu\,\bs{\tilde{\Omega}}(\bx)\bF_{\textrm{log}}(\bx,\eps)\,,  \textrm{~~~~with~~~~} \bs{\tilde{\Omega}}(\bx)^T=\bs{\tilde{\Omega}}(\bx)\,.
\eeq
Hence, if we pick $\bS(\bx) = -\bs{\tilde{\Omega}}(\bx)$ we see that there is a basis such that the conjecture in eq.~\eqref{eq:Lee_conjecture} is satisfied. We see that the conjecture of ref.~\cite{Lee:2018jsw} about the structure of the differential equations for the maximal cuts naturally follows from the framework of twisted cohomology and the existence of a logarithmic basis. At the same time, while the results of ref.~\cite{Lee:2018jsw} are restricted to cases where the integral depends on a single dimensionless variable, we see that the results naturally extend to multi-variable cases.

\subsubsection{Quadratic relations for maximal cuts} \label{quadrelmaxcut}

Having established that the conjectures of ref.~\cite{Lee:2018jsw} naturally follow from twisted cohomology, let us now explain how they give rise to quadratic relations for maximal cuts. The argument follows the same lines as in ref.~\cite{Lee:2018jsw}, but we repeat the arguments here, to show how they naturally arise from twisted cohomology, and to show that they are not restricted to one-variable cases. %
From what we just discussed, we know that there is a (logarithmic) basis in which the period matrix $\bF_{\textrm{log}}(\bx,\eps)$ satisfies the differential equation
\beq\label{eq:MC_DEQ_mu}
\rdex\bF_{\textrm{log}}(\bx,\eps) =\mu\,\bS(\bx)\bF_{\textrm{log}}(\bx,\eps)\,,\qquad \mu\in\{\eps,\tfrac{1}{2}+\eps\}\,,\qquad \bS(\bx)^T=\bS(\bx)\,.
\eeq
We now discuss the two possible choices for $\mu$ in turn.

\paragraph{Quadratic relations for $\bm{\mu=\eps}$.} For $\mu=\eps$, a fundamental solution matrix of eq.~\eqref{eq:MC_DEQ_mu} can be chosen as the path-ordered exponential (cf.~eq.~\eqref{eq:Pexp}):
\beq\label{eq:MC_Pexp}
 \bU(\bx,\eps) = \mathbb{P}\exp\left[\eps\int_{\bx_0}^{\bx}{\bS}(\bx')\right] = \big(\bU(\bx,-\eps)^T\big)^{-1}\,,
 \eeq 
 where the last step follows from the symmetry of $\bS(\bx)$~\cite{Lee:2018jsw}. Note that, if $\bx_0$ coincides with one of the logarithmic singularities in the integrand, then we use a tangential base-point regularisation prescription, cf.,~e.g.,~refs.~\cite{DeligneTangential,Brown:mmv}. The path-ordered exponential in eq.~\eqref{eq:MC_Pexp} satisfies the obvious quadratic relation:
 \beq
  \bU(\bx,\eps)^T \bU(\bx,-\eps)  =\mathds{1}\,.
  \eeq
  The matrix of maximal cuts in the logarithmic basis is related to the path-ordered exponential via
  \beq
  \bF_{\textrm{log}}(\bx,\eps) = \bU(\bx,\eps)\,\bF_0(\eps)\,,
  \eeq
  where $\bF_0(\eps)$ is independent of $\bx$. Hence, we have the the quadratic relation
  \beq
  \bF_{\textrm{log}}(\bx,\eps)^T\bF_{\textrm{log}}(\bx,-\eps) = \bF_0(\eps)^T\bF_0(-\eps)\,.
  \eeq
 We see that we recover a quadratic relation of the type~\eqref{eq:MaxCuts_TRBR_H} with $\bC(\bx,\eps) = (2\pi i)^{-h}\mathds{1}$ and $\bH(\eps) = \bF_0(\eps)^T\bF_0(-\eps)$. In particular, we see that $\bH(\eps)$ is independent of $\bx$. The fact that the intersection matrix between co-cycles is unity could have been anticipated, because it follows from the proof of Theorem~\ref{alphaformtheorem}.
 We note that, while we relied on the specific basis to derive this specific form of the quadratic relation, we can obtain a quadratic relation in any other basis. Indeed, consider another basis of twisted co-cycles (but with the same basis of twisted cycles for simplicity), whose period matrix $\bF(\bx,\eps)$ is related to $\bF_{\textrm{log}}(\bx,\eps)$ via 
 \beq
 \bF_{\textrm{log}}(\bx,\eps) = \bT(\bx,\eps)\bF(\bx,\eps)\,.
 \eeq
 Then $\bF(\bx,\eps)$ satisfies the quadratic relation
 \beq
 \frac{1}{(2\pi i)^h}\,\bF(\bx,\eps)^T\big(\bC(\bx,\eps)^{-1}\big)^T\bF(\bx,-\eps) =\bF_0(\eps)^T\bF_0(-\eps) =  \bH(\eps)\,,
 \eeq
 with $\bC(\bx,\eps) = (2\pi i)^{-h}\,\bT(\bx,\eps)^{-1}\big(\bT(\bx,-\eps)^{-1}\big)^T$ .

\paragraph{Quadratic relations for $\bs{\mu}=\bs{\varepsilon+\frac{1}{2}}$.}

We choose some bases  ${\gamma_i}$ and ${\check{\gamma_i}}$, $1\le i\le M$,  for the contours and their duals. The homology intersection matrix does not depend on external parameters, but only on $\varepsilon$. We can also pick an orthonormal basis $\varphi_i$ and $\check{\varphi}_i$ for the twisted co-cycles and their duals. In those bases, the fundamental solution matrix in $D=d-2\eps$ dimensions is 
\beq
\bF_{\log}(\bx,\eps) = \bP(\bx,\mu) = \bP\big(\bx,\tfrac{1}{2}+\eps)\,.
\eeq
Then $\bF_{\log}(\bx,\eps-1)$ is the fundamental solution matrix of maximal cuts computed in $D+2$ dimensions, and it is related to $\bF_{\log}(\bx,\eps)$ via the dimension-shift matrix (cf.~eq.~\eqref{eq:R_dim_shift}),
\beq\label{eq:proof_1/2_1}
\bF_{\log}(\bx,\eps-1) = \bR(\bx,\eps)\bF_{\log}(\bx,\eps)\,.
\eeq
On the other hand, since the exponents of the twist are $\mu=\tfrac{1}{2}+\eps$, it is easy to see that
\beq
F_{\log,ij}(\bx,\eps-1) = \langle \varphi_i^+ |\gamma_j] \,,
\eeq
with $\varphi_i^+ = \varphi_i\,\cB^{\textrm{m.c.}}(\bx)$. Since the intersection matrix is $\bC(\bx,\eps) = (2\pi i)^{-h}\mathds{1}$, we have
\beq\bsp\label{eq:proof_1/2_2}
F_{\log,ij}(\bx,\eps-1) 
&\,= \sum_{k=1}^M\braket{\varphi_i ^+|\check{\varphi}_k}\,\langle \varphi_k|\gamma_j]  = \sum_{k=1}^M\braket{\varphi_i ^+|\check{\varphi}_k}\,F_{\log,kj}(\bx,\eps)\,,
\esp\eeq
Comparing eq.~\eqref{eq:proof_1/2_2} to eq.~\eqref{eq:proof_1/2_1}, we see that the dimension shift matrix can be written as a matrix of intersection numbers:
\beq\label{eq:R_intersection}
{R}_{ij}(\bx,\eps)=\braket{\varphi_i ^+|\check{\varphi}_j}\,.
\eeq

We know that we can pick a basis of dual co-cycles such that $\bs{\check{P}}(\bx,\mu) = \bP(\bx,-\mu)$. This gives
\beq\bsp
\bF_{\log}(\bx,-\eps) &\,= \bR(\bx,-\eps)^{-1}\bF_{\log}(\bx,-\eps-1)\\
&\, =\bR(\bx,-\eps)^{-1} \bP\big(\bx,-\tfrac{1}{2}-\eps)\\
&\,=\bR(\bx,-\eps)^{-1} \bP\big(\bx,-\mu)\\
&\,=\bR(\bx,-\eps)^{-1} \bs{\check{P}}(\bx,\mu)\,.
\esp\eeq
The TRBRs in this basis, where $\bC(\bx,\eps) = (2\pi i)^{-h}\mathds{1}$, take the form
\beq
\bH(\eps) = \bP(\bx,\mu)^T\bP(\bx,-\mu) = \bF_{\log}(\bx,\eps)^T\bR(\bx,-\eps)\bF_{\log}(\bx,-\eps)\,,
\eeq
which agrees with eq.~\eqref{quad2} upon identifying $\widetilde{\bH}(\eps)$ with $\bH(-\eps)$ and replacing $\eps$ by $-\eps$. Note that, while the quadratic relations were derived in the specific basis in which the intersection matrix is unity, we can now rotate the basis to any other basis, and obtain quadratic relations in that basis. We note, however, that the identification \eqref{eq:R_intersection} is specific to the basis where the intersection matrix is unity.

\paragraph{Conclusion.} To summarise, we see that all the conjectures and quadratic relations for maximal cuts of ref.~\cite{Lee:2018jsw} have a very natural interpretation and derivation in the context of twisted cohomology theories. In particular, conjectures about the special form of the differential equations for maximal cuts follow easily and naturally from the existence of logarithmic bases, and the appearance of the dimension-shift matrix in the quadratic relations is a consequence of the fact that it can be interpreted as a matrix of intersection numbers, cf.~eq.~\eqref{eq:R_intersection}.

\section{Examples}
\label{sec:examples}
\subsection{The unequal-mass sunrise integral}
\label{subsec:UneqSunrise}

As a first example, we consider the unequal-mass sunrise integral family in $D=2-2\varepsilon$ dimensions defined in eq.~\eqref{sunexstart}.
Quadratic relations satisfied by the maximal cuts of the equal-mass sunrise have been investigated in ref.~\cite{Lee:2018jsw}. We now argue that these quadratic relations can be extended to the unequal-mass case. One of the main differences is that, in the unequal-mass case there are four master integrals for the maximal cuts, while there are only two in the equal-mass case.

                             Using the
loop-by-loop approach, we obtain the Baikov polynomial in one internal variable $z$, and we define the twist 
\begin{align}
    \Phi_{{\tikz[baseline=-0.5ex]{\draw (0,0) circle [radius=2pt];
                             \draw (-4pt,0) -- (4pt,0);}}}(z) =
z^\varepsilon \left[(z-\lambda_1)(z-\lambda_2)
(z-\lambda_3)(z-\lambda_4)\right]^{
-\frac{1}{2}-\varepsilon}\, ,
\end{align}
where
\beq\bsp
 \lambda_1&\,=-(m_1-m_2)^2\,,\\
 \lambda_2&\,=-(m_1+m_2)^2\,,\\
 \lambda_3&\,=-(m_3-\sqrt{p^2})^2\,,\\
  \lambda_4&\,=-(m_3+\sqrt{p^2})^2\, .
\esp\eeq
This twist defines a twisted cohomology group $H_{\text{dR}}^n (X, \nabla_{\!\tikz[baseline=-0.5ex]{\draw (0,0) circle [radius=2pt]; \draw (-4pt,0) -- (4pt,0);}})$ with $X=\mathbb{C}-\{0,\lambda_1,\lambda_2,\lambda_3,\lambda_4
\}$. Note that in the limit $\varepsilon\rightarrow 0$, the Baikov polynomial reduces to the polynomial defining the elliptic curve attached to the sunrise integral, defined as the locus in $\mathbb{C}^2$ defined by
\begin{align}\label{curvesunrise}
y^2=p_{\tikz[baseline=-0.5ex]{\draw (0,0) circle [radius=2pt]; \draw (-4pt,0) -- (4pt,0);}}(z)= (z-\lambda_1)(z-\lambda_2)(z-\lambda_3)(z-\lambda_4)\, . 
\end{align}

Let us now discuss our choice of master integrals, or equivalently our choice of basis for $H_{\text{dR}}^n (X, \nabla_{\!\tikz[baseline=-0.5ex]{\draw (0,0) circle [radius=2pt]; \draw (-4pt,0) -- (4pt,0);}})$. A possible choice would be a logarithmic basis (and its regularised version), which leads to a particularly simple intersection matrix by Theorem~\ref{alphaformtheorem}.  
However, our basis choice is inspired from a geometrical viewpoint for the elliptic curve defined by eq.~\eqref{curvesunrise}, where the standard basis consists of Abelian differentials of the first, third and second kind. This will allow us to easily obtain analytic expressions for some of the maximal cuts in terms of complete elliptic integrals of first and second kind.
Explicitly, we choose the following basis for $H_{\text{dR}}^n (X, \nabla_{\!\tikz[baseline=-0.5ex]{\draw (0,0) circle [radius=2pt]; \draw (-4pt,0) -- (4pt,0);}})$:
\beq\bsp
\label{basissun}
\varphi_1^{{\tikz[baseline=-0.5ex]{\draw (0,0) circle [radius=2pt];\draw (-4pt,0) -- (4pt,0);}}}&\,=\rd z
\,,\\
 \varphi_2^{{\tikz[baseline=-0.5ex]{\draw (0,0) circle [radius=2pt];\draw (-4pt,0) -- (4pt,0);}}}&\,=\left(z^2 - \frac{s_1}{2} z + \frac{s_2}{6}\right)\rd z\,,\\
   \varphi_3^{{\tikz[baseline=-0.5ex]{\draw (0,0) circle [radius=2pt];\draw (-4pt,0) -- (4pt,0);}}}&\,=
z \,\rd z\,,\\
\varphi_4^{{\tikz[baseline=-0.5ex]{\draw (0,0) circle [radius=2pt];\draw (-4pt,0) -- (4pt,0);}}}&\,=\frac{\rd z}{z}\, ,
\esp\eeq
with $s_i$ elementary symmetric polynomials in the branch points $\lambda_1,\dots ,\lambda_4$. We pick a basis of dual co-cycles according to eq.~\eqref{eq:FI_dual_basis}:
\begin{align}
\check{\varphi}_i^{{\tikz[baseline=-0.5ex]{\draw (0,0) circle [radius=2pt];\draw (-4pt,0) -- (4pt,0);}}}=\left[\frac{\varphi_i^{{\tikz[baseline=-0.5ex]{\draw (0,0) circle [radius=2pt];\draw (-4pt,0) -- (4pt,0);}}}}{p_{\tikz[baseline=-0.5ex]{\draw (0,0) circle [radius=2pt]; \draw (-4pt,0) -- (4pt,0);}}(z)}\right]_c
\,,\qquad i=1,\ldots,4\,.
\end{align}

Next, let us discuss our choice of maximal cut contours, or equivalently our choice of basis for the twisted homology group and its dual.
For the dual homology basis we choose cycles supported on
\beq\bsp
\label{dualbas}
    \check{\gamma}_1^{{\tikz[baseline=-0.5ex]{\draw (0,0) circle [radius=2pt];\draw (-4pt,0) -- (4pt,0);}}}&\,=[\lambda_1,\lambda_2] \,,\\ 
    \check{\gamma}_2^{{\tikz[baseline=-0.5ex]{\draw (0,0) circle [radius=2pt];\draw (-4pt,0) -- (4pt,0);}}}&\,=[\lambda_2,\lambda_3]  \,,\\ 
    \check{\gamma}_3^{{\tikz[baseline=-0.5ex]{\draw (0,0) circle [radius=2pt];\draw (-4pt,0) -- (4pt,0);}}}&\,= [\lambda_1,\lambda_2]+[\lambda_3,\lambda_4]\,,\\
    \check{\gamma}_4^{{\tikz[baseline=-0.5ex]{\draw (0,0) circle [radius=2pt];\draw (-4pt,0) -- (4pt,0);}}}&\,= [\lambda_4,0]\, . 
\esp\eeq
This basis is depicted in figure~\ref{fig.cyclesell}, together with the canonical cycles $a$ and $b$ of the elliptic curve defined by eq.~\eqref{curvesunrise}. The basis of the twisted homology group is then supported on the regularised version of the dual cycles (cf.~eq.~\eqref{eq:check_to_h}):
\beq
\gamma_i^{{\tikz[baseline=-0.5ex]{\draw (0,0) circle [radius=2pt];\draw (-4pt,0) -- (4pt,0);}}} = \big[\check{\gamma}_i^{{\tikz[baseline=-0.5ex]{\draw (0,0) circle [radius=2pt];\draw (-4pt,0) -- (4pt,0);}}}\big]_{\textrm{reg}}\,,\qquad i =1\,\ldots,4\,.
\eeq

\begin{figure}[!t]
    \centering
    \includegraphics[align=c, scale=.4]{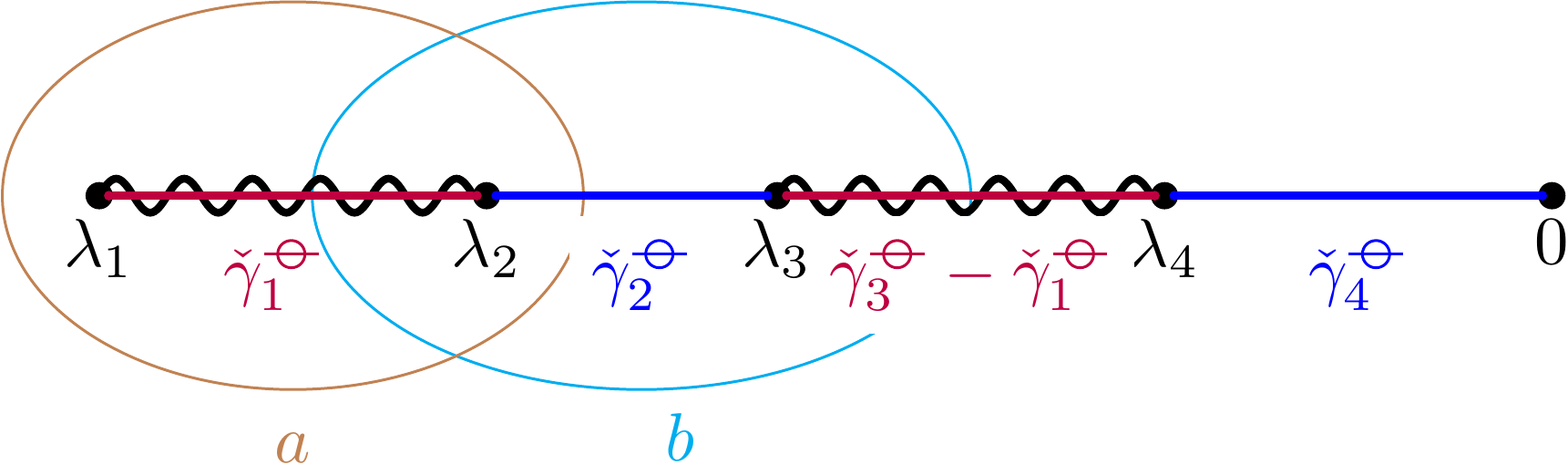}
 \caption{Our choice of (dual) twisted cycles is inspired by the geometric picture for the elliptic curve defined by eq.~\eqref{curvesunrise}. The two canonical cycles on the elliptic curve, $a$ and $b$, inspire our choice of the dual twisted cycles $\check{\gamma}_1$ and $\check{\gamma}_2$. In the $\varepsilon \rightarrow 0$ limit, the cycle $\check{\gamma}_3$ corresponds to the cycle around infinity, so that all differentials without residue integrate to zero, simplifying the period matrix. For generic $\varepsilon$ the twist has an additional factor $z^\varepsilon$ which requires us to add the dual twisted cycle $\check{\gamma}_4$.}
    \label{fig.cyclesell}
\end{figure}

With these choices for the bases of the (co-)homology groups and their duals, we obtain the following expressions for the intersection matrices (with $\bx = (p^2,m_1^2,m_2^2,m_3^2)$):
\begin{align} \label{csun}
\bC_{{\tikz[baseline=-0.5ex]{\draw (0,0) circle [radius=2pt];\draw (-4pt,0) -- (4pt,0);}}}(\bx,\eps)&= \left(
\begin{array}{cccc}
 0 & \frac{1}{-3 \varepsilon -1} & 0 & 0 \\
 \frac{1}{1-3 \varepsilon } & -\frac{\varepsilon  (\lambda_2+\lambda_2+\lambda_3+\lambda_4)^2}{12 \left(9 \varepsilon ^2-1\right)} & \frac{\lambda_1+\lambda_2+\lambda_3+\lambda_4}{6-18 \varepsilon } & 0 \\
 0 & -\frac{\lambda_1+\lambda_2+\lambda_3+\lambda_4}{18 \varepsilon +6} & -\frac{1}{3 \varepsilon } & 0 \\
 0 & 0 & 0 & -\frac{1}{\lambda_1 \lambda_2 \lambda_3 \lambda_4 \varepsilon } \\
\end{array}
\right)\,,\\
\bH_{{\tikz[baseline=-0.5ex]{\draw (0,0) circle [radius=2pt];\draw (-4pt,0) -- (4pt,0);}}}(\eps)&=\left(
\begin{array}{cccc}
 i \tan (\pi  \varepsilon ) & -1+\frac{1}{1+e^{2 i \pi  \varepsilon }} & i \tan (\pi  \varepsilon ) & 0 \\
 \frac{1}{1+e^{2 i \pi  \varepsilon }} & i \tan (\pi  \varepsilon ) & -i \tan (\pi  \varepsilon ) & 0 \\
 i \tan (\pi  \varepsilon ) & -i \tan (\pi  \varepsilon ) & 2 i \tan (\pi  \varepsilon ) & -1+\frac{1}{1+e^{2 i \pi  \varepsilon }} \\
 0 & 0 & \frac{1}{1+e^{2 i \pi  \varepsilon }} & i \csc (2 \pi  \varepsilon ) \\
\end{array}
\right)\,.
\end{align}
The period matrix $\bP_{{\tikz[baseline=-0.5ex]{\draw (0,0) circle [radius=2pt];\draw (-4pt,0) -- (4pt,0);}}}(\bx,\eps)$ can be expressed in terms of Lauricella functions, 
\beq\bsp\label{eq:lauricella}
    F_D^{(n)} (a,b_1,\ldots,&b_n;c;x_1,\ldots,x_n) =\\
    =&\, \frac{\Gamma(c)}{\Gamma(a)\Gamma(c-a)} \int_0^1 \rd t\, t^{a-1} (1-t)^{c-a-1}\prod_{i=1}^n(1-x_it)^{-b_i},
\esp\eeq
valid for $\mathrm{Re}(c)>\mathrm{Re}(a)>0$,
and the dual period matrix is given by 
\beq
\check{\bP}_{{\tikz[baseline=-0.5ex]{\draw (0,0) circle [radius=2pt];\draw (-4pt,0) -- (4pt,0);}}}(\bx,\eps) = \bP_{{\tikz[baseline=-0.5ex]{\draw (0,0) circle [radius=2pt];\draw (-4pt,0) -- (4pt,0);}}}(\bx,-\eps)\,.
\eeq

The TRBRs take the form:
\beq\label{eq:sunrise_trbr}
\frac{1}{2\pi i}\bP_{{\tikz[baseline=-0.5ex]{\draw (0,0) circle [radius=2pt];\draw (-4pt,0) -- (4pt,0);}}}(\bx,\eps)^T\big(\bC_{{\tikz[baseline=-0.5ex]{\draw (0,0) circle [radius=2pt];\draw (-4pt,0) -- (4pt,0);}}}(\bx,\eps)^{-1}\big)^T\bP_{{\tikz[baseline=-0.5ex]{\draw (0,0) circle [radius=2pt];\draw (-4pt,0) -- (4pt,0);}}}(\bx,-\eps) = \bH_{{\tikz[baseline=-0.5ex]{\draw (0,0) circle [radius=2pt];\draw (-4pt,0) -- (4pt,0);}}}(\eps)\,.
\eeq
We have checked numerically at various points that eq.~\eqref{eq:sunrise_trbr} holds. Since the entries of the period matrix are multi-valued functions, care is needed when evaluating the period matrix. 
We fix $\lambda_1<\lambda_2<\lambda_3<\lambda_4<0$ in the following. We note that the integral defining $P_{{\tikz[baseline=-0.5ex]{\draw (0,0) circle [radius=2pt];\draw (-4pt,0) -- (4pt,0);}},44}(\bx,\eps)$ is only defined for $\varepsilon>0$, while $\check{P}_{{\tikz[baseline=-0.5ex]{\draw (0,0) circle [radius=2pt];\draw (-4pt,0) -- (4pt,0);}},44}(\bx,\eps)$ is defined for $\varepsilon <0$, due to the occurrence of a non-integrable singularity at $z=0$ otherwise. In the TRBRs, we have to evaluate the expressions at a single value for $\varepsilon$. In order to circumvent this issue, we have performed a careful analytic continuation of the Lauricella function to the entire complex plane. After this analytic continuation, we find that that eq.~\eqref{eq:sunrise_trbr} holds numerically.

It is interesting to investigate eq.~\eqref{eq:sunrise_trbr} for $\epsilon=0$. Indeed, we expect that in the limit $\eps\to0$, four entries in the period matrix reduce to the periods and quasi-periods of the elliptic curve defined by eq.~\eqref{curvesunrise}. The latter are related by the well-known Legendre relation between elliptic integrals of the first and second kind, 
\begin{align}\label{eq:Legendre}
  \K(1-\lambda) \E(\lambda)+ \E(1-\lambda)\K(\lambda)-\K(\lambda)\K(1-\lambda)=\frac{\pi}{2}\,,
\end{align}
with 
\beq
\K(\lambda) = \int_{0}^1\frac{\rd t}{\sqrt{(1-t^2)(1-\lambda t^2)}}\qquad \textrm{~~and~~}
\E(\lambda) = \int_{0}^1\rd t\,\sqrt{ \frac{1-\lambda t^2}{1-t^2}}\,.
\eeq
Since the period matrix is a $4\times4$ matrix, it is interesting to see how the Legendre relation is recovered, and if there are any new relations at leading order in $\eps$ that go beyond it. We start by expanding eq.~\eqref{eq:sunrise_trbr} in $\eps$, and we observe that $H_{{\tikz[baseline=-0.5ex]{\draw (0,0) circle [radius=2pt];\draw (-4pt,0) -- (4pt,0);}},44}(\eps) = \mathcal{O}\left({\varepsilon^{-1}}\right)$. In order to obtain a finite limit, we introduce an additional rotation of the homology basis:
\begin{align}
    \bT_{{\tikz[baseline=-0.5ex]{\draw (0,0) circle [radius=2pt];\draw (-4pt,0) -- (4pt,0);}}}(\varepsilon)=\left(
\begin{smallmatrix}
 1& 0& 0& 0 \\
0 & 1 & 0 & 0 \\
 0 & 0 &1 & 0 \\
 0 & 0 &0 & \varepsilon \\
\end{smallmatrix}
\right)\,,
\end{align}
and we define 
\beq\bsp
\widetilde{\bP}_{{\tikz[baseline=-0.5ex]{\draw (0,0) circle [radius=2pt];\draw (-4pt,0) -- (4pt,0);}}}(\bx,
\eps)&\,=\bP_{{\tikz[baseline=-0.5ex]{\draw (0,0) circle [radius=2pt];\draw (-4pt,0) -- (4pt,0);}}}(\bx,\eps)\bT_{{\tikz[baseline=-0.5ex]{\draw (0,0) circle [radius=2pt];\draw (-4pt,0) -- (4pt,0);}}}(\varepsilon)\,,\\
\widetilde{\bH}_{{\tikz[baseline=-0.5ex]{\draw (0,0) circle [radius=2pt];\draw (-4pt,0) -- (4pt,0);}}}&\,=\bT_{{\tikz[baseline=-0.5ex]{\draw (0,0) circle [radius=2pt];\draw (-4pt,0) -- (4pt,0);}}}(\varepsilon)\bH_{{\tikz[baseline=-0.5ex]{\draw (0,0) circle [radius=2pt];\draw (-4pt,0) -- (4pt,0);}}}(\eps)\bT_{{\tikz[baseline=-0.5ex]{\draw (0,0) circle [radius=2pt];\draw (-4pt,0) -- (4pt,0);}}}(-\varepsilon)\,.
\esp\eeq
Equation~\eqref{eq:sunrise_trbr} then takes the form:
\beq\label{eq:sunrise_trbr_rotated}
\frac{1}{2\pi i}\widetilde{\bP}_{{\tikz[baseline=-0.5ex]{\draw (0,0) circle [radius=2pt];\draw (-4pt,0) -- (4pt,0);}}}(\bx,\eps)^T\big(\bC_{{\tikz[baseline=-0.5ex]{\draw (0,0) circle [radius=2pt];\draw (-4pt,0) -- (4pt,0);}}}(\bx,\eps)^{-1}\big)^T\widetilde{\bP}_{{\tikz[baseline=-0.5ex]{\draw (0,0) circle [radius=2pt];\draw (-4pt,0) -- (4pt,0);}}}(\bx,-\eps) = \widetilde{\bH}_{{\tikz[baseline=-0.5ex]{\draw (0,0) circle [radius=2pt];\draw (-4pt,0) -- (4pt,0);}}}(\eps)\,.
\eeq
Equation~\eqref{eq:sunrise_trbr_rotated} is now finite at $\eps=0$, and to leading order in $\eps$, we find
\begin{align}\label{eq:sunrise_trbr_0}
\frac{1}{2\pi i}\,(\eta_2 \omega_1-\eta_1 \omega_2)\, \left(
\begin{smallmatrix}
 0 & -1& 0 & 0 \\
 1 & 0 & 0 & 0 \\
 0 & 0 & 0 & 0 \\
 0 & 0 & 0 & 0 \\
\end{smallmatrix}
\right)=\left(
\begin{smallmatrix}
 0 & -\frac{1}{2} & 0 & 0 \\
\frac{1}{2} & 0 & 0 & 0 \\
 0 & 0 & 0 & 0 \\
 0 & 0 & 0 & 0 \\
\end{smallmatrix}
\right)\,,
\end{align}
where the left-hand side involves the periods and quasi-periods of the elliptic curve:
\beq\bsp
    \omega_1&\,=\frac{1}{c_4} \,\K(\lambda)\,,\\
     \omega_2&\,=\frac{i}{c_4}\,\K(1-\lambda)\,,\\
    \eta_1&\,=-2 c_4 \left[\E(\lambda)-\frac{2-\lambda}{3}\K(\lambda)\right]\,, \\
     \eta_2&\,=2i c_4 \,\left[\E(1-\lambda)+ \frac{1+\lambda}{3}\K(1-\lambda)\right]\,,
\esp\eeq
with $\lambda=\frac{(\lambda_1-\lambda_4) (\lambda_2-\lambda_3)}{(\lambda_1-\lambda_3) (\lambda_2-\lambda_4)}$
and $c_4=\frac{1}{2} \sqrt{(\lambda_1-\lambda_3) (\lambda_2-\lambda_4)}$. 
We see that eq.~\eqref{eq:sunrise_trbr_0} is equivalent to $\eta_2 \omega_1-\eta_1 \omega_2 = i\pi$, which is itself equivalent o the Legendre relation in eq.~\eqref{eq:Legendre}. Hence, to leading order in $\eps$, the TRBRs in eq.~\eqref{eq:sunrise_trbr_rotated} are equivalent to the Legendre relation.
%

\subsection{The non-planar crossed box}
\label{subsec:npBox}

\begin{figure}
\centering
\begin{tikzpicture}[scale=1.4]
    \draw[black, line width = 0.6mm] (-0.7,0.7) -- (1,0.7);
    \draw[black, line width = 0.6mm] (-0.7,-0.7) -- (1,-0.7);
    \draw[black, line width = 0.6mm] (-0.7,-0.7) -- (-0.7,0.7);
    \draw[black, line width = 0.6mm] (0.3,0) -- (1,0.7);
    \draw[black, line width = 0.6mm] (0.3,0) -- (1,-0.7);
    \draw[black, line width = 0.6mm] (1.7,0) -- (1,0.7);
    \draw[black, line width = 0.6mm] (1.7,0) -- (1,-0.7);

    \draw[black, thick] (-0.7,0.7) -- (-1,1);
    \draw[black, thick] (-0.7,-0.7) -- (-1,-1);
    \draw[black, thick] (0.3,0) -- (0,0);
    \draw[black, thick] (1.7,0) -- (2,0);

    \draw[black, thick] (-0.9,-1.2) node {$p_1$};
    \draw[black, thick] (-0.9,1.2) node {$p_2$};
    \draw[black, thick] (2.2,0) node {$p_3$};
    \draw[black, thick] (0,0.2) node {$p_4$};
\end{tikzpicture}
\caption{Non-planar crossed box diagram. All internal propagators are massive.}
\label{fig:npcb}
\end{figure}
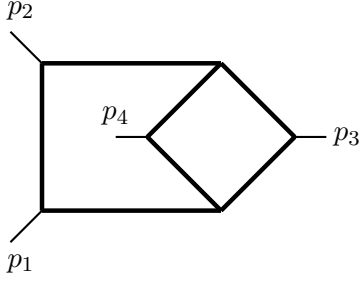

As a second example we will consider the non-planar crossed box in $D=4-2\varepsilon$ dimensions, see figure \ref{fig:npcb}:
\begin{equation}
    I_{\bm{\nu}}^{\mathrm{npcb}}(s,t,m^2)=e^{2\gamma_{\mathrm{E}}\varepsilon}\int\frac{\mathrm{d}^D\ell_1}{i\pi^{D/2}}\int\frac{\mathrm{d}^D\ell_2}{i\pi^{D/2}}\frac{1}{D_1^{\nu_1}\cdots D_7^{\nu_7}}\,,
\end{equation}
with the propagators
\beq\bsp
    &D_1=\ell_1^2-m^2,\quad D_2=(\ell_1-p_1)^2-m^2,\quad D_3=(\ell_1-p_1-p_2)^2-m^2  \\
    &D_4=\ell_2^2-m^2,\quad D_5=(\ell_2-p_3)^2-m^2,\quad D_6=(\ell_1+\ell_2)^2-m^2 \\
    &D_7=(\ell_1+\ell_2-p_1-p_2-p_3)^2-m^2\, ,
\esp\eeq
and all external momenta massless $p_i^2=0$. The integral only depends on two dimensionless invariants which we take to be the Mandelstam variables
\begin{equation}
    s=(p_1+p_2)^2\,,\quad t=(p_2+p_3)^2\,, \qquad u=(p_1+p_3)^2 = -s-t\,.
\end{equation}
In the following we set $m^2=1$ without loss of generality.

This integral has been studied in refs.~\cite{Huang:2013kh,Georgoudis:2015hca,Marzucca:2023gto}, and was shown to be connected to a hyperelliptic curve of genus two. Using the loop-by-loop Baikov approach we can show that the maximal cuts of this integral are twisted periods with the twist
\begin{equation}
    \Phi_{\mathrm{npcb}}(z)=\left[(z-\lambda_1)(z-\lambda_2)\right]^{-\frac{1}{2}}\left[(z-\lambda_3)(z-\lambda_4)(z-\lambda_5)(z-\lambda_6)\right]^{-\frac{1}{2}-\varepsilon}\,,
\end{equation}
with the roots given by
\beq\bsp
\label{oneformshyper}
    &\lambda_1=-\frac{1}{4}\left(s+\sqrt{s(s-4)}\right)\,,\quad \lambda_2=-\frac{1}{4}\left(s-\sqrt{s(s-4)}\right)\,,  \\
    &\lambda_3=-\frac{1}{4}\left(s+\sqrt{s(s+12)} \right)\,,\quad \lambda_4=-\frac{1}{4}\left(s-\sqrt{s(s+12)} \right)\,, \\
    &\lambda_5=-\frac{s(s+t)+2\sqrt{s^2 t+s t^2}}{2s}\,, \quad \lambda_6=-\frac{s(s+t)-2\sqrt{s^2 t+s t^2}}{2s}\,. 
\esp\eeq
For $\varepsilon=0$ the twist reduces to the square root of a polynomial of degree six which defines a hyperelliptic curve of genus two, see, e.g., ref.~\cite{farkasKra} for more mathematical details on Riemann surfaces (of higher genus). The twisted cohomology group $H_{\mathrm{dR}}^1(X,\nabla_{\mathrm{npcb}})$ with $X=\mathbb{C}-\{\lambda_1,\dots, \lambda_6\}$ is five-dimensional. We will choose the following basis of (classes of) twisted co-cycles
\beq\bsp
    &\varphi_1^{\mathrm{npcb}}=\mathrm{d}z,\quad \varphi_2^{\mathrm{npcb}}=z\,\mathrm{d}z,\quad \varphi_3^{\mathrm{npcb}}=-\left(z^4-\frac{3}{4}s_1z^3+\frac{s_2}{2}z^2-\frac{s_3}{4}z \right)\mathrm{d}z \\
    &\varphi_4^{\mathrm{npcb}}=-\frac{1}{2}\left(z^3-\frac{s_1}{2}z^2 \right)\mathrm{d}z,\quad \varphi_5^{\mathrm{npcb}}=z^2\mathrm{d}z,
\esp\eeq
where the $s_k$ are elementary symmetric polynomials in the branch points $\lambda_1,\dots \lambda_6$. As before we will choose the dual twisted cohomology basis such that the dual period matrix can be obtained from the original period matrix by changing the sign of $\varepsilon$:
\begin{equation}
    \check{\varphi}_i^{\mathrm{npcb}}=\left[\frac{1}{p_{\mathrm{npcb}}(z)}\varphi_i^{\mathrm{npcb}}\right]_c,\quad i=1,\dots,6\,,
\end{equation}
where $y^2=p_{\mathrm{npcb}}(z)=(z-\lambda_1)\cdots(z-\lambda_6)$ is the defining equation for the hyperelliptic curve $\mathcal{C}$ associated to the non-planar crossed box.

The above choice of basis for the twisted cohomology group is particularly natural from a geometric point of view, motivated by the appearance of the hyperelliptic curve. The one-forms on any Riemann surface can be grouped into three types: differentials of the first kind which are holomorphic on the entire surface, differentials of the second kind which are meromorphic but have no residues and differentials of the third kind which are meromorphic with non-zero residues. The one-forms $\left. \Phi_{\mathrm{npcb}} \right|_{\varepsilon=0}\varphi_i^{\mathrm{npcb}}$ of eq.~\eqref{oneformshyper} constitute  a basis of differentials of the first kind ($i=1,2)$, a basis of differentials of the second kind ($i=3,4$) and the differential of the third kind with a pole at infinity ($i=5$) on a hyperelliptic curve of genus two. Let us stress that, although the basis choice is motivated by the $\varepsilon=0$ geometry, it  still constitutes a valid basis for $\varepsilon\neq 0$.

Our choice for the homology basis is also be inspired by the hyperelliptic geometry. A hyperelliptic curve of genus two admits a canonical homology basis of dimension four with the basis elements conventionally labelled $a_1,a_2,b_1,b_2$, see figure \ref{fighyper}.
\begin{figure}[!t]
    \centering
    \includegraphics[align=c, scale=.3]{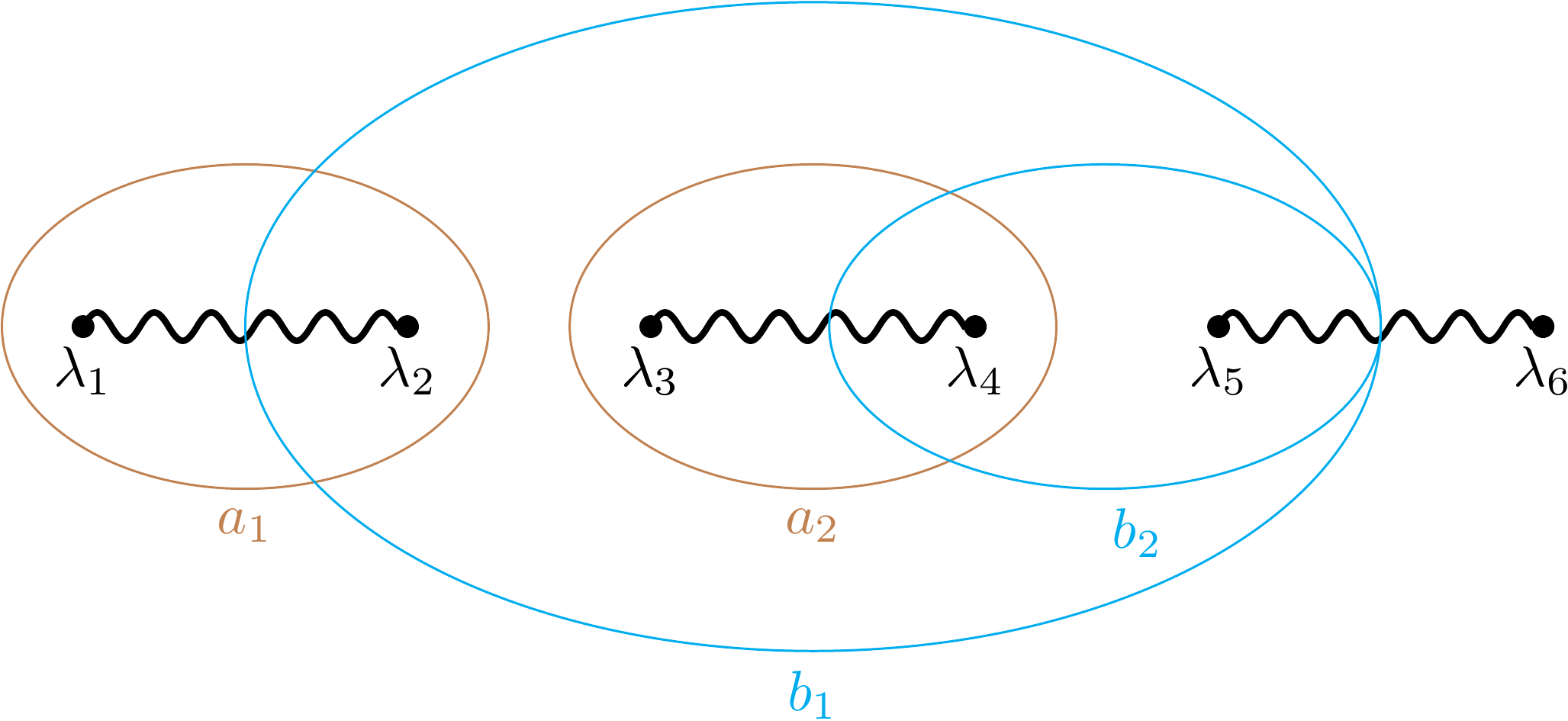}
 \caption{A canonical choice of cycles for a hyperellitpic curve of genus two with branch points $\lambda_1,\dots,\lambda_6$.}
    \label{fighyper}
\end{figure}
Since crossing the branch cut simply changes the sign of the one-forms chosen above (for $\varepsilon=0$) one can reduce integrations around these cycles to simple integrations along (sums of)  intervals. This leads us to our choice of twisted dual homology basis
\beq\bsp
    &\check{\gamma}_1^{\textrm{npcb}}=[\lambda_1,\lambda_2]\,,\quad \check{\gamma}_2^{\textrm{npcb}}=[\lambda_3,\lambda_4]\,, \\
    &\check{\gamma}_3^{\textrm{npcb}}=[\lambda_2,\lambda_3]+[\lambda_4,\lambda_5]\,,\quad \check{\gamma}_4^{\textrm{npcb}}=[\lambda_4,\lambda_5]\,, \\
    &\check{\gamma}_5^{\textrm{npcb}}=[\lambda_1,\lambda_2]+[\lambda_3,\lambda_4]+[\lambda_5,\lambda_6]\,,
\esp\eeq
with the basis of the twisted homology group being made up of the regularised versions of the respective dual basis elements. Note that for $\eps=0$ $\check{\gamma}_1^{\textrm{npcb}},\check{\gamma}_2^{\textrm{npcb}}$ correspond to the $a$-cycles and $\check{\gamma}_3^{\textrm{npcb}},\check{\gamma}_4^{\textrm{npcb}}$ correspond to the $b$-cycles on the hyperelliptic curve. The final basis element $\check{\gamma}_5^{\textrm{npcb}}$ is needed to complete the basis for $\varepsilon\neq 0$ and is chosen in such a way that the differential one-forms $ \Phi_{\mathrm{npcb}} \varphi_i^{\mathrm{npcb}}$ with $i=1,\dots,4$ integrate to zero for $\varepsilon=0$, leading to a  block structure in the period matrix.

We can now compute the (co-)homology intersection matrices $\widetilde{\bm{C}}_{\mathrm{npcb}}(\bx,\eps)$ and $\widetilde{\bm{H}}_{\mathrm{npcb}}(\eps)$ (with $\bx = (s,t,m^2)$), as well as the period matrix $\widetilde{\bm{P}}_{\mathrm{npcb}}(\bx,\eps)$. The dual period matrix is then by construction obtained by just changing the sign of $\varepsilon$.
The period matrix entries can all be expressed in terms of four-variable Lauricella functions from eq.~\eqref{eq:lauricella} (with $n=4$).

We already discussed above that the dual cycle $\check{\gamma}_5^{\textrm{npcb}}$ is not independent for $\varepsilon=0$, we will now similarly decouple the differential of the third kind by performing a slight redefinition of the cohomology basis, namely by rescaling the co-cycle $\varphi_5^{\mathrm{npcb}}$ with $\varepsilon$. In the new basis, we have 
\beq\bsp\label{eq:npcb-rotation}
  \bm{C}_{\mathrm{npcb}}(\bx,\eps)&\,=\bm{B}({\varepsilon})\widetilde{\bm{C}}_{\mathrm{npcb}}(\bx,\eps)\bm{B}({-\varepsilon}) \,,\\
    \bm{P}_{\mathrm{npcb}}(\bx,\eps)&\,=\bm{B}({\varepsilon})\widetilde{\bm{P}}_{\mathrm{npcb}}(\bx,\eps)\,,
\esp\eeq
with $\bm{B}({\varepsilon})=\mathrm{diag}(1,1,1,1,\varepsilon)$. The rescaled intersection matrices can be found in an ancillary file made available together with the arXiv submission of this paper. After this additional rescaling, the TRBRs take the form
\begin{equation}
    \bm{P}_{\mathrm{npcb}}(\bx,\eps)\left( \bm{H}_{\mathrm{npcb}}(\eps)^{-1}\right)^T{\bm{P}}^T_{\mathrm{npcb}}(\bx,-\eps)=2\pi i\,\bm{C}_{\mathrm{npcb}}(\bx,\eps)\,.
\end{equation}
Just like for the sunrise integral, the TRBRs diverge in the limit. If we expand around $\eps=0$, the contributions of order $\ord(\eps^{-1})$ only give trivial relations, while at $\ord(\eps^0)$, we find
\begin{equation}
    \begin{pmatrix}
        \bm{\mathcal{B}}\bm{\mathcal{A}}^T-\bm{\mathcal{A}}\bm{\mathcal{B}}^T & 
        \bm{\mathcal{B}}\bm{\mathcal{\tilde{A}}}^T-\bm{\mathcal{A}}\bm{\mathcal{\tilde{B}}}^T & \bm{0} \\
        \bm{\mathcal{\tilde{B}}}\bm{\mathcal{A}}^T-\bm{\mathcal{\tilde{A}}}\bm{\mathcal{B}}^T &
        \bm{\mathcal{\tilde{B}}}\bm{\mathcal{\tilde{A}}}^T-\bm{\mathcal{\tilde{A}}}\bm{\mathcal{\tilde{B}}}^T & \bm{0} \\
        \bm{0} & \bm{0} & 0
    \end{pmatrix}= 
    \begin{pmatrix}
        \bm{0} & -i\pi \mathds{1} & \bm{0} \\
        i\pi \mathds{1} & \bm{0} & \bm{0} \\
        \bm{0} & \bm{0} & 0
    \end{pmatrix},
\end{equation}
where $\bm{\mathcal{A}},\bm{\mathcal{B}},\bm{\mathcal{\tilde{A}}},\bm{\mathcal{\tilde{B}}}$ are the $2\times 2$ block matrices appearing in the period matrix for $\varepsilon=0$
\begin{equation}
    \left.\bm{P}_{\mathrm{npcb}}\right(\bx,{\varepsilon=0})= \begin{pmatrix}
        \bm{\mathcal{A}} & \bm{\mathcal{B}} & \bm{0} \\
        \bm{\mathcal{\tilde{A}}} & \bm{\mathcal{\tilde{B}}} & \bm{0} \\
        \bm{0} & \bm{0} & 0
    \end{pmatrix}\,.
\end{equation}
The matrices $\bm{\mathcal{A}},\bm{\mathcal{B}}$ are obtained by integrating the two basis differentials of the first kind over the two $a$-cycles and $b$-cycles respectively and are usually referred to as $a$- and $b$-periods. Similarly the matrices $\bm{\mathcal{\tilde{A}}},\bm{\mathcal{\tilde{B}}}$ are obtained by integrating the two basis differentials of the second kind over the two $a$-cycles and $b$-cycles respectively and are called $a$- and $b$-quasi-periods. 

As already anticipated in section \ref{subsec:quadraticsCY}, the quadratic relations between the $2\times 2$ block matrices $\bm{\mathcal{A}},\bm{\mathcal{B}},\bm{\mathcal{\tilde{A}}},\bm{\mathcal{\tilde{B}}}$, which can be seen as a generalization of the Legendre relation, follow directly from the classical Riemann bilinear relations on the hyperelliptic curve $\mathcal{C}$ (cf., e.g., ref.~\cite{farkasKra})
\begin{equation}
    \sum_{i=1}^2\left[\int_{a_i}\omega\int_{b_i}\eta- \int_{b_i}\omega\int_{a_i}\eta\right]=\int_{\partial \mathcal{C}}f \eta\,.
\end{equation}
where $\omega,\eta$ are one-forms and $f$ is a primitive of $\omega$, i.e.~$\mathrm{d}f=\omega$, see e.g., ref.~\cite{2012arXiv1208.0990B} for details. The integral on the right-hand side is over the boundary of the fundamental domain (obtained by cutting the surface along all homology cycles) and can be seen as the sum of residues on the hyperelliptic curve. 
Note that we could also expand the other form of the TRBR
\begin{equation}
    \bm{P}_{\mathrm{npcb}}(\bx,\eps)^T\left( \bm{C}_{\mathrm{npcb}}(\bx,\eps)^{-1}\right)^T{\bm{P}}_{\mathrm{npcb}}(\bx,-\eps)=2\pi i\,\bm{H}_{\mathrm{npcb}}(\eps)
\end{equation}
in $\epsilon$. This would yield an alternative form of the generalized Legendre relations, for example
\begin{equation}
    \bm{\mathcal{\tilde{A}}}^T\bm{\mathcal{B}}-\bm{\mathcal{A}}^T\bm{\mathcal{\tilde{B}}}=-i\pi\mathds{1}\,.
\end{equation}

\section{Conclusion}
\label{sec:concl}

In this paper we have initiated the first systematic study of quadratic relations between (cut) Feynman integrals in dimensional regularisation. We work within the mathematical framework of twisted cohomology, and our starting point are the TRBRs in eq.~\eqref{generalriemann}, which relate the intersection matrices for the twisted cycles and co-cycles and the period matrix and its dual. We show that for a given family of Feynman integrals, it is possible to define a period matrix whose entries are cuts of the master integrals. The dual period matrix, however, is not directly expressible in terms of cuts, preventing the interpretation of the TRBRs as quadratic relations for cut Feynman integrals. The origin of this can be traced back to the fact that, due to the appearance of propagator poles not `regulated' by the twist, one needs to work with relative twisted cohomology, where the twisted co-cycles and their duals are not related in a straightforward manner. We stress that we do not claim that there are no TRBRs for non-maximally cut Feynman integrals! The TRBRs exist, but they are separately linear in the period matrix and its dual. Since the latter are not expected to be related in a simple manner (and our discussion in example~\ref{example21} shows that indeed they are not), the TRBRs can in general not be interpreted as quadratic relations among entries of the period matrix, i.e., they can in general not be interpreted as quadratic relations among cut integrals.

We then move on to study maximal cuts, which can be studied using a non-relative twisted cohomology theory. This allows us to relate the twisted (co-)cycles and their duals in a simple manner. As a consequence, when working with these bases, the period matrix for maximal cuts and its dual are related by changing the sign of the dimensional regulator $\eps$, and the TRBRs can then be interpreted as quadratic relations for maximal cuts. We show how these quadratic relations reduce to quadratic relations among maximal cuts in full kinematics that have appeared in the literature before\footnote{We do not include the quadratic relations from ref.~\cite{Broadhurst:2018tey} into our comparison, because they hold for banana integrals in special kinematics.}, and we find that the latter are a special case of the TRBRs. We also present two new results for quadratic relations for the maximal cuts of the unequal-mass sunrise integral and the non-planar crossed box, and we show that to leading order in $\eps$ we recover the well known Legendre relations and classical Riemann bilinear relations for periods of Riemann surfaces.

Finally, we stress an important point: while our analysis shows that it is in general not possible to obtain quadratic relations among non-maximally cut integrals from TRBRs, we cannot exclude that quadratic relations for non-maximal cuts exist. Indeed, we have only shown that the TRBRs lead to quadratic relations whenever one can work with a non-relative twisted cohomology theory, where the period matrix and its dual are related by changing the sign of $\eps$. The question whether there are other non-linear relations, possibly of even higher polynomial degree, that cannot be obtained as a consequence of the well-known linear relations and the TRBRs is a fascinating open question that may deserve further investigation in the future.

\section*{Acknowledgements}
CS and FP thank Andrzej Pokraka for helpful discussions and comments. The work of CS is supported by the CRC 1639 ``NuMeriQS'', and the work of CD and FP is funded by the European Union
(ERC Consolidator Grant LoCoMotive 101043686). Views
and opinions expressed are however those of the author(s)
only and do not necessarily reflect those of the European
Union or the European Research Council. Neither the
European Union nor the granting authority can be held
responsible for them.

\begin{appendix}
\section{TRBRs and deformations}
\label{app:deform}

In this appendix we discuss the TRBRs for the hypergeometric ${}_2F_1$ function from example~\ref{example21}.  Since the factor $(1-x^{-1}z)$ appears with a vanishing exponent in the twist in eq.~\eqref{eq_reltwist}, relative twisted cohomology is the appropriate framework to study the period and intersection matrices and the TRBRs. The goal of this appendix is to show that we could also start from the deformed twist
\begin{align}
    \Phi({\rho})= z^{\alpha_0 }(1-z)^{\alpha_1 }\left(1-x^{-1}z\right)^{\rho}\,,\qquad \rho\neq0\,,
\end{align}
and still recover the same TRBRs as in the relative case.

For $\rho\neq0$, the condition~\eqref{restrict} is satisfied, and we can use the results from example~\ref{example11} for the intersection and period matrices. If we work in  the basis of eq.~\eqref{herenotchange} (with $\alpha_x=\rho$),  the Laurent expansion of the intersection matrices in eqs.~\eqref{cmat} around $\rho=0$ is
\beq\bsp
\label{355}
\bs{C}(x,\balpha,\rho)&=\left(\begin{smallmatrix} -\frac{1}{\rho}+\mathcal{O}(\rho^{0})&0\\ 0&-\frac{\alpha_1}{\alpha_0 \alpha_{01}}\end{smallmatrix}\right)\,,\\
\bs{H}(\balpha,\rho)&=\tfrac{1}{2\pi i}\left(\begin{smallmatrix} \frac{1}{\rho}+\mathcal{O}(\rho^{0})&0\\ 0&\pi\cot(\pi \alpha_0) +\pi\cot(\pi \alpha_1)+\mathcal{O}(\rho)\end{smallmatrix} \right)\, . 
\esp\eeq
We see that different entries scale differently with $\rho$. In order to extract the leading behaviour in the limit $\rho\to0$, we first rescale the different entries, which is equivalent to performing a rotation with the matrix
\begin{align}
\label{traf}
     \bs{T}(\rho)&=\left(\begin{smallmatrix} 
\rho &0 \\ 0 &1
    \end{smallmatrix}\right)\, . 
\end{align}
We then have
\beq\bsp
    \bs{C}(x,\balpha,\rho) \bs{T}(\rho)&\,= \bs{C}^{\text{rel}}(x,\balpha) + \ord(\rho)\,,\\ 
    \bs{T}(\rho)\bs{H}(\balpha,\rho) &\,=\bs{H}^{\text{rel}}(\balpha)+\ord(\rho)\, , 
\esp\eeq
where $\bs{C}^{\text{rel}}$ and $\bs{H}^{\text{rel}}$ are the intersection matrices obtained from the relative framework in the case $\rho=0$ given in eq.~\eqref{intersectionsrelative}, in agreement with the results from ref.~\cite{Brunello:2023fef}. 
We can apply the same strategy to the period matrix and its dual, and we see that we recover the results from the relative case in eqs.~\eqref{pmatrel} and~\eqref{pmatreldual} in the limit $\rho\to0$ for $x>1$:
\beq\bsp
\bs{P}(x,\balpha,\rho)\bs{T}(\rho) &\,= \bs{P}^{\text{rel}}(x,\balpha)+\ord(\rho) \,,\\
\bs{T}(\rho) \bs{\check{P}}(x,\balpha,\rho)&\,= \bs{\check{P}}^{\text{rel}}(x,\balpha) + \ord(\rho)\, . 
\esp\eeq
We know that 
\beq
\label{eq:deformed_condition}
\bs{\check{P}}(x,\balpha,\rho)=\bs{{P}}(x,-\balpha,-\rho)\,,\qquad \rho \neq0
\eeq
but the rotation needed to extract the leading term violates this relation, which is reflected in the fact that in the relative case we have $\bs{\check{P}}^{\text{rel}}(x,\balpha)\neq\bs{{P}}^{\text{rel}}(x,-\balpha)$.

It is interesting to ask if there is any quadratic relation of the form
\beq\label{eq:P_rel_sym}
\bs{{P}}^{\text{rel}}(x,\balpha)^T\bs{R}(\bx,\eps)\bs{{P}}^{\text{rel}}(x,-\balpha) = \bs{Q}(\eps)\,.
\eeq
To understand this point, we define the rotation of the deformed period matrix and this dual in a way that preserves the relation~\eqref{eq:deformed_condition} even for $\rho=0$:
\beq\bsp
\bs{P}(x,\balpha,\rho)\bs{T}(\rho) &\,=  \bs{P}^{\text{rel}}(x,\balpha)+\ord(\rho) \,,\\
\bs{\check{P}}(x,\balpha,\rho)\bs{T}(-\rho)&\,= \bs{P}(x,-\balpha,-\rho)\bs{T}(-\rho) = \bs{{P}}^{\text{rel}}(x,-\balpha) + \ord(\rho)\, . 
\esp\eeq
Then it is easy to check that eq.~\eqref{eq:P_rel_sym} is satisfied with
\beq\bsp
\bs{R}(x,\eps) &\,= \lim_{\rho\to0}\left(\bC(\rho)^{-1}\right)^T= \left(\begin{smallmatrix}0&0\\0&-\frac{\alpha_1}{\alpha_0 \alpha_{01}}\end{smallmatrix}\right)+ \ord(\rho)\,,\\
\bs{Q}(\eps) &\,= \bs{T}(\rho)\bH(\eps)\bs{T}(-\rho) = \tfrac{1}{2\pi i}\left(\begin{smallmatrix} 0&0\\0&\pi\cot(\pi\alpha_0)+\pi\cot(\pi\alpha_1) \end{smallmatrix}\right) + \ord(\rho)\,.
\esp\eeq
Thus, we see that there is a quadratic relation of the form~\eqref{eq:P_rel_sym}.
However, it cannot arise from TRBRs, because the matrices $\bs{R}(x,\eps)$ and $\bs{Q}(\eps)$ do not have full rank (while intersection matrices computed for a basis always have full rank).

\section{On the $\varepsilon$-dependence of specific intersection numbers}
\label{app:proof}

The goal of this appendix is to show that the $\mu$-dependence of the intersection numbers involving a derivative $\rdex$ is fully captured by eq.~\eqref{eq:eta_phi_scale}.

We start by reviewing the theorem of ref.~\cite{ojm_1200788347} for the intersection numbers of $\rd\!\log$-forms, which leads to eq.~\eqref{eq:Matsumoto_thm}. The proof of eq.~\eqref{eq:eta_phi_scale} then follows the same lines as the proof in ref.~\cite{ojm_1200788347}. Following ref.~~\cite{ojm_1200788347}, we assume that the twist is given by 
\beq
\Phi=\prod_{i=0}^{r+1}L_i(\bs{z},\bs{x})^{\alpha_i}\,,
\eeq
where the $L_i(\bs{z},\bs{x})=0$ define hyperplanes $\mathcal{L}_i$. The proof, however does not crucially rely on this assumption.
%
%
%
We assume that we can find a basis of the cohomology group that consists of $\rd\!\log$ form as in eq.~\eqref{logbasis}.
We introduce the following symbol for index sets $P_m=(q_1,\dots,q_m), P_{m+1}=(p_0,\dots,p_m)$ with $p_0<p_1<\dots<p_m$ and similarly for the $q_i$
\begin{align}
    \delta(P_m, P_{m+1}) &=\begin{cases} (-1)^\mu\,, &\text{ if } P_m\subset P_{m+1}\,,\\ 0\,, & \text{ if } P_m \not\subset P_{m+1}\,,\end{cases},
\end{align}
where $\mu$ is determined as $\{p_\mu\}=P_{m+1}-P_m$ in the case $P_m \subset P_{m+1}$.
The following theorem is proven in ref.~\cite{ojm_1200788347}:
\begin{theorem}
    \label{matsumototheorem}
    For multi-indices 
    \begin{align*}
        I=(i_0,\dots, i_n )\,, &\text{ with } 0\leq i_0< i_1<\dots<i_n\leq r+1\,,\\
        J=(j_0,\dots, j_n )\,, &\text{ with } 0\leq j_0< j_1<\dots<j_n\leq r+1\,,
    \end{align*}
the intersection pairing of the corresponding $\rd\! \log$-forms is 
\begin{align}
    \langle \varphi_J|\check{\varphi}_I\rangle =\begin{cases}
        (2\pi i)^n\frac{\sum_{i\in I}\alpha_i}{\prod_{i\in I}\alpha_i}\,, &\text{ if } I=J\,,\\
        (2\pi i)^n\frac{(-1)^{\beta_1+\beta_2}}{\prod_{i\in (I\cap J)} \alpha_i}\,, &\text{ if } |I\cap J| = n\,, \\
        0\,,& \text{ otherwise} \, . 
    \end{cases}
\end{align}
The exponents $\beta_i$ are defined by $\{i_{\beta_1}\}=I-(I\cap J) $ and  $\{i_{\beta_2}\}=J-(I\cap J) $.
\end{theorem}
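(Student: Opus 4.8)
The plan is to compute the cohomology intersection pairing $\langle\varphi_J|\check\varphi_I\rangle$ by reducing it to a finite sum of purely local contributions supported at the zero-dimensional strata (vertices) of the hyperplane arrangement $\bigcup_i\mathcal{L}_i$, and then evaluating each such contribution explicitly. The starting point is the standard representation of the intersection number of logarithmic forms: one replaces $\varphi_J$ by a compactly supported representative $[\varphi_J]_c$ in the same twisted cohomology class, which differs from $\varphi_J$ only in tubular neighbourhoods of the hyperplanes, and writes $\langle\varphi_J|\check\varphi_I\rangle=\int_X[\varphi_J]_c\wedge\check\varphi_I$. Because both forms in eq.~\eqref{logbasis} are logarithmic, this integral localizes to residues at the points where sufficiently many hyperplanes meet, and the whole calculation becomes residue calculus on the intersection lattice.

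First I would treat the rank-one case $n=1$, which already contains the essential mechanism. Here $\varphi_I=\rd\!\log(L_{i_0}/L_{i_1})$ has residue $+1$ along $\mathcal{L}_{i_0}$ and $-1$ along $\mathcal{L}_{i_1}$, and vanishing residue elsewhere. Near a hyperplane $\mathcal{L}_k$ the connection one-form $\omega=\rd\!\log\Phi$ has residue $\alpha_k$, so the local solution of $\nabla\psi=\varphi_I$ is a holomorphic function whose value on $\mathcal{L}_k$ is $(\mathrm{Res}_{\mathcal{L}_k}\varphi_I)/\alpha_k$. The intersection number is then $2\pi i\sum_k\tfrac{1}{\alpha_k}(\mathrm{Res}_{\mathcal{L}_k}\varphi_I)(\mathrm{Res}_{\mathcal{L}_k}\check\varphi_J)$, a sum of products of residues weighted by $1/\alpha_k$. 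Evaluating this sum reproduces the three cases of Theorem~\ref{matsumototheorem} for $n=1$: equal indices give $\tfrac{1}{\alpha_{i_0}}+\tfrac{1}{\alpha_{i_1}}=\tfrac{\alpha_{i_0}+\alpha_{i_1}}{\alpha_{i_0}\alpha_{i_1}}$; a single shared hyperplane gives $\pm\tfrac{1}{\alpha_{\mathrm{shared}}}$; and disjoint supports give zero.

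Next I would promote this to general $n$ by an inductive localization along flags. A form $\varphi_I$ with $I=(i_0,\dots,i_n)$ involves exactly the $n+1$ hyperplanes $\mathcal{L}_{i_0},\dots,\mathcal{L}_{i_n}$, and its wedge of logarithmic factors is supported, in the intersection-theoretic sense, on the vertices obtained by intersecting any $n$ of them. Iterating the one-dimensional residue computation along a maximal flag --- each step dividing by the corresponding $\alpha$ --- shows that $\langle\varphi_J|\check\varphi_I\rangle$ is a sum over the common vertices of the two forms, each vertex contributing $(2\pi i)^n\big/\prod_i\alpha_i$, the product running over the $n$ hyperplanes meeting there. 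Two such forms share a vertex only if they share at least $n$ hyperplanes, so the pairing vanishes unless $|I\cap J|\ge n$. When $I=J$ all $n+1$ vertices (one per omitted hyperplane) contribute, and summing $\sum_k\prod_{j\ne k}\alpha_{i_j}^{-1}=\big(\sum_{i\in I}\alpha_i\big)\big/\prod_{i\in I}\alpha_i$ yields the diagonal formula; when $|I\cap J|=n$ there is a single common vertex, producing $1/\prod_{i\in I\cap J}\alpha_i$.

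The hard part will be the careful bookkeeping of orientations and ordering that produces the sign $(-1)^{\beta_1+\beta_2}$ in the off-diagonal case. This sign arises from reordering the wedge factors of $\varphi_I$ and $\check\varphi_J$ so that the shared logarithmic one-forms are aligned before the iterated residue is taken: the permutations needed move the unshared index --- at position $\beta_1$ in $I$, where $\{i_{\beta_1}\}=I-(I\cap J)$, and at position $\beta_2$ in $J$, where $\{j_{\beta_2}\}=J-(I\cap J)$ --- past the shared ones, and by antisymmetry of the wedge product these contribute $(-1)^{\beta_1}$ and $(-1)^{\beta_2}$ respectively. Establishing this sign uniformly, together with a clean formulation of the localization statement that does not depend on the choice of explicit compact-support representative, is the most delicate step; everything else is the residue calculus already visible in the $n=1$ case.
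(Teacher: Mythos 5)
Your proposal is correct and follows essentially the same route as the proof the paper relies on: the paper does not prove Theorem~\ref{matsumototheorem} itself but cites ref.~\cite{ojm_1200788347}, whose argument --- construct a compactly supported representative, localize the pairing to (iterated) residues at the intersections of the hyperplanes, and use local holomorphic primitives whose leading coefficients are the residues divided by the corresponding $\alpha$'s --- is exactly what you outline, and is the same machinery the paper reuses in appendix~\ref{app:proof} for the subsequent theorem. Your explicit $n=1$ computation and your identification of the sign $(-1)^{\beta_1+\beta_2}$ with the alternating (telescoping) expansion of the $\rd\!\log$ wedge product are both consistent with that proof.
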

Note that for the choice of $\alpha_i= a_i\mu$ as in eq.~\eqref{formtwist}, this implies eqs.~\eqref{eq:Matsumoto_thm} and~\eqref{eq:Matsumoto_thm_2}.
The proof of the theorem relies on an explicit construction of the compactly supported forms and the computation of residues. In a similar manner, we will prove the following theorem: 
\begin{theorem}
We consider a $\rd\!\log$-form $\varphi_I$ and a covariant derivative of a $\rd \!\log$-form $\varphi_J$, namely $\eta_J=\left(\rd_{\mathrm{ext}} +\rd_{\mathrm{ext}} \log \Phi \wedge\right) \varphi_J$ defined by the multi-indices $I,J$. Then, the intersection pairing between these differentials exhibits the following $\epsilon$-dependence:
\begin{align}
    \langle \eta_J |\varphi_I\rangle \sim \frac{1}{\mu^{n-1}}\, . 
\end{align}
\end{theorem}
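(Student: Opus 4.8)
The plan is to mirror the proof of Theorem~\ref{matsumototheorem} from ref.~\cite{ojm_1200788347}. First I would build a compactly supported representative of $\check\varphi_I=[\varphi_I]_c$ by patching local primitives of $\varphi_I$ near the strata of the arrangement $\bigcup_i\mathcal{L}_i$, reduce the pairing $\langle\eta_J|\check\varphi_I\rangle$ to a sum of nested residues of $\psi\,\eta_J$ at the zero-dimensional strata, and then power-count in $\mu$. The organising observation is that, with $\alpha_i=a_i\mu$, both the connection form $\omega=\rd\log\Phi=\mu\sum_i a_i\,\rd\log L_i$ and $\rdex\log\Phi=\mu\sum_i a_i\,\rdex\log L_i$ are of order $\mu$. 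Consequently $\eta_J=\rdex\varphi_J+\rdex\log\Phi\wedge\varphi_J$ is a one-form in $\bx$ whose $\bz$-form part has singularities along the same hyperplanes $\mathcal{L}_i$ as a $\rd\!\log$-form, but now possibly of pole order two.

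Next I would recall why Theorem~\ref{matsumototheorem} produces the scaling $\mu^{-n}$: each local primitive $\psi$ solving $\check\nabla\psi=\varphi_I$ is obtained by inverting $\rd-\omega\wedge$, and because the singular part of $\omega$ along $\mathcal{L}_i$ is $\alpha_i\,\rd\!\log L_i=\mathcal{O}(\mu)$, each of the $n$ residue integrations needed to reach a point stratum contributes a factor $1/\alpha\sim 1/\mu$. The claim to be proven, eq.~\eqref{eq:eta_phi_scale}, is that replacing one $\rd\!\log$-form by the covariant derivative $\eta_J$ removes exactly one such factor.

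The heart of the argument is the local statement that the covariant-derivative structure costs precisely one power of $1/\mu$, which I would establish first in the prototype $n=1$ (points $c_i(\bx)$, $L_i=z-c_i$). There $\eta_J$ has a double pole at each $c_j$ with coefficient $(1-\alpha_j)\,\rdex c_j$, combining the coefficient $\rdex c_j$ from $\rdex\varphi_J$ with the coefficient $-\alpha_j\,\rdex c_j=\mathcal{O}(\mu)$ from $\rdex\log\Phi\wedge\varphi_J$. Solving $\check\nabla\psi_j=\varphi_I$ near $c_j$ gives, to the needed order,
\[
\psi_j(z)=-\frac{r_j}{\alpha_j}+\frac{s_j-\beta_j r_j/\alpha_j}{1-\alpha_j}\,(z-c_j)+\mathcal{O}\big((z-c_j)^2\big),
\]
with $r_j,s_j$ the residue and constant term of $\varphi_I$ at $c_j$ and $\beta_j=\sum_{i\neq j}\alpha_i/(c_j-c_i)$ the regular part of $\omega$ there. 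Since $\beta_j=\mathcal{O}(\mu)$, the slope $\psi_j'(c_j)=\mathcal{O}(1)$, so the double-pole residue $\mathrm{Res}_{c_j}(\psi_j\,\eta_J)=(1-\alpha_j)\,\rdex c_j\,\psi_j'(c_j)=\mathcal{O}(1)$; at the simple poles $c_k$ with $k\in I\setminus J$ one likewise gets $\mathcal{O}(1)$, because there the $\mathcal{O}(\mu)$ residue of $\rdex\log\Phi$ compensates the $\mathcal{O}(1/\mu)$ value $\psi_k(c_k)=-r_k/\alpha_k$. Hence $\langle\eta_J|\check\varphi_I\rangle=\mathcal{O}(\mu^0)=\mathcal{O}(\mu^{-(n-1)})$ for $n=1$. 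I would then lift this to general $n$ by the same nested-residue construction: the final residue integration, in which $\eta_J$ contributes its extra $\rdex$-leg and its order-two pole, yields an $\mathcal{O}(1)$ factor by the identical cancellation, while the remaining $n-1$ integrations each contribute $1/\mu$ exactly as in Theorem~\ref{matsumototheorem}.

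The main obstacle I expect is the uniform bookkeeping of the multivariable nested-residue computation: one must check that at \emph{every} contributing point stratum, and for every distribution of the poles of $\eta_J$ and the singularities of $\psi$ among the hyperplanes, exactly one integration loses its $1/\mu$, with no accidental enhancement and no stratum producing a more singular contribution. This amounts to showing that the cancellation seen in the $n=1$ model—between the $\mathcal{O}(\mu)$ double-pole coefficients coming from $\rdex\log\Phi$ and the $\mathcal{O}(\mu)$ regular parts of $\omega$ entering the primitives—persists stratum by stratum, while the combinatorial structure of the strata is unchanged from the $\rd\!\log$ case of ref.~\cite{ojm_1200788347}.
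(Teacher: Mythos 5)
Your proposal is correct and shares the paper's overall skeleton (compactly supported representative of $\varphi_I$, reduction to nested residues against local primitives $\psi$, power counting in $\mu$ with $\alpha_i=a_i\mu$), but it handles the crucial technical step differently. The paper does \emph{not} compute the double-pole residues directly: it adds a covariantly exact form $\chi_{j_0}^1=\nabla\big(\rdex L_{j_0}/L_{j_0}-\rdex L_{j_1}/L_{j_1}\big)$ (and its generic-$n$ analogues $\chi^n_{j_{l-1}}$) to $\eta_J$, which cancels both the order-two poles and the terms carrying no explicit factor of $\mu$, leaving a cohomologous representative $\tilde\eta_J$ with only simple poles and a manifest overall factor of $\mu$; the scaling $\mu/\mu^n=1/\mu^{n-1}$ is then immediate from Theorem~\ref{matsumototheorem}-type power counting, with no need to expand the primitives beyond leading order. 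You instead keep the double pole, compute the primitive to linear order, and observe that the coefficient $\psi_j'(c_j)=(s_j-\beta_j r_j/\alpha_j)/(1-\alpha_j)$ is $\mathcal{O}(1)$ because the regular part $\beta_j=\sum_{i\neq j}\alpha_i/(c_j-c_i)$ of $\omega$ is itself $\mathcal{O}(\mu)$ and compensates the $1/\alpha_j$; this is a correct and more explicit verification of the same cancellation for $n=1$. The trade-off is exactly the one you identify: your direct route requires tracking subleading terms of the local primitives through every nested residue at every stratum for general $n$, whereas the paper's exact-form shift makes the pole structure and the single overall power of $\mu$ manifest before any residues are taken, so the generic-$n$ case reduces to the already-established logarithmic power counting. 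If you want to complete your version for general $n$, the cleanest fix is to adopt the paper's trick at that stage — i.e., prove that $\eta_J$ is cohomologous to a form with only simple poles and an overall $\mu$ — rather than attempting the uniform double-pole bookkeeping you flag as the main obstacle.
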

\begin{proof}
We follow closely the proof of ref.~\cite{ojm_1200788347} and consider first the case $n=1$. 
    \paragraph{The case $n=1$.} 
Since $\varphi_I$ is a $\rd \!\log$-form, all steps of the proof of Theorem~\ref{matsumototheorem} that only affect this form still apply. In that way, we can use the construction of its compactly supported version given in ref.~\cite{ojm_1200788347}. We can then directly go to the next step of the proof, from where on our proof diverges (slightly). We start by writing
\begin{align}
\label{middlestep}
   \langle \eta_J |\varphi_I\rangle = -2\pi i \sum_{p=0}^{r+1} \text{Res}_{\mathcal{L}_p} \left(\psi^p \eta_J \right)\, ,
\end{align}
where $\psi^p$ is a holomorphic primitive of $\varphi_I$ that has the following expansion in local coordinates near $\mathcal{L}_p$ \cite{ojm_1200788347}:
\begin{align}
    \psi^p=\frac{\delta(p,I)}{\alpha_p} +\mathcal{O}(z)\, .  
\end{align}
To understand better the behaviour of the differential form $\eta_J$ in local coordinates near the $\mathcal{L}_p$, we first give the explicit expression for it. We have:
\begin{align}
\label{c1form}
    \rd_{\text{ext}} \varphi_J &= \frac{\rd_{\text{ext}}(\rd L_{j_0})}{L_{j_0}}-\frac{\rd_{\text{ext}}(\rd L_{j_1})}{L_{j_1}}-\frac{\left(\rd_{\text{ext}}L_{j_0}\right)\wedge\left(\rd L_{j_0}\right)}{L_{j_0}^2}+\frac{\left(\rd_{\text{ext}}L_{j_1}\right)\wedge\left(\rd L_{j_1}\right)}{L_{j_1}^2}\,,\\
\label{c2form}
    \rd_{\text{ext}} \log\Phi\wedge \varphi_J&= \sum_{l=0}^{r+1} \frac{\alpha_l \rd_{\text{ext}}L_l}{L_l} \wedge \rd \log \frac{L_{j_0}}{L_{j_1}}\nonumber\\
    &=\Bigg(\frac{\alpha_{j_0} (\rd_{\text{ext}}L_{j_0})\wedge (\rd L_{j_0})}{L_{j_0}^2}+\sum_{l\neq j_0} \frac{\alpha_l \rd_{\text{ext}}L_l}{L_l} \wedge \frac{\rd L_{j_0}}{L_{j_0}}\Bigg)-(j_0\leftrightarrow j_1)\nonumber \, . 
\end{align}
Note that the exterior differential $\rd=\rd z\,\partial_z$ is to be considered with respect to the internal variable $z$. If we inserted this form into eq.~\eqref{middlestep} we could not immediately deduce the scaling with $\mu$, because we still have double poles as well as terms that do not scale with $\mu$ in eq.~\eqref{c1form}. 
This is because the residue is not either 1 or zero as it is the case for logarithmic forms. Of course, if we go through with the full residue calculation, we get the same scaling in the end, but we use a strategy, where we can read off the scaling of the intersection pairing easily. Namely, we add  to $\eta_J$ an exact form to write $\eta_J$ in a simpler form which the scaling can be read off. Consider the following exact form 
\begin{align}
    \chi_{j_0}^{1} =& \nabla \left(\frac{\rd_{\text{ext}} L_{j_0}}{L_{j_0}} - \frac{\rd_{\text{ext}} L_{j_1}}{L_{j_1}} \right)\\
     =& \Bigg(\frac{-\rd_{\text{ext}}\left( \rd L_{j_0}\right)}{L_{j_0}} +\frac{\left(\rd_{\text{ext}} L_{j_0}\right)\wedge\left(\rd L_{j_0}\right)}{L_{j_0}^2}-\frac{\alpha_{j_0} (\rd_{\text{ext}} L_{j_0})\wedge (\rd L_{j_0})}{L_{j_0}^2}\notag \\&+\sum_{l\neq j_0}\frac{\alpha_l \rd L_l}{L_l} \wedge \frac{\rd_{\text{ext}}L_{j_0}}{L_{j_0}}
    \Bigg)  -(j_0\leftrightarrow j_1)\,.\notag 
\end{align}
We than have:
\begin{align}\label{sumofsimpl}
    \eta_J&\sim\eta_J+\chi_{j_0}^1 =:\tilde{\eta}_J= \sum_{l\neq j_0} \alpha_l\left(\frac{ \rd_{\text{ext}}L_l}{L_l} \wedge \frac{\rd L_{j_0}}{L_{j_0}}+\frac{\rd L_l}{L_l} \wedge \frac{\rd_{\text{ext}}L_{j_0}}{L_{j_0}}\right)-(j_0\leftrightarrow j_1)\,.
\end{align}
All terms in the sum in eq.~\eqref{sumofsimpl} have only simple poles at each of the hyperplanes $\mathcal{L}_j$, and letting $\alpha_j= a_j\mu$ the $\mu$-dependence of the sum is given by an overall $\mu$. Thus in local coordinates $z^p$ near $\mathcal{L}_p$, we obtain: 
\begin{align*}
    \psi^p \eta_J= \frac{\delta_{p,I}}{\mu a_p} \frac{\mu a_p C_p \,\rd z^p}{z^p} + \mathcal{O}(1) \, , 
\end{align*} 
where $C_p$ is an $\mu$-independent factor that can be determined from eq.~\eqref{sumofsimpl}. Finally, we obtain: 
\begin{align}
\label{finstep1}
   \langle \eta_J |\varphi_I\rangle = -2\pi i \sum_{p=0}^{r+1} \text{Res}_{\mathcal{L}_p} \left(\psi^p \eta_J \right)\sim \mu^0 
\end{align}
is independent of $\mu$. 
\paragraph{The case of generic $n$.} We now sketch how the previous argument can be generalised to arbitrary $n$.
Again, we can start with a residue calculation, as all previous steps can be obtained via the same compactification of the $\rd\!\log$ form $\varphi_J$ as in ref.~\cite{ojm_1200788347}: 
\begin{align}
    \langle \eta_J|\varphi_I\rangle = (-2\pi i)^n \sum_P \text{Res}_{z_n} \left( \text{Res}_{z_{n-1}}\left(\dots \text{Res}_{z_1} \left( \tilde{\psi}^P \eta_J\right)\dots\right)\right)\, . 
\end{align}
Here, $z_i^P$ denote the local coordinates near $\mathcal{L}_P$. $\tilde{\psi}^P$ is a holomorphic primitive with leading order term proportional to $\prod_{i\in P}\frac{1}{\alpha_i}$. We start again by writing out the form $\eta_J$: 
\begin{align}
    \eta_J&= \sum_{l=0}^{n-1}(-1)^{l}  \rd\!\log \left(\frac{L_{j_0}}{L_{j_1}}\right)\wedge \dots \wedge \rdex \rd\! \log \left(\frac{L_{j_{l}}}{L_{j_{l+1}}}\right)\wedge \dots \wedge \rd\! \log \left(\frac{L_{j_{n-1}}}{L_{j_n}}\right) \\
    &\, \, \, \, \, \,  + \sum_{l=0}^{r+1} \alpha_l \rdex\! \log L_l \wedge \varphi_J\, . \notag
\end{align}
Additionally, we define the exact form
\begin{align}
    \chi_{j_{l-1}}^n &= \nabla \left(\varphi_{j_0j_1} \wedge \dots \wedge \varphi_{j_{l-2},j_{l-1}}\wedge \left(\rdex\!\log L_{j_{l-1}} -\rdex\! \log L_{j_l}\right) \wedge \varphi_{j_l,j_{l+1}} \wedge\dots \wedge \varphi_{j_{n-1},n}\right)\, . 
\end{align}
where we recall that $\varphi_{j_{k-1}j_{k}}$ are defined according to \eqref{logbasis}.
Then 
\begin{align}
    \eta_J\sim \eta_J +\sum_{l=1}^{n} \chi_{j_{l-1}}^n =:\tilde{\eta}_J 
\end{align}
has only simple poles in each set of local coordinates $\bs{z}^P$ and scales with $\mu$ overall. The terms with double poles cancel, as in the $n=1$ case. Thus: 
\begin{align}
    \langle\tilde{\eta}_J|\varphi_I\rangle \sim \frac{\mu}{\mu^n} =\frac{1}{\mu^{n-1}} \, . 
\end{align}
\end{proof}

\end{appendix}

\bibliographystyle{JHEP}
\bibliography{refs.bib}

\end{document}